\newcommand\independent{\protect\mathpalette{\protect\independenT}{\perp}}
\def\independenT#1#2{\mathrel{\rlap{$#1#2$}\mkern2mu{#1#2}}}
\newtheorem{lemma}{Lemma}
\newtheorem{proposition}{Proposition}
\newtheorem{definition}{Definition}
\newtheorem{corollary}{Corollary}
\newtheorem{assumption}{Assumption}
\title[]{Encoding and inference on separable effects for sustained treatments}
\author{Ignacio Gonz\'alez-P\'erez$^{1}$, Kerollos Nashat Wanis$^{2}$, Aaron Leor Sarvet$^{3}$, Mats Julius Stensrud$^{1}$} \address{ \small $^1$ Institute of Mathematics, \'Ecole Polytechnique Fédérale de Lausanne, Switzerland\\ $^{2}$ Departments of Breast Surgical Oncology and Health Services Research, and the Institute for Data Science in Oncology, The University of Texas MD Anderson Cancer Center, USA \\
$^{3}$ Department of Biostatistics \& Epidemiology, University of Massachusetts, Amherst, USA
}
\begin{document}
\newcommand{\cor}{\overline{c}=0,\overline{r}=1}

\begin{abstract}
Sustained treatment strategies are common in many domains, particularly in medicine, where many treatment are delivered repeatedly over time. The effects of adherence to a treatment strategy throughout follow-up are often more relevant to decision-makers than effects of treatment assignment or initiation. Here we consider the \emph{separable effect} of sustained use of a time-varying treatment. Despite the potential usefulness of this estimand, the theory of separable effects has yet to be extended to settings with sustained treatment strategies. To derive our results, we use an unconventional encoding of time-varying treatment strategies. This allows us to obtain concise formulations of identifying assumptions with better practical properties; for example, they admit frugal graphical representations and formulations of identifying functionals. These functionals are used to motivate doubly robust semiparametrically efficient estimators. The results are applied to the Systolic Blood Pressure Intervention Trial (SPRINT), where we estimate a separable effect of modified blood pressure treatments on the risk of acute kidney injury.
\end{abstract}

\maketitle

\textit{Keywords:} separable effects, sustained effects, sustained treatment encoding, adherence.

\section{Introduction}\label{sec:Intro} Treatments are often administered over time. Even in conventional randomized trials, where treatment strategies are assigned at baseline, the actual administration of treatment typically occurs sequentially. For example, the Systolic Blood Pressure Intervention Trial (SPRINT) \cite{sprint2015randomized} randomly assigned individuals to receive intensive or standard blood pressure therapy. While the assignment occurred at baseline, the therapy was meant to be taken daily in each arm. In trials like the SPRINT, where individuals' long-term behaviours may deviate from their assigned protocols, it is debated whether the effect of assignment, usually termed the ``intention-to-treat'' effect, is the most relevant quantity for decision-making \cite{hernan2017per,murray2021causal,hernan2012beyond,murray2016adherence}. An alternative is a particular sustained effect: an effect defined by interventions where individuals continuously take a single treatment throughout follow-up. 

We will consider effects of sustained treatment use in settings with competing events. To be concrete, in the SPRINT example, consider the objective of estimating the effect of intense blood pressure therapy on the risk of acute kidney injury (AKI), one of the trial’s secondary outcomes. Because AKI only occurs in individuals that are alive, death acts as a competing event. In this trial, the investigators found that the cumulative incidence of AKI is higher in the intensive blood pressure therapy arm compared to the control arm. Interpreting this finding as an ``effect'' of intensive blood pressure therapy on AKI etiology requires reasoning about causal mechanisms; the problem is that the drug can effect AKI in ways that may or may not be of interest. On the one hand, intensive blood pressure therapy may prevent AKI by binding directly to receptors in the kidneys. On the other hand, intensive blood pressure therapy may cause AKI by preventing  fatal events; when an individual survives such events, they might go on to experience an AKI that would otherwise be ``prevented'' by death. The confluence of these mechanisms complicates the interpretation of an overall effect of blood pressure therapy on AKI. 

Investigators have confronted these challenges by considering different estimands. These estimands, however, have their own interpretive complications: for example,  the controlled direct effect considers unrealistic interventions on the the competing event, here death, \cite{young2020causal} or the survivor average causal effect \cite{robins1986new} condition on an unmeasurable sub-population defined by cross-world counterfactuals. Here we will consider separable effects \cite{young2020causal, stensrud2021generalized, stensrud2022separable, stensrud2023conditional}. The motivation for these effects is to formalize notions of causal mechanisms that are practically relevant. For example,  investigators might be interested in the notion of ``effects not through the competing event''. Separable effects formalize this notion by defining modified versions of the original treatment. These modified treatments might be assumed to affect outcomes through specific pathways, which we will represent in causal graphs; in this case, they represent hypothetical refined medications that aim to isolate certain beneficial effects of the treatment of interest. By requiring an explicit, possibly unrealized modification of the treatment, separable effects help clarify the causal question under investigation and may also stimulate new ideas on how treatments could be improved in practice \cite{stensrud2021generalized, stensrud2022separable, stensrud2023conditional,robins2010alternative, robins2020interventionist}. Separable effects could, at least in principle, be directly computed in future clinical trials. Moreover, in observational data they usually rest on weaker assumptions for identification than previously proposed estimands in these settings \cite{young2020causal,robins1986new}.

So far separable effects have been considered in various settings with a single treatment assignment and failure time outcomes \cite{young2020causal, stensrud2021generalized, stensrud2022separable, stensrud2023conditional,robins2010alternative, robins2020interventionist,wanis2024separable, shpitser2017modeling, didelez2019defining, di2024longitudinal, maltzahn2024separable}. 
Alternatively, a sustained separable effect would enforce adherence to modified treatments \textit{continuously} over time. As suggested in \cite[Sec. 3.5(a)]{robins2020interventionist}, extending existing results on separable effects requires a causal model that incorporates treatment decomposition at each time point. The complexity of directly applying this causal model has challenged theoretical development for these important estimands \cite{robins2020interventionist}. 

We overcome these challenges by reformulating sustained separable effects via an alternative encoding, which is similar to an encoding used in the literature on the sustained effects of dynamic strategies \cite{robins1993analytic,hernan2000marginal,hernan2006comparison,hernan2018estimate}. Therein, the actual treatments an individual takes are entirely summarized by a variable denoting adherence to a protocol of interest over time. This encoding provides a parsimonious representation of assumptions and identifying functionals, which is particularly useful in the separable effects setting. The resulting expressions closely resemble those from the simpler point-treatment setting \cite{stensrud2021generalized,stensrud2022separable}, which reduce to special cases of our formalism. The encodings are discussed in more detail in Section \ref{sec:AdherenceEncodings}. Then, in Section \ref{sec:ObservedDataStructure} we explicitly introduce the observed data structure for a longitudinal setting with competing events. In Section \ref{sec:TreatmentDecomposition} we define the separable effects on the event of interest as contrasts of counterfactual hazards, and discuss their causal interpretation. Sufficient identification conditions are studied in Section \ref{sec:Identification}. This is followed by different estimation techniques in Section \ref{sec:Estimation}. The properties of these estimators are analysed in Section \ref{sec:SimulationStudy}, where we  perform a simulation study to assess coverage of bootstrap-based confidence intervals and robustness properties. In Section \ref{sec:SprintExample}, we apply our methods to the SPRINT, providing for the first time a sustained separable effects analysis of the results of this RCT.

\section{A comparison of different treatment encodings}\label{sec:AdherenceEncodings}

Sustained treatment strategies are routinely encoded by random variables that specify the treatment taken by an individual at each point in time. Foundational work on time-varying treatments \cite{robins1986new} introduced such encodings, which are now standard in causal inference textbooks \cite{hernan2018causal,VanDerLaan2011targeted}, and widely used in clinical trials \cite{jindani2023four,shehadeh2023dapagliflozin}, mediation analyses \cite{vanderweele2017mediation,zheng2017longitudinal,diaz2023efficient,lin2017mediation}, observational studies \cite{robins1993analytic,hernan2006comparison}, and the dynamic treatment regime literature \cite{tsiatis2019dynamic,murphy2003optimal}.

In this work, we highlight an alternative encoding of  treatment strategies that simplifies the formulation of estimands and identification results.
To be explicit, let $Z$ denote the treatment initiated at baseline. For example, in the SPRINT, $Z = 1$ denotes that an intensive blood pressure treatment strategy was initiated at the beginning of the study and $Z = 0$ denotes that standard  strategy was initiated. We define two  encodings as follows:

\begin{itemize}
    \item \textbf{Treatment-centered encoding:} Let $A_k \in \text{supp}(Z)$ denote the treatment actually taken by the individual at time $k$. This is a natural generalization of standard encodings used in static  treatment strategies. For example, an individual with $A_k = 1$ took intensive blood pressure control at time $k$. 
    \item \textbf{Strategy-centered encoding:} Let $R_k \in \{0, 1\}$ indicate whether an individual took the same treatment at time $k$ as the one given at baseline (that is, whether $A_k = Z$). For example, an individual with $(Z=1, R_k = 1)$ initiated intensive blood pressure therapy and subsequently  continued this treatment at time $k$.
\end{itemize}

Whereas the treatment-centered encoding is most commonly used in methodological work, the relevance of the strategy-centered encoding has been recognized in settings with complex dynamic strategies \cite{wanis2020adjusting,cain2010start} and applied to the  analysis of clinical trials \cite{murray2016adherence, lodi2021per}. The strategy-centered encoding is related to the ``clone-censor-weight'' estimation strategy popularized in translational causal inference literature for the  analysis of randomized trials \cite{gaber2024demystifying} and their ``emulations'' using observational data \cite{hernan2006comparison,cain2010start}. Therein, the strategy-centered encoding is derived and used to simplify the procedures applied to data \cite[App. 1]{cain2010start}. \cite{wanis2024separable} leveraged this encoding to study the effects of initiating a modified treatment. Here, we extend this to a sustained separable effect.

To make the relation between these encodings precise, consider the random vector $V \equiv (A_Y, A_D, D, Y)$, where $A_Y \equiv (A_{Y,1},\ldots,A_{Y,K+1})$, $A_D \equiv (A_{D,1},\ldots,A_{D,K+1})$, $D \equiv (D_1,\ldots,D_{K+1})$, and $Y \equiv (Y_1,\ldots,Y_{K+1})$. In our setting, $k \in \{1, \ldots, K+1\}$ indexes equally spaced discrete time intervals. The indicator $D_k$ denotes a post-treatment competing event occurring in interval $k$, and $Y_k$ is the indicator of the primary event of interest.  In subsequent sections, $A_{Y}$ and $A_{D}$ will represent components of the original treatment variable $A$, which are deterministically related in the observed data. To simplify exposition in this section, we take them to be two separate treatment vectors.

We will use the treatment-centered encoding to represent variables within a non-parametric structural equation model (NPSEM) \cite{robins2010alternative, richardson2013single, pearl}, e.g., a Finest Fully Randomized Causally Interpretable Structured Tree Graphs (FFRCISTG) model \cite{robins1986new,richardson2013single}. Based on this encoding, we define an alternative random vector $V^* \equiv (Z_Y, Z_D, R, D, Y)$. Here, $Z_Y \equiv A_{Y,1}$, $Z_D \equiv A_{D,1}$ represent initiated treatment and $R_k \equiv I(A_{Y,k} = Z_Y, A_{D,k} = Z_D)$ represents adherence to the initiated treatment at time $k$, $R\equiv (R_1,\ldots,R_{K+1})$.  Then, $V^*$ is a \textit{strategy-centered encoding} that allows us to re-express adherence in terms of agreement with the initiated treatment. Specifically, perfect adherence to the initiated treatment is represented as $R_k=1$ for all $k$. 

We use directed acyclic graphs (DAG) to represent a statistical model induced by an FFRCISTG \cite{robins1986new,richardson2013single}. An example DAG for $V$, which we call a \textit{treatment-centered} DAG, is given in Figure \ref{fig:ExampleTwoEncogingDAGs} (a). Due to the deterministic relations between $V$ and $V^*$, a statistical model for $V$ will imply a statistical model for $V^*$ that can itself be represented by a DAG. This \textit{strategy-centered} DAG is obtained by a simple graphical algorithm: for each node $X$ that is a child of $A_{Y,k}$ in the treatment-centered DAG $\mathcal{G}$, add edges $Z_Y\to X$ and $R_k\to X$ in a strategy-centered DAG $\mathcal{G}^*$. For each node $W$ that is a parent of $A_{Y,k}$ in $\mathcal{G}$, add the edge $W\to R_k$ in $\mathcal{G}^*$.  The same procedure follows for $A_{D,k}$, $Z_D$ and $R_k$. Lastly, all other edges in $\mathcal{G}$ among the variables $V \setminus \{A_Y,A_D\}$ are retained in $\mathcal{G}^*$. The strategy-centered DAG in Figure \ref{fig:ExampleTwoEncogingDAGs} (b) is the  counterpart of the treatment-centered DAG in 
Figure \ref{fig:ExampleTwoEncogingDAGs} (a). The general algorithm is detailed in Appendix \ref{sec:AppAlgoDrawDags}. As we elaborate in Section \ref{sec:Identification} and Appendix \ref{sec:AppEquivalenceIdentificationConditions}, strategy-centered DAGs permit more-parsimonious depictions of the assumptions used to identify separable effects.

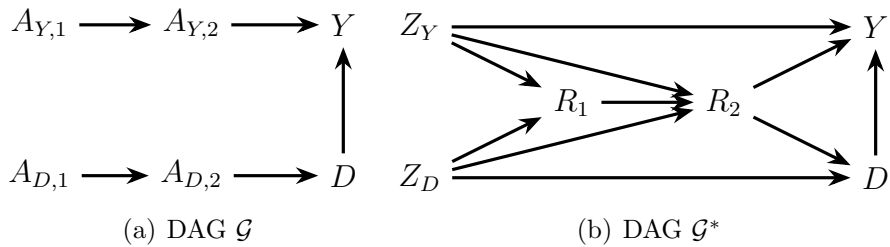
\begin{figure}
    \centering
    \subfigure[DAG $\mathcal{G}$]{{\begin{tikzpicture}
\begin{scope}[every node/.style={thick,draw=none}]
    \node (AY1) at (-1,1) {$A_{Y,1}$};
    \node (AD1) at (-1,-1) {$A_{D,1}$};
    \node (AY2) at (1,1) {$A_{Y,2}$};
    \node (AD2) at (1,-1) {$A_{D,2}$};
    \node (Y) at (3,1) {$Y$};
    \node (D) at (3,-1) {$D$};
\end{scope}

\begin{scope}[>={Stealth[black]},
              every node/.style={fill=white,circle},
              every edge/.style={draw=black,very thick}]
	\path [->] (AY1) edge (AY2);
    \path [->] (AD1) edge (AD2);
    \path [->] (AY2) edge (Y);
    \path [->] (AD2) edge (D);
    \path [->] (D) edge (Y);

\end{scope}
\end{tikzpicture}
}}    
\subfigure[DAG $\mathcal{G}^*$]{{\begin{tikzpicture}
\begin{scope}[every node/.style={thick,draw=none}]
    \node (ZY) at (-1,1) {$Z_Y$};
    \node (ZD) at (-1,-1) {$Z_D$};
    \node (R1) at (1,0) {$R_1$};
    \node (R2) at (3,0) {$R_2$};
    \node (Y) at (5,1) {$Y$};
    \node (D) at (5,-1) {$D$};
\end{scope}

\begin{scope}[>={Stealth[black]},
              every node/.style={fill=white,circle},
              every edge/.style={draw=black,very thick}]
    \path [->] (ZY) edge (R1);
    \path [->] (ZY) edge (R2);
    \path [->] (ZY) edge (Y);
    \path [->] (ZD) edge (R1);
    \path [->] (ZD) edge (R2);
    \path [->] (R1) edge (R2);
    \path [->] (ZD) edge (D);
    \path [->] (R2) edge (Y);
    \path [->] (R2) edge (D);
    \path [->] (D) edge (Y);

\end{scope}
\end{tikzpicture}
}}    
    
\caption{\small Example of a DAG $\mathcal{G}$ under the primitive (treatment-centered) encoding (a), and its associated DAG $\mathcal{G}^*$ under the strategy-centered encoding (b), following Algorithm \ref{alg:DagDrawing} (see Appendix \ref{sec:AppAlgoDrawDags}). }
\label{fig:ExampleTwoEncogingDAGs}
\end{figure}

\section{Observed data structure and notation}
\label{sec:ObservedDataStructure}
Consider a study with $n$ i.i.d.\ individuals who are taking a binary treatment sequentially at times $k\in\{1,\ldots, K+1\}$. Let the treatment taken at the first time point be $Z\in \{0,1\}$. 
For each individual, a vector $L_{0}$  of  pre-randomization (baseline) covariates is measured.  Let $D_k,Y_k$ be defined as in Section \ref{sec:AdherenceEncodings}, and denote by $R_k$ the strategy-centered indicator of adherence to the initiated treatment during the time interval $k$.  Let $C_k$ denote the indicator of loss of follow-up at time $k$. Furthermore, let $L_k$ denote a vector of post-randomization covariates at time point $k$. We will use overlines to denote the history of a random variable through $k$, e.g., $\overline{D}_{k} \equiv (D_0,\dots,D_{k} $), and underlines to denote its future relative to $k+1$, e.g., $\underline{D}_{k+1} \equiv (D_{k+1},\dots,D_{K+1}$). 

In Sections \ref{sec:Identification} to \ref{sec:Estimation} we will assume that all the supports of the time-varying covariates $L_k$ are countable (possibly infinite) sets, but extensions to the absolutely continuous case are straightforward. We assume that at each time $D_k$ is recorded before $Y_k$. This is motivated by our running example on the SPRINT, where $Y_k$ represents AKI, and the competing event $D_k$ indicates death from any cause. If the individual is event-free at time $k$, meaning that $Y_k=D_k=0$, their post-treatment covariates $L_k$ are recorded. In our running example of AKI, these would include blood-pressure measurements at every time-point.  If an individual experiences a competing event at time $k$ without a history of the event of interest, then this event can no longer occur: $D_k=1,\ Y_{k-1}=0$ implies $Y_j=0$ for all $j\geq k$. In addition, we will consider $Y, D,C$ to be monotone increasing, in the sense that $Y_k=1$ implies $Y_j=1$ for all $j>k$, and similarly for $D,C$. We let $Y_{K+1}$ be the last recorded outcome, and we suppose that all individuals are event-free and uncensored  prior to randomization (which again is reasonable in the SPRINT example): $D_0\equiv Y_0\equiv C_0\equiv 0$. Therefore, our temporal convention is $(L_{0},Z, \dots,C_k,R_k,D_{k},Y_k,L_{k},\dots,D_{K+1},Y_{K+1})$.

\section{Treatment decomposition and separable effects}\label{sec:TreatmentDecomposition}
Consider a four-arm trial where each arm initiates patients to a possible combination of the binary treatments  $A_{Y,1},A_{D,1}\in\{0,1\}$. Following \cite{robins2010alternative} in a point treatment setting, suppose that $A_{Y,k},A_{D,k}$ constitute a decomposition of a binary treatment $A_k$, such that taking $A_k=z$ in the two-arm trial is equivalent to taking $A_{Y,k}=A_{D,k}=z$ in the four-arm trial. In \cite{stensrud2021generalized}, these assumptions are referred to as a \textit{generalized decomposition assumption}, here extended to a longitudinal setting. 
This equivalence also follows from the more general \textit{modified treatment assumption} in \cite{stensrud2023conditional}. Take our running example of the SPRINT. Its intensive treatment arm targets a systolic blood pressure of 120 mm Hg, and 140 mm Hg for the standard therapy. As explained in  \cite{sprint2015randomized} ``Medications for participants in the intensive-treatment group were adjusted on a monthly basis to target a systolic blood pressure of less than 120 mm Hg. For participants in the standard treatment group, [...] the dose was reduced if systolic blood pressure was less than 130 mm Hg on a single visit or less than 135 mm Hg on two consecutive visits''. While the medication an individual is told to take might vary over time, the treatment protocol they are assigned at baseline remains fixed throughout the trial. This protocol prescribes a treatment at each time point. This is an example of what \cite{young2019inverse} name deterministic time-varying strategies, which will be the focus of our study.

In the four-arm trial, we will hereby use the strategy-centered encoding where $(Z_Y,Z_D)$ represent the  initiated treatment components at baseline and  $R_k$ indicates whether an individual adhered to the initiated treatments at time $k$, that is, whether $(A_{Y,k},A_{D,k}) = (Z_Y,Z_D)$, as introduced in Section \ref{sec:AdherenceEncodings}. Thus, instead of explicitly writing  decomposed treatment strategies over time, the time-varying indicator $R_k$ is a simple binary variable. Had we chosen a treatment-centered encoding, we would have also decomposed the $A_k$ at every time $k$ (as per Section \ref{sec:AdherenceEncodings} and Appendix \ref{sec:EquivalenceAdherenceEncodings}). In the SPRINT example, the blood pressure treatment $Z$ could be decomposed into two components: $Z_Y$, which is supposed to have an effect on AKI by binding directly to receptors in the kidneys and relaxing the efferent arterioles, a known effect of blood pressure drugs such as angiotesin II receptor blockers; and $Z_D$, including all remaining components of the treatment that, e.g., affects systemic blood pressure. Inference on the effects of such a modified treatment could support the development of an improved drug that avoids direct effects on the kidneys ($Z_D = 1, Z_Y = 0$).

Our main goal is to study the effect of a sustained treatment strategy on the event of interest. Let $Y_{k+1}^{z_Y,z_D,\cor}$  an individual's indicator of the event of interest at time $k+1$ when, possibly contrary to fact, $Z_Y$ is set to $z_Y$, $Z_D$ to $z_D$, the individual is always followed-up, and perfectly adheres to the (possibly modified) treatment. By definition, $Y_{k+1}^{z_Y,z_D,\cor}$ is equivalent to $Y_{k+1}^{\overline{A}_{Y,k+1}= z_Y,\overline{A}_{D,k+1}=z_D,\overline{c}=0}$ under the treatment-centered encoding.  
We consider the following separable effects: 
\begin{definition}[Sustained $Z_Y$ separable effect]\label{def:AYSeparableEffect}
    The\textit{ sustained $Z_Y$ separable effect} evaluated at $z_D\in\{0,1\}$ is the contrast
    \begin{equation}\label{eq:AYSeparableEffect}
        \mathbb{P}(Y_{K+1}^{z_Y=1,z_D,\cor}=1) \text{ versus } \mathbb{P}(Y_{K+1}^{z_Y=0,z_D,\cor}=1).
    \end{equation}
\end{definition}

\begin{definition}[Sustained $Z_D$ separable effect ]\label{def:ADSeparableEffect}
    The \textit{ sustained $Z_D$ separable effect}  evaluated at $z_Y\in\{0,1\}$ is the contrast
    \begin{equation}\label{eq:ADSeparableEffect}
        \mathbb{P}(Y_{K+1}^{z_Y,z_D=1,\cor}=1) \text{ versus } \mathbb{P}(Y_{K+1}^{z_Y,z_D=0,\cor}=1).
    \end{equation}
\end{definition}
The contrast in Equation \eqref{eq:AYSeparableEffect} quantifies the causal effect of the $Z_Y$ component on the probability of observing the event of interest under an intervention that fixes $Z_Y$ to $z_Y$, and imposes no loss of follow-up and perfect adherence to the initiated treatment. The contrast in Equation \eqref{eq:ADSeparableEffect} can be interpreted similarly. 
 
So far we have made no assumptions about how $Z_Y$ and $Z_D$ affect the competing event and the event of interest. Under certain conditional independence assumptions, however, we can give specific interpretations to the separable effects. We will use causal DAGs to describe such assumptions, representing an underlying FFRCISTG \cite{robins1986new,richardson2013single}. This  counterfactual causal model makes strictly fewer assumptions than the  non-parametric structural equation model with independent errors (NPSEM-IE) \cite{ robins2010alternative, richardson2013single, pearl}. The absence of an arrow in a causal DAG representing a FFRCISTG model can either encode (i) the assumption that an individual level causal effect is absent for every individual in the study population or (ii) the weaker assumption that a population level causal effect is absent \cite{robins2020interventionist, richardson2013single, dawid2015statistical, dawid2000causal}.  
To represent the causal relations and assumptions of the treatment decomposition in the observed data, we will use an \textit{extended} causal DAG \cite{robins2010alternative,robins2020interventionist}. This graph includes both the modified treatment nodes, $Z_Y$ and $Z_D$,  the original treatment node $Z$, and bold edges  $Z \boldsymbol{\rightarrow} Z_Y$  and $Z \boldsymbol{\rightarrow} Z_D$ representing the deterministic relations between the full treatment $Z$ and these two components in the observed data. 

Earlier in this section we motivated $Z_Y, Z_D$ as two treatment components that exert effects through different causal paths.\footnotemark[1] This can be formalized by extending the notion of partial isolation \cite{stensrud2021generalized} to our context: 
\footnotetext[1]{In a DAG over nodes $V$  we say that a path from $X\subset V$  to  $B\in V$  is causal if it is a directed path which does not intersect $X$ after the first node. This notion also applies to the extended DAG, where we discard the deterministic treatment edges. We say that a path intersects  $U\subset V$ when it contains one of the nodes in $U$, not being the terminal nodes of the path.}
\begin{definition}[$Z_Y$ partial isolation]\label{def:AYPI}
    All causal paths from $Z_Y$ to  any node in $\overline{D}_{K+1}$ in the extended DAG intersect $\overline{Y}_{K}$, $\overline{R}_{K+1}$ or $\overline{C}_{K+1}$.
\end{definition}
\begin{definition}[$Z_D$ partial isolation]\label{def:ADPI}
    All causal paths from $Z_D$ to  any node in $\overline{Y}_{K+1}$ in the extended DAG intersect $\overline{D}_{K+1}$, $\overline{R}_{K+1}$, or $\overline{C}_{K+1}$.
\end{definition}
If both $Z_Y$ and $Z_D$ partial isolation hold, we say that \textit{full isolation} holds. Partial isolation conditions are not required for identification of separable effects, see Section \ref{sec:Identification}. These definitions differ from those in \cite{stensrud2021generalized}. For example, the conditions in \cite{stensrud2021generalized} are violated by a path from $Z_D$ to $Y_{k}$ not intersected by $\overline{D}_{k}$; such a path does not violate the condition of Definition \ref{def:ADPI}, so long as it intersected $\overline{R}_{k}$ or $\overline{C}_{k}$.  The partial isolation conditions presented here are different from those in \cite{stensrud2021generalized} because we consider an intervention that additionally fixes censoring and adherence, thereby interrupting causal paths involving those variables.\footnotemark[2] Nevertheless, the conditions here and those in \cite{stensrud2021generalized} give interpretations to separable effects as direct and indirect effects. Under $Z_Y$ partial isolation, the $Z_Y$ separable effect is interpretable as a direct effect, because it would only involve pathways outside of the competing event. Likewise, under $Z_D$ partial isolation the $Z_D$ separable effect is interpretable as an indirect effect,  because it would only involve pathways mediated by the competing event.

\footnotetext[2]{Any directed path in the extended causal DAG intersecting $\overline{R}_{K+1}$ or $\overline{C}_{K+1}$ would not exist in the SWIG associated with the intervention $\{\overline{R}_{K+1}=1,\overline{C}_{K+1}=0\}$ due to the node splitting.}

In the SPRINT, if we accept that  $Z_Y$  has no effect on blood pressure except its possible effect on AKI, such that the only direct effect on blood pressure and death from any cause comes from  $Z_D$, then $Z_Y$ partial isolation would hold. However, it is unlikely that $Z_D$ partial isolation would also hold in this example, as a reduction in blood pressure can increase the AKI risk. This example also illustrates how the partial isolation conditions (Definitions \ref{def:AYPI}-\ref{def:ADPI}) are usually accepted or rejected based on subject matter knowledge.

\section{Identification of separable effects}
\label{sec:Identification}
To study the sustained separable effects introduced in Section \ref{sec:TreatmentDecomposition}, we need to identify $\mathbb{P}(Y_{K+1}^{z_Y,z_D,\cor}=1)$. Regardless of any isolation conditions, if we had access to a four-arm trial in which patients randomly initiated some level of $Z_Y$ and $Z_D$, there was no loss of follow-up and each individual adhered perfectly to their initial treatment, we would be able to identify the target quantity by $\sum_{k=0}^K\mathbb{P}(Y_{k+1}=1\mid Z_Y=z_Y, Z_D=z_D,D_{k+1}=Y_k=0)$ \cite{hernan2018causal}: the required conditions (conditional exchangeability, positivity and consistency) would hold by design. Thus, this is a single-world estimand that, in principle, can be identified from a perfectly designed randomized experiment. When only data from the two-arm trial are available (without arms where $Z_Y\neq Z_D$), identification of $\mathbb{P}(Y_{K+1}^{z_Y,z_D,\cor}=1)$ is not guaranteed when $z_Y\neq z_D$, even under no loss of follow-up and perfect adherence. Here we will give further assumptions that are sufficient for identification in this setting, which resembles the observed data. First we give conventional assumptions of conditional exchangeability, consistency and positivity. 

\begin{assumption}[Conditional exchangeability]\label{ass:Exchangeability}
For all $z$ in $\{0,1\}$,
\begin{subequations}\label{eq:Exchangeability}
    \begin{equation}\label{eq:ExchangeabilityPast}
        \overline{Y}_{K+1}^{z,\cor},\ \overline{D}_{K+1}^{z,\cor},\ \overline{L}_{K}^{z,\cor}\independent Z\ |\ L_0,
    \end{equation}
    and for each $k\in\{0,\ldots,K\}$
    \begin{equation}\label{eq:ExchangeabilityFuture}
        \begin{split}
            \underline{Y}_{k+1}^{z,\cor},\ \underline{D}_{k+1}^{z,\cor},\ \underline{L}_{k+1}^{z,\cor}\independent C_{k+1},\ R_{k+1}\ |&\{ Y_k=D_k=0,\\& \overline{L}_k,\ Z=z,\ \overline{C}_k=0,\ \overline{R}_{k}=1.\}
        \end{split}
    \end{equation}
\end{subequations}
\end{assumption}

\begin{assumption}[Consistency]\label{ass:Consistency}
    For each $k$, if $Z=z$, ${C}_k=0,$ and $ \overline{R}_{k}=1,$ then
    \begin{equation}\label{eq:Consistency}
        \overline{Y}_{k}^{z,\cor}=\overline{Y}_{k},\ \overline{D}_{k}^{z,\cor}=\overline{D}_{k} \text{ and } \overline{L}_{k}^{z,\cor}=\overline{L}_{k}.
    \end{equation}
\end{assumption}
This consistency condition is conventional in the causal inference literature. The conditional exchangeability conditions \eqref{eq:Exchangeability} could be expanded into $6(K+1)$ sequential conditional independencies which are slightly weaker than the ones we present here.
\begin{assumption}[Positivity]\label{ass:Positivity}
For $z$ in $\{0,1\}$ and $k\in\{0,\ldots,K\}$

\begin{subequations}\label{eq:Positivity}
    \begin{align}
    &\mathbb{P}(L_0=l_0)>0\Rightarrow\mathbb{P}(Z=z|L_0=l_0)>0,\label{eq:PositivityL0}\\
    &\mathbb{P}(\overline{L}_{k}=\overline{l}_{k}, \overline{R}_{k+1}=1,Y_k=D_{k+1}=C_{k+1}=0)>0\Rightarrow\label{eq:PositivityA}\\\nonumber&\qquad \mathbb{P}(Z=z|\overline{L}_{k}=\overline{l}_{k}, \overline{R}_{k+1}=1,Y_k=D_{k+1}=C_{k+1}=0)>0,\\
    &\mathbb{P}(\overline{L}_{k}=\overline{l}_{k}, \overline{R}_{k}=1, Z=z,Y_k=D_{k}=C_{k}=0)>0\Rightarrow\label{eq:PositivityCR}\\\nonumber&\qquad \mathbb{P}(R_{k+1}=1,C_{k+1}=0|\overline{L}_{k}=\overline{l}_{k}, \overline{R}_{k}=1, Z=z,Y_k=D_{k}=C_{k}=0)>0.
\end{align}
\end{subequations}

\end{assumption}
Assumption \eqref{eq:PositivityL0} is the standard positivity conditions under interventions on $Z$. Assumption \eqref{eq:PositivityA} requires that, at any time point, for any possible history of the time varying covariates under survival, perfect adherence and no loss of follow-up, individuals which initiated both treatments  can be found. Condition \eqref{eq:PositivityCR} states that for any treatment and covariate history, under survival, no loss of follow-up and adherence up to a certain time point, individuals can be found which adhere to treatment and are followed-up to the next time.

\begin{assumption}[Dismissible component conditions]\label{ass:DCC}
Let the time varying covariates be expressed as two components: $L_k=(L_{D,k},L_{Y,k})$. Furthermore, let $G$ refer to  the four-arm trial where $Z_Y$ and $Z_D$ are randomly assigned, but the causal structure between variables is otherwise identical to the observed data. We use the notation $X(G)$ to indicate  a variable $X$ in this trial. Then, for all $k \in \{0,\dots, K\}$
\begin{align}\label{eq:DCC}
     Y_{k+1}^{\cor}(G) \independent Z_D(G) \mid& Z_Y(G), D_{k+1}^{\cor} (G)=Y_k^{\cor}(G)=0, \overline{L}_k^{\cor}(G), \\
      D_{k+1}^{\cor}(G) \independent Z_Y(G) \mid& Z_D(G), D_{k}^{\cor} (G)=Y_k^{\cor}(G)=0, \overline{L}_k^{\cor}(G),\nonumber\\
     L_{Y,k}^{\cor}(G) \independent Z_D(G) \mid& \{Z_Y(G), D_{k}^{\cor} (G)=Y_k^{\cor}(G)=0, L_{D,k}^{\cor}(G),\nonumber\\ & \overline{L}_{k-1}^{\cor}(G)\}, \nonumber\\
     L_{D,k}^{\cor}(G) \independent Z_Y(G) \mid& Z_D(G), D_{k}^{\cor} (G)=Y_k^{\cor}(G)=0,  \overline{L}_{k-1}^{\cor}(G),  \nonumber
\end{align} 
 \end{assumption}

The dismissible component conditions are closely related to the notions of Partial isolation (Definitions \ref{def:AYPI}-\ref{def:ADPI}):
\begin{restatable}{proposition}{propDccImpliesAYPI}\label{prop:DccImpliesAYPI}
    If the dismissible component conditions (Assumption \ref{ass:DCC}) hold under a partition where $\overline{L}_{Y,K}=\emptyset$, then $Z_Y$ partial isolation holds.
\end{restatable}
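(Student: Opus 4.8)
# Proof proposal for Proposition \ref{prop:DccImpliesAYPI}

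The plan is to argue by contraposition: I will assume that $Z_Y$ partial isolation fails, exhibit a causal path from $Z_Y$ to some node $D_j \in \overline{D}_{K+1}$ that avoids $\overline{Y}_K$, $\overline{R}_{K+1}$ and $\overline{C}_{K+1}$, and then show that such a path forces one of the four dismissible component conditions in Assumption \ref{ass:DCC} to be violated in the four-arm trial $G$ (under the partition with $\overline{L}_{Y,K}=\emptyset$). The key observation making the partition hypothesis do real work is that, when $\overline{L}_{Y,K}=\emptyset$, the time-varying covariates satisfy $L_k = L_{D,k}$, so the third dismissible component condition (the one about $L_{Y,k}$) is vacuous, and the conditioning sets in the remaining three conditions contain the full covariate history $\overline{L}_k^{\cor}(G)$ or $\overline{L}_{k-1}^{\cor}(G)$.

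First I would set up the translation between the extended DAG for the observed data and the DAG for the trial $G$: in $G$ the nodes $Z_Y$ and $Z_D$ are root nodes (randomized), so d-separation statements in $G$'s DAG correspond exactly to the conditional independencies among counterfactuals in Assumption \ref{ass:DCC} (this is the standard FFRCISTG/SWIG correspondence already invoked in the paper, cf. the interventionist model of \cite{robins2020interventionist, richardson2013single}). A directed causal path from $Z_Y$ to a $\overline{D}$-node in the extended DAG, intersecting none of $\overline{Y}_K,\overline{R}_{K+1},\overline{C}_{K+1}$, is still a directed path in $G$'s DAG after the node-splitting/intervention, because fixing $\overline{R}_{K+1}=1$ and $\overline{C}_{K+1}=0$ only removes paths through those nodes. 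Let $D_j$ be the earliest competing-event node reached by such a path. By monotonicity and the temporal ordering $(\dots,R_k,D_k,Y_k,L_k,\dots)$, all intermediate nodes on this path lie among $\{Z_D\}\cup\overline{L}_{j-1}$ (they cannot be $\overline{Y}$-, $\overline{R}$-, or $\overline{C}$-nodes by assumption, cannot be later $D$-nodes by minimality of $j$, and $L_{Y}$-nodes are empty by the partition).

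Next I would do a case split on whether this path passes through $Z_D$. If it does not, then we have a directed path $Z_Y \rightarrow \cdots \rightarrow D_j$ whose internal nodes all lie in $\overline{L}_{D,j-1}$; conditioning on $Z_D(G), D_j^{\cor}(G)=Y_{j-1}^{\cor}(G)=0, \overline{L}_{j-1}^{\cor}(G)$ blocks this path only if one of the $\overline{L}_{D,j-1}$ nodes is a non-collider on it — but it is a directed path, so every internal node is a non-collider and is conditioned on; hence the path is blocked. The subtlety is that I actually want a path that is \emph{not} blocked, so I should instead take the path to a covariate node: since $Z_Y$ has a causal effect on $D_j$ not mediated by $\overline{Y},\overline{R},\overline{C}$, there is a first node $L_{D,i}$ (with $i \le j-1$) or $D_j$ itself that $Z_Y$ affects directly, i.e. with all internal nodes strictly earlier in time; for the earliest such node $X$, the path $Z_Y \rightarrow \cdots \rightarrow X$ has all internal nodes among $\overline{L}_{D,\,\mathrm{time}(X)-1}$ and is \emph{d-connected} given $Z_D, D^{\cor}=Y^{\cor}=0, \overline{L}_{\,\mathrm{time}(X)-1}^{\cor}$ — because the only conditioned covariates strictly precede $X$ and lie off this minimal path, while $X$'s own ancestors on the path are not conditioned on. This contradicts either the second dismissible component condition (if $X=D_j$) or the fourth (if $X = L_{D,i}$). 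If the path does pass through $Z_D$, it is not a causal path from $Z_Y$ in the sense of the footnote (a causal path from $Z_Y$ must not revisit... actually $Z_D \ne Z_Y$, so this needs care): here I would note that the subpath from the last occurrence of $Z_D$ onward is a causal path from $Z_D$, and argue that the presence of $Z_Y$ as a non-root ancestor on a directed path to $D_j$ is impossible since $Z_Y$ and $Z_D$ are both roots in $G$ and non-adjacent; in the extended observed-data DAG the only parents of $Z_Y,Z_D$ are the deterministic edges from $Z$, which the footnote instructs us to discard. So this case does not arise.

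The main obstacle I anticipate is the bookkeeping around \emph{which} d-connecting path to produce and verifying it is genuinely open given the (large) conditioning sets in Assumption \ref{ass:DCC} — in particular, ruling out that a collider on the chosen path gets activated by conditioning on a descendant in $\overline{L}_{D,\cdot}^{\cor}(G)$ or by conditioning on the event $D^{\cor}=Y^{\cor}=0$. Handling this cleanly will require choosing the \emph{time-minimal} target node $X$ affected by $Z_Y$ and a shortest path to it, so that no internal node is conditioned on and no conditioned node is a descendant of an internal collider; I expect the formal argument to lean on the SWIG d-separation calculus of \cite{richardson2013single} together with the temporal ordering and monotonicity constraints from Section \ref{sec:ObservedDataStructure} to control the possible positions of colliders. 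Once the open path is exhibited, the contradiction with Assumption \ref{ass:DCC} is immediate, completing the contrapositive.
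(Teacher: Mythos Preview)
Your strategy---contraposition, take a directed path from $Z_Y$ to the earliest $D_{k+1}$ that avoids $\overline{Y}_k,\overline{R}_{k+1},\overline{C}_{k+1}$, and then show that either the second or the fourth dismissible component condition fails depending on whether a covariate in $\overline{L}_{D}$ appears as an intermediate node---is exactly the paper's proof. The paper's argument, however, is a few lines long, because most of what you flag as obstacles does not arise.

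Two points where your write-up goes astray. First, your ``main obstacle'' about collider activation is empty: the path in question is \emph{directed}, so it contains no colliders at all, and conditioning on $D^{\cor}=Y^{\cor}=0$ or on later covariates cannot open anything. The only way the path can be blocked is by conditioning on a non-collider that lies on it, which is precisely what drives the case split. Second, your description of the subpath to the earliest node $X\in\overline{L}_{D}\cup\{D_{k+1}\}$ is backwards: you write that its internal nodes lie in $\overline{L}_{D,\,\mathrm{time}(X)-1}$, but by minimality of $X$ none of them are measured covariates---they are either absent or \emph{unmeasured} nodes of the DAG. The paper makes this explicit (``all the nodes are either in $\overline{L}_{D,K}$ or are unmeasured''); unmeasured non-colliders are never in the conditioning set, so the subpath $Z_Y\to\cdots\to X$ is trivially open, and the relevant condition (the second if $X=D_{k+1}$, the fourth if $X=L_{D,i}$) fails. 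The $Z_D$ case you discuss is indeed impossible for the reason you give and can be dropped entirely.
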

We provide the proof of this proposition, together with other results   related to $Z_D$ and full isolation (Proposition \ref{prop:DccImpliesADPI} and Corollary \ref{coro:DccImpliesFullPI}), in Appendix \ref{sec:AppProofDCCimpliesPI}.

Conditions \eqref{eq:ExchangeabilityPast}, \eqref{eq:Consistency}, \eqref{eq:PositivityL0}   are expected to hold by design when $Z$ is randomly assigned. Conditions \eqref{eq:PositivityCR}  and  \eqref{eq:ExchangeabilityFuture} does not hold by design as censoring and adherence are not assigned at random. Yet, the former could be tested in the observed data. The dismissible component conditions \eqref{eq:DCC} and the positivity condition \eqref{eq:PositivityA} do not hold by design. However,   Assumption \ref{ass:DCC}  is single world and could be read off the SWIG associated to the extended causal DAG of the process under interventions on $Z_Y, Z_D$, perfect adherence and no loss of follow-up \cite{richardson2013single}.

Having stated these four sets of conditions, we can introduce the main identifiability result:
\begin{restatable}{theorem}{thmIdentifcationFormula}\label{thm:IdentifcationFormula}
Suppose that Assumptions \ref{ass:Exchangeability}-\ref{ass:DCC} hold under a FFRCISTG model. Then, the counterfactual probability of observing the event of interest is identified by
\begin{align}\label{eq:IdentificationFormula}\allowdisplaybreaks
        &\mathbb{P}(Y_{K+1}^{z_Y,z_D,\cor}=1)\\
        \nonumber=&\sum_{j=0}^{K}\sum_{\overline{l}_K}  \mathbb{P}  ( Y_{j+1}=1 \mid D_{j+1}=Y_j=C_{j+1}=0,  \overline{L}_{j}=\overline{l}_{j}, Z=z_Y, \overline{R}_{j+1}=1)  \\
        &  \prod_{s=0}^{j} \bigg[ \mathbb{P} ( D_{s+1}=0 \mid Y_s=D_{s}=C_{s+1}=0, \overline{L}_{s} = \overline{l}_{s}, Z=z_D, \overline{R}_{s+1}=1) \nonumber \\
        & \times \mathbb{P} ( L_{Y,s}=l_{Y,s} \mid Y_s=D_{s}=C_{s}=0, L_{D,s} =  l_{D,s}, \overline{L}_{s-1} = \overline{l}_{s-1}, Z =z_Y, \overline{R}_{s}=1)  \nonumber\\
        & \times \mathbb{P} ( L_{D,s}=l_{D,s} \mid Y_s=D_{s}=C_{s}=0, \overline{L}_{s-1} = \overline{l}_{s-1}, Z=z_D, \overline{R}_{s}=1)\nonumber\\
        & \times \mathbb{P}  ( Y_{s}=0 \mid D_{s}=Y_{s-1}=C_{s}=0,  \overline{L}_{s-1}=\overline{l}_{s-1}, Z=z_Y, \overline{R}_{s}=1) \bigg].\nonumber  
\end{align}
\end{restatable}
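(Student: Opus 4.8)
The plan is to prove the identification formula in two conceptual stages. First I would pass from the observed two-arm data to the hypothetical four-arm trial $G$, showing that all the relevant counterfactual distributions under the intervention $\{Z_Y = z_Y, Z_D = z_D, \overline{R}_{K+1}=1, \overline{C}_{K+1}=0\}$ coincide in the two models. Concretely, under the FFRCISTG and the consistency and conditional exchangeability assumptions (Assumptions \ref{ass:Exchangeability}--\ref{ass:Consistency}), the observed-data distribution is a module in a larger system, and the generalized decomposition/modified treatment assumption from Section \ref{sec:TreatmentDecomposition} guarantees that setting $Z = z$ in the two-arm trial induces the same conditional laws as setting $Z_Y = Z_D = z$ in $G$. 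Thus it suffices to establish that in $G$,
\begin{equation*}
\mathbb{P}\!\left(Y_{K+1}^{\cor}(G)=1 \,\middle|\, Z_Y(G)=z_Y,\, Z_D(G)=z_D\right)
\end{equation*}
equals the right-hand side of \eqref{eq:IdentificationFormula} with the observed-data conditionals read off in the obvious way — exchangeability, positivity, and consistency hold by design in $G$ for the interventions on $Z_Y, Z_D, \overline{R}, \overline{C}$.

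Second, and this is the technical core, I would derive the product formula inside $G$. Because loss of follow-up and adherence are deterministically fixed at $\overline{C}_{K+1}=0$, $\overline{R}_{K+1}=1$, the only ``live'' randomness is in $Y$, $D$, and $L = (L_D, L_Y)$. Writing the counterfactual cumulative incidence as a telescoping sum over the first time $j+1$ at which $Y$ jumps, $\mathbb{P}(Y_{K+1}^{\cor}=1) = \sum_{j=0}^K \mathbb{P}(Y_{j+1}^{\cor}=1,\, Y_j^{\cor}=D_{j+1}^{\cor}=0)$, I would then factorize each summand along the temporal ordering $(\ldots, D_{s}, Y_{s}, L_{D,s}, L_{Y,s}, \ldots)$ — note the stated convention records $D_k$ before $Y_k$, and I will need to split $L_k$ as $(L_{D,k}, L_{Y,k})$ consistent with Assumption \ref{ass:DCC}. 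This yields a product over $s \le j$ of four factors: the hazard of $D$, the conditional law of $L_D$, the conditional law of $L_Y$ given $L_D$, and the survival factor for $Y$, together with the final hazard of $Y$ at time $j+1$. The key step is applying each of the four dismissible component conditions \eqref{eq:DCC} to replace the joint conditioning event $\{Z_Y(G)=z_Y, Z_D(G)=z_D\}$ by conditioning on whichever single component is relevant: the $D$-hazard and $L_D$-law depend only on $Z_D = z_D$, while the $Y$-hazard, the $Y$-survival factor, and the $L_Y$-law depend only on $Z_Y = z_Y$. After these substitutions each factor is a functional of a single-component intervention, and a final application of exchangeability/consistency in $G$ (which holds by randomization of $Z_Y, Z_D$) rewrites it as a function of the observed-data conditional in \eqref{eq:IdentificationFormula}.

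The main obstacle I anticipate is the careful bookkeeping of the conditioning sets when chaining the dismissible component conditions across time: each application of \eqref{eq:DCC} swaps out a component of the treatment from the conditioning event, but one must verify that the event being conditioned on at step $s$ — survival $\{Y_s = D_s = 0\}$ (or $\{D_{s+1}=0\}$ for the $D$-hazard), no censoring, perfect adherence, and the covariate history $\overline{L}_{s-1}$ (or $\overline{L}_s$, or $L_{D,s}$ with $\overline{L}_{s-1}$) — matches exactly the conditioning set in the corresponding line of Assumption \ref{ass:DCC}, including the correct split of $L_s$ into $L_{D,s}$ and $L_{Y,s}$ and the correct indexing of $D_{s+1}$ versus $D_s$. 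A secondary, more routine difficulty is justifying that positivity \eqref{eq:Positivity} makes every conditional probability appearing in the factorization well-defined, so that the telescoping and the g-formula-style expansion are valid; I would handle this by checking that each conditioning event appearing in \eqref{eq:IdentificationFormula} is assigned positive probability whenever the corresponding history has positive probability, which is precisely what conditions \eqref{eq:PositivityL0}--\eqref{eq:PositivityCR} assert. Once these two points are settled, the identity \eqref{eq:IdentificationFormula} follows by collecting terms.
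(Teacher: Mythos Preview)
Your proposal is essentially the paper's own proof, organized slightly differently. The paper proceeds via two lemmas: Lemma~\ref{lemma:Identification1} uses the dismissible component conditions (Assumption~\ref{ass:DCC}) to replace each conditional law under the joint $(z_Y,z_D)$ intervention by the corresponding law under a single-component intervention $z_Y$ or $z_D$; Lemma~\ref{lemma:Identification2} then uses Assumptions~\ref{ass:Exchangeability}--\ref{ass:Positivity} to equate each single-$z$ counterfactual conditional with an observed-data conditional. The main proof simply expands $\mathbb{P}(Y_{K+1}^{z_Y,z_D,\cor}=1)$ by the law of total probability exactly as you describe, then applies the two lemmas in order. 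Your ``work inside $G$'' framing and the paper's ``counterfactual $\to$ single-$z$ counterfactual $\to$ observed'' framing amount to the same computation.

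There is one genuine slip in your write-up: you assert that ``exchangeability, positivity, and consistency hold by design in $G$ for the interventions on $Z_Y, Z_D, \overline{R}, \overline{C}$''. This is not correct. In $G$ only $Z_Y$ and $Z_D$ are randomly assigned; the causal structure governing $C$ and $R$ is by construction identical to that in the observed data (see the definition of $G$ in Assumption~\ref{ass:DCC}). Hence exchangeability with respect to $C_{k+1}$ and $R_{k+1}$ does \emph{not} hold by design in $G$; it must be imported from Assumption~\ref{ass:Exchangeability}, specifically condition~\eqref{eq:ExchangeabilityFuture}. The passage from single-$z$ counterfactual conditionals to observed-data conditionals is therefore not a one-line remark but an iterative argument that, at each time step, first conditions on $\{C_{k+1}=0,R_{k+1}=1\}$ using \eqref{eq:ExchangeabilityFuture}, then invokes consistency \eqref{eq:Consistency} to replace counterfactual by factual variables, with positivity \eqref{eq:Positivity} ensuring each conditioning event has positive mass. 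This is precisely the content of the paper's Lemma~\ref{lemma:Identification2}, and it is the piece of bookkeeping your proposal underestimates. Once you supply that argument, the rest of your plan goes through and matches the paper.
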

See Appendix \ref{sec:AppProofIdentification} for a proof. We refer to Equation \eqref{eq:IdentificationFormula} as the g-formula for $\mathbb{P}(Y_{K+1}^{z_Y,z_D,\cor}=1)$ \cite{robins1986new}, which is expressed in terms of factual quantities. We present equivalent conditions and results using the treatment-centered encoding in Appendix \ref{sec:EquivalenceAdherenceEncodings}.

\section{Estimation}\label{sec:Estimation}
Denote by $\nu(\mathbb{P})$ the right-hand-side of the identification formula \eqref{eq:IdentificationFormula} in Theorem \ref{thm:IdentifcationFormula}. We propose three  estimators of this quantity.
\subsection{Simple plug-in estimator}\label{sec:NaiveEstimator}
Let $\Tilde{\mathbb{P}}$ denote parametric models for the conditional distributions of $Y, D, L_D, L_Y$, as described in Theorem \ref{thm:IdentifcationFormula}. Let $\Tilde{\mathbb{P}}_n$ be the corresponding models fitted to the observed data. These models allow for the construction of a simple plug-in estimator, $\widehat{\nu}_{simple} = \nu(\Tilde{\mathbb{P}}_n)$. If the models $\Tilde{\mathbb{P}}$ are correctly specified and consistently estimated, $\widehat{\nu}_{simple}$ is a consistent estimator of $\mathbb{P}(Y_{K+1}^{z_Y,z_D,\cor}=1)$ under the identification conditions in Theorem \ref{thm:IdentifcationFormula}.
\subsection{Weighted estimator}\label{sec:WeightedEstimators}
To motivate weighted estimators, which are easier to fit and rely on less parametric assumptions than the one in Section \ref{sec:NaiveEstimator}, consider the following Theorem. 
\begin{restatable}{theorem}{thmWeigthedFormulas}\label{thm:WeigthedFormulas}
Under the conditions of Theorem \ref{thm:IdentifcationFormula} an  equivalent identification formula is
    \begin{align}\label{eq:WeigthedFormulaD}
        &\mathbb{P}(Y_{K+1}^{z_Y,z_D,\cor}=1)\\
         =& \sum_{s=0}^K\mathbb{E}  [ W_{(C,R),s}(z_D) W_{Y,s}  W_{L_{Y},s}  (1-Y_s)(1-D_{s+1}) Y_{s+1} \mid Z=z_D]\nonumber,
    \end{align}
where 
\begin{align*}
 W_{Y,s}&=\frac{  \mathbb{P}(Y_{s+1}=1 \mid C_{s+1}=D_{s+1}=Y_{s}= 0, \overline{L}_{s},  Z = z_Y,\overline{R}_{s}=1) }{   \mathbb{P}(Y_{s+1}=1 \mid C_{s+1}=D_{s+1}=Y_{s}= 0, \overline{L}_{s},  Z = z_D,\overline{R}_{s}=1) }\\
 &\times\prod_{j=0}^{s} \frac{  \mathbb{P}(Y_{j}=0 \mid C_{j}=D_{j}=Y_{j-1}= 0, \overline{L}_{j-1},  Z = z_Y,\overline{R}_{j-1}=1) }{   \mathbb{P}(Y_{j}=0 \mid C_{j}=D_{j}=Y_{j-1}= 0, \overline{L}_{j-1},  Z = z_D,\overline{R}_{j-1}=1) }, \\
  W_{L_{Y},s}  &=\prod_{j=0}^{s}  \frac{  \mathbb{P}(L_{Y,j} = l_{Y,j} \mid C_{j}=  D_{j} =Y_j= 0, L_{D,j}, \overline{L}_{j-1},  Z = z_Y, \overline{R}_{j-1}=1) }{  \mathbb{P}(L_{Y,j} = l_{Y,j} \mid C_{j}=  D_{j} =Y_j= 0, L_{D,j}, \overline{L}_{j-1},  Z = z_D, \overline{R}_{j-1}=1) }, \\
W_{(C,R),s} (z) &= \frac{I(C_{s+1} =0)I( \overline{R}_{s+1}=1) }{ \prod\limits_{j=0}^{s}  \mathbb{P}(C_{j+1}=0, R_{j+1}=1 \mid C_{j}=D_{j}=Y_j=0, \overline{L}_{j},  Z=z, \overline{R}_{j}=1) }.
\end{align*}
\end{restatable}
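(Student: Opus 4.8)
The plan is to prove Theorem~\ref{thm:WeigthedFormulas} as a purely algebraic re-expression of the right-hand side of the g-formula \eqref{eq:IdentificationFormula} — no causal assumptions beyond those of Theorem~\ref{thm:IdentifcationFormula} enter, and Assumption~\ref{ass:Positivity} guarantees that every conditional probability occurring in a denominator below is strictly positive, so all ratios are well defined. This is the separable-effects analogue of the familiar equivalence between g-formula and inverse-probability-weighted representations of sustained-treatment functionals. Concretely, I would show that for each fixed $j\in\{0,\dots,K\}$ the $j$th summand of \eqref{eq:IdentificationFormula} equals $\mathbb{E}[W_{(C,R),j}(z_D)\,W_{Y,j}\,W_{L_{Y},j}\,(1-Y_j)(1-D_{j+1})Y_{j+1}\mid Z=z_D]$, i.e.\ the $s=j$ term of \eqref{eq:WeigthedFormulaD}; summing over $j$ then gives the result.

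First I would isolate, within the $j$th summand, the conditional-probability factors indexed by $Z=z_Y$ — the leading hazard $\mathbb{P}(Y_{j+1}=1\mid\cdots,Z=z_Y,\cdots)$, the survival factors $\mathbb{P}(Y_s=0\mid\cdots,Z=z_Y,\cdots)$, and the $L_Y$ factors $\mathbb{P}(L_{Y,s}=l_{Y,s}\mid\cdots,Z=z_Y,\cdots)$ — and multiply and divide each of them by the same factor with $Z=z_D$ in place of $Z=z_Y$, leaving the $z_D$-indexed factors (competing-event survival and the $L_D$ factors) untouched. Collecting the resulting ratios, the hazard ratio together with the product over $s=0,\dots,j$ of the survival ratios is exactly $W_{Y,j}$ evaluated at the observed history (the $s=0$ factor equals $1$ since $Y_0\equiv0$), and the product over $s=0,\dots,j$ of the $L_Y$ ratios is exactly $W_{L_{Y},j}$; on the event $\{\overline{C}_{j+1}=0,\overline{R}_{j+1}=1,\overline{Y}_{j}=0,\overline{D}_{j+1}=0\}$ the conditioning sets of these ratios coincide with those written in the statement, so the harmless index shifts between $\overline{R}_s=1$ and $\overline{R}_{s+1}=1$ (etc.) are immaterial.

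What remains after extracting those ratios is a sum over covariate histories $\overline{l}_j$ of the observed hazard $\mathbb{P}(Y_{j+1}=1\mid\cdots,Z=z_D,\cdots)$ times the product over $s=0,\dots,j$ of the observed conditional laws of $Y_s,L_{D,s},L_{Y,s},D_{s+1}$ given the event-free/adherent/uncensored past and $Z=z_D$. The key step is to recognize this, under the temporal ordering $(C_k,R_k,D_k,Y_k,L_{D,k},L_{Y,k})$, as the observed-data likelihood factorization of $\mathbb{P}(\overline{L}_j=\overline{l}_j,\overline{C}_{j+1}=0,\overline{R}_{j+1}=1,\overline{Y}_j=0,\overline{D}_{j+1}=0,Y_{j+1}=1\mid Z=z_D)$ with the $C$- and $R$-factors deleted; hence multiplying and dividing by $\prod_{s=0}^{j}\mathbb{P}(C_{s+1}=0,R_{s+1}=1\mid C_s=D_s=Y_s=0,\overline{L}_s,Z=z_D,\overline{R}_s=1)$ rewrites it as that joint probability divided by the censoring/adherence product. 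Using monotonicity of $Y,D,C$ to write $I(\overline{Y}_j=0)=1-Y_j$, $I(\overline{D}_{j+1}=0)=1-D_{j+1}$ and $I(\overline{C}_{j+1}=0)=I(C_{j+1}=0)$, the sum over $\overline{l}_j$ of (a function of the history)$\times$(that joint probability given $Z=z_D$) is precisely $\mathbb{E}[W_{(C,R),j}(z_D)W_{Y,j}W_{L_{Y},j}(1-Y_j)(1-D_{j+1})Y_{j+1}\mid Z=z_D]$, the denominator of $W_{(C,R),j}(z_D)$ being exactly the censoring/adherence product evaluated on the event-free, adherent, uncensored history picked out by the indicators. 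Summing over $j$ and renaming $j\to s$ yields \eqref{eq:WeigthedFormulaD}.

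I expect the main obstacle to be the bookkeeping in the third step: one must verify factor-by-factor that the g-formula product, once augmented by the censoring/adherence denominators, reassembles the observed joint law, paying attention to the within-interval ordering of $(D_k,Y_k,L_{D,k},L_{Y,k})$, to the boundary behaviour at $s=0$ (where $Y_0=D_0=C_0=0$ makes several factors trivial and $L_0$ precedes $Z$), and to the convention that once $Y_s=1$ or $D_s=1$ subsequent covariates are undefined, so that on the event of interest the sum over $\overline{l}_K$ in \eqref{eq:IdentificationFormula} collapses to a sum over $\overline{l}_j$. Everything else is routine algebra of ratios of conditional probabilities.
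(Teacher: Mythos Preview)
Your proposal is correct and follows essentially the same strategy as the paper: both arguments rest on recognising that the $j$th g-formula summand and the $s=j$ weighted expectation differ only by multiplying and dividing by the $z_D$-indexed hazard, survival, and $L_Y$ factors together with the censoring/adherence propensities, after which the observed-data joint law reassembles (or, in the paper's direction, is disassembled) term by term. The paper merely runs the computation in reverse---starting from the weighted expectation, expanding the joint $\mathbb{P}(Y_{s+1}=1,\,Y_s=D_{s+1}=C_{s+1}=0,\,\overline{R}_{s+1}=1,\,\overline{L}_K=\overline{l}_K\mid Z=z_D)$ into its sequential factors, and then letting the weights cancel or convert those factors to recover \eqref{eq:IdentificationFormula}---so the bookkeeping you anticipate is exactly what the paper carries out explicitly.
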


See Appendix \ref{sec:AppProofEstimation} for a proof. Equation \eqref{eq:WeigthedFormulaD} shows how sustained effect estimators implicitly perform artificial censoring \cite{robins1993analytic} of non-adherers, by the indicator functions in the numerator of $W_{(C,R),s}$. The function weights are non-zero so long as an individual took treatment ``consistent with having followed [the] regime'':  perfect adherence to the initiated treatment  \cite[App. 1]{cain2010start}.

Equation \eqref{eq:WeigthedFormulaD} motivates a weighted estimator based on parametric models $\Tilde{\mathbb{P}}$ for the quantities involved in the weights $ W_Y, W_{L_Y}, W_{(C,R)}$:
\begin{equation*}
    \widehat{\nu}_{weighted,Y}=\widehat{\mathbb{E}}_n\left[  \frac{I(Z=z_D)}{\widehat{\mathbb{P}}_n(Z=z_D)}\sum_{s=0}^K\Tilde{W}_{(C,R),s}(z_D) \Tilde{W}_{Y,s}  \Tilde{W}_{L_{Y},s}  (1-Y_s)(1-D_{s+1}) Y_{s+1}\right],
\end{equation*}
where $\widehat{\mathbb{P}}_n(Z=z_D)=\widehat{\mathbb{E}}_n[I(Z=z_D)]$ and $\widehat{\mathbb{E}}_n$ denotes the empirical mean. Under the conditions of Theorem \ref{thm:IdentifcationFormula},   $\widehat{\nu}_{weighted,Y}$ is a consistent estimator of the target quantity whenever and the parametric models for  $\{C_{j}=0, R_j=1\}$ as well as  $Y, L_Y$ are correctly specified and consistently estimated. 
Another weighted estimator which requires specification of the conditional distributions of $D,L_D$ instead of $Y, L_Y$ is presented in Appendix \ref{sec:AppEstimationWeighted}.

\subsection{Doubly robust estimator}\label{sec:DRMainBody}
A one-step estimator of $\nu(\mathbb{P})$ has  certain robustness guarantees, unlike the previous estimators. Specifically, define $\widehat{\nu}_{DR} =\widehat{\mathbb{E}}_n[\nu^1(\Tilde{\mathbb{P}}_n)]+\nu(\Tilde{\mathbb{P}}_n)$, where  $\Tilde{\mathbb{P}}_n$ is the law denoting the estimations of the  modelled quantities $\Tilde{\mathbb{P}}$, $\nu(\Tilde{\mathbb{P}}_n)$ is the simple plug-in estimator introduced in Section \ref{sec:NaiveEstimator} and $\nu^1$ is $\nu$'s influence function. The expression of this influence function can be seen in Theorem \ref{thm:InfluenceFucntion} in Appendix \ref{sec:AppEstimationDR}. We  show that this estimator is doubly robust (Theorem \ref{thm:DR} in Appendix \ref{sec:AppEstimationDR}), in the following sense: under the identification conditions of Theorem \ref{thm:IdentifcationFormula}, if the propensity score for $\{C,R\}$ is correctly specified, then the one-step estimator estimator will be consistent either if the models for the  conditional distributions of  $\overline{Y}_{K+1},\overline{L}_{Y,K}$ or for $\overline{D}_{K+1},\overline{L}_{D,K}$ are correctly specified, but not necessarily both. We also discuss how this one-step estimator can be adapted to construct doubly robust estimators of estimands presented in related work \cite{stensrud2021generalized,stensrud2022separable,wanis2024separable}. See Appendix \ref{sec:AppRelatedWorkDR} for details.

\section{Simulation study}\label{sec:SimulationStudy}
To verify nominal coverage of bootstrap-based confidence intervals, and our robustness statements for the one-step estimator (Theorem \ref{thm:DR} in Appendix \ref{sec:AppEstimationDR}), we conducted a simulation study. 

Consider the data structure introduced in Section \ref{sec:ObservedDataStructure} with $K=1$, that is, with two time points.  Details on the data generating mechanism are given in Appendix \ref{sec:AppSimus}. Under this data-generating process, the identifiability conditions required by Theorem \ref{thm:IdentifcationFormula} are satisfied (with $L_Y=\emptyset$), and thus the  probabilities $\mathbb{P}(Y_{2}^{z_Y,z_D,\cor}=1)$ are identified by the g-formula \eqref{eq:IdentificationFormula}. 
These true counterfactual probabilities are given in Table \ref{tab:TrueProbs}.

\begin{table}[ht]
    \centering
    \begin{tabular}{l|cccc}
        \hline
        $\bm{(z_Y, z_D)}$ & $\bm{(1,1)}$ & $\bm{(1,0)}$ & $\bm{(0,1)}$ & $\bm{(0,0)}$ \\ \hline
        $\mathbb{P}(Y_{2}^{z_Y,z_D,\cor}=1)$ \rule{0pt}{3ex} & 0.72 & $0.74$ & 0.62 & $0.66$ \\ 
    \end{tabular}
    \caption{True probabilities of observing the target event in the four-arm trial described in Section \ref{sec:SimulationStudy}, computed following \eqref{eq:IdentificationFormula}.}
    \label{tab:TrueProbs}
\end{table}

To assess coverage of the bootstrap confidence intervals, we generated, for sample sizes $n\in\{1000,5000,10000\}$ (in the two-arm trial), 500 trial realizations. Then, 95\%  confidence intervals for the four values of $\mathbb{P}(Y_{2}^{z_Y,z_D,\cor}=1)$ were computed based on 500 bootstrap samples for each of the simulated trials, see \cite[Ch. 5]{Davison_Hinkley_1997} for details.

Firstly, we considered a setting where all postulated models were correctly specified. Then, we know that all estimators are consistent, see Section \ref{sec:Estimation} and Appendix \ref{sec:AppEstimation}.  The simulations confirmed coverage of the true counterfactual probability of interest in the four arms across all considered sample sizes, see  Table \ref{tab:SimusCorrect}.

\begin{table}[ht]
    \centering
    \begin{tabular}{crcccc}
        \hline
        $\bm{(z_Y, z_D)}$ && $\bm{(1,1)}$ & $\bm{(1,0)}$ & $\bm{(0,1)}$ & $\bm{(0,0)}$ \\ \hline
        \multirow{3}{*}{Simple} & n=1000 & 0.948 & 0.948 & 0.958 & 0.970 \\ 
                             & n=5000 & 0.950 & 0.956 & 0.954 & 0.950 \\ 
                             & n=10000 & 0.948 & 0.962 & 0.944 & 0.944 \\ \hdashline
        \multirow{3}{*}{Weighted} & n=1000 & 0.966 & 0.962 & 0.964 & 0.962 \\ 
                                & n=5000 & 0.946 & 0.936 & 0.952 & 0.960 \\ 
                                & n=10000 & 0.958 & 0.964 & 0.936 & 0.952 \\ \hdashline
        \multirow{3}{*}{Doubly robust} & n=1000 & 0.954 & 0.952 & 0.962 & 0.970  \\ 
                                       & n=5000 & 0.950 & 0.952 & 0.956 & 0.948 \\ 
                                      & n=10000 & 0.948 & 0.962 & 0.936 & 0.944 \\ 
    \end{tabular}
    \caption{Fraction of the 500 bootstrapped 95\% confidence intervals which contain the true value of $\mathbb{P}(Y_{2}^{z_Y,z_D,\cor}=1)$ in the scenario where all postulated models are correctly specified.}
    \label{tab:SimusCorrect}
\end{table}

Secondly, we considered a setting where all models were correctly specified, except for the $L_1$ model. Specifically, we fit the model $$\Tilde{\mathbb{P}}_n(L_1=l_1\mid D_1=Y_1=0,L_0,\phi_1(Z))=\widehat{\mathbb{E}}_n[I(L_1=l_1)\mid C_1=D_1=Y_1=0 ], $$ which is miss-specified as it does not stratify for $L_0$ and $Z$. In this setting we do not have consistency for the simple estimator, but the one-step estimator is still consistent (Theorem \ref{thm:DR}). Also, the weighted estimator is identical to the first scenario, as it does not require modelling of $L_1$. The simple estimator has low coverage across all three sample sizes, whereas the influence-function-based estimator has nominal coverage,  see Table \ref{tab:SimusMiss}.

\begin{table}[ht]
    \centering
    \begin{tabular}{crcccc}
        \hline
        $\bm{(z_Y, z_D)}$ && $\bm{(1,1)}$ & $\bm{(1,0)}$ & $\bm{(0,1)}$ & $\bm{(0,0)}$ \\ \hline
        \multirow{3}{*}{Simple} & n=1000 & 0.032 & 0.020 & 0.060 & 0.008 \\ 
                             & n=5000 & 0 & 0 & 0 & 0 \\ 
                             & n=10000 &  0 & 0 & 0 & 0  \\ \hdashline
        \multirow{3}{*}{Doubly robust} & n=1000 & 0.950  & 0.952  & 0.960  & 0.966   \\ 
                                       & n=5000 & 0.950 & 0.956 & 0.948 & 0.950  \\ 
                                       & n=10000& 0.944 & 0.962 & 0.948 & 0.946 \\ 
    \end{tabular}
    \caption{Fraction of the 500 bootstrapped 95\% confidence intervals which contain the true value of $\mathbb{P}(Y_{2}^{z_Y,z_D,\cor}=1)$ in the scenario where all postulated models are correctly specified except for the $L_1$-model, which is not.}
    \label{tab:SimusMiss}
\end{table}

\section{Data application: Acute kidney injury in the SPRINT cohort} \label{sec:SprintExample} 
\subsection{Analysis}
We analysed data from the Systolic Blood Pressure Intervention Trial (SPRINT) \cite{sprint2015randomized}, which randomly assigned individuals to an intensive ($Z=1$) or standard ($Z=0$) blood pressure treatment. Our analysis was based on data for the initial $K+1=30$ months after randomization, and also the following vector of baseline covariates ($L_0$): gender, smoking status, history of clinical or subclinical chronic kidney disease  and  log-mean arterial blood pressure\footnote{In \cite{stensrud2021generalized} mean arterial blood pressure is defined as a third of the Systolic BP plus two thirds of the Diastolic BP. Here we applied the log-transformation to the mean arterial pressure following \cite{BloodPressureJustification} and considered the most-recently-measured value for each time-point.}. Our event of interest is AKI, such that death from any cause is a competing event. After randomization, blood pressure measurements are scheduled for the first three months and every three months afterwards, which we take as a time-varying covariate $L_k$.  We restricted the analysis to individuals with complete baseline covariates, event-free at randomization, and over the age of 75 at baseline. This age restriction allows us to consider the group where most deaths occur \cite{williamson2016intensive}. 

The resulting data set included 1298 individuals in intensive treatment and 1288 in the standard arm. During the 30-month follow-up period 199 individuals were lost to follow-up (censored) in the standard treatment arm, and 205 in the intensive regime. Regarding adherence, we consider  $R_k=1$ when the individual said they 100\% followed their assigned treatment at the clinical visit at time $k+1$. Adherence was initially high and similar in both arms, but it decreased during the observation period, especially in the intensive therapy arm. Further details on adherence in the SPRINT  can be seen in Appendix \ref{sec:AppSprintAdherence}.

Our goal is  to estimate the counterfactual probability of observing the event of interest under interventions on $Z_Y, Z_D$, perfect adherence and no loss of follow-up: $\mathbb{P}(Y_{30}^{z_Y,z_D,\cor}=1)$. Following \cite[Sec. 7.2]{stensrud2021generalized}, we  assume that the dismissible component conditions (Assumption \ref{ass:DCC}) hold under a partition such that $L_k=L_{D,k}$ and $L_{Y,k}=\emptyset$.  Then, we estimated  AKI probabilities using the weighted estimator $\widehat{\nu}_{weighted,Y}$, as $W_{L_Y,s}=1$. Details on the parametric models which we fit to compute this estimator can be seen in Appendix \ref{sec:AppSprintModels}.

\subsection{Results}\label{sec:SprintResults}
The estimates of $\mathbb{P}(Y_{30}^{z_Y,z_D,\cor}=1)$, for the four possible combinations of $z_Y,z_D\in\{0,1\}$, suggest increased risk of AKI under any treatment strategy involving a component of the intensive therapy with respect to the standard treatment (Table \ref{tab:EffectEstimates} and Figure \ref{fig:CumIncWeighted}). However, the confidence intervals are overlapping (Table  \ref{tab:EffectEstimates}). Assuming that the dismissible component conditions hold under $L_Y=\emptyset$, then Proposition \ref{prop:DccImpliesAYPI} implies that $Z_Y$ partial isolation holds. Therefore, the $Z_Y$ sustained separable effect becomes of interest, as it quantifies only direct effects which do not involve the competing event. Particularly, this effect evaluated at $z_D=1$ reflects the direct effect of the $Z_Y$ component of the treatment on the risk of AKI.  The estimate of this sustained effect is -0.102 (95\% confidence interval $[-0.322, 0.412]$), see Table \ref{tab:EffectEstimates}).  Thus,  we find no clear evidence that  removing the $Z_Y$ component of the intensive blood pressure therapy will change the risk of AKI at 30 months.

\begin{table}[ht]
\def~{\hphantom{0}}
\begin{tabular}{lrc}
\hline
{$\bm{(z_Y,z_D)}$} & \textbf{Estimate} & \textbf{95\% CI} \\
\hline
(1,1) & 0.206 & [0.096, 0.361] \\ 
(1,0) & 0.207 & [0.049, 0.412] \\ 
(0,1) & 0.308 & [0.049, 0.608] \\ 
(0,0) & 0.096 & [0.060, 0.163] \\ \hline
\textbf{Causal effect} & -0.102 & [-0.322, 0.412] \\
\end{tabular}

\caption{Estimates of $\mathbb{P}(Y_{30}^{z_Y,z_D,\cor}=1)$ using the weighted estimator $\widehat{\nu}_{weighted,Y}$, and of the causal effect of interest: The $Z_Y$ sustained separable effect evaluated at $z_D=1$ in the additive scale ($\mathbb{P}(Y_{30}^{z_Y=1,z_D=1,\cor}=1)-\mathbb{P}(Y_{30}^{z_Y=0,z_D=1,\cor}=1)$). The quantile-based confidence intervals  are derived from 500 bootstrap samples.} 
\label{tab:EffectEstimates}
\end{table}
\begin{figure}
    \centering
    \includegraphics[width=0.75\linewidth]{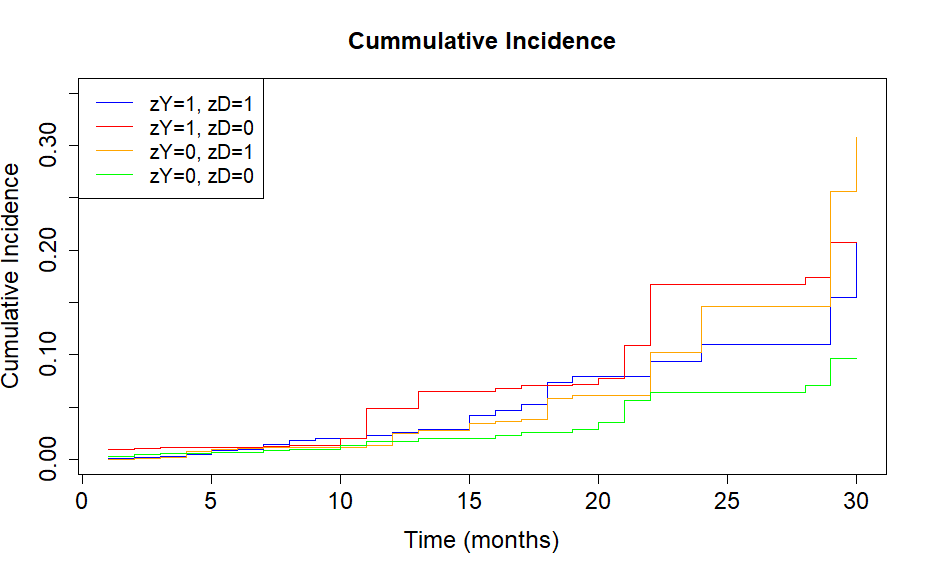}
    \caption{Estimated counterfactual probability of observing an AKI event $\mathbb{P}(Y_{k}^{z_Y,z_D,\cor}=1)$ in the observation time $k\in\{1,\ldots,30\}$ for $z_Y,z_D\in\{0,1\}$, estimated using  the weighted estimator $\widehat{\nu}_{weighted,Y}$. }
    \label{fig:CumIncWeighted}
\end{figure}

\section{Discussion}\label{sec:Discussion}

We have generalized intention-to-treat analyses discussed in \cite{stensrud2021generalized, stensrud2023conditional, wanis2024separable} and, to our knowledge, given the first methods for sustained separable effects. The use of the strategy-centered encoding simplified the exposition of the identification conditions as well as the representation of mechanistic assumptions. 

Furthermore, this work relates to other areas of statistics and causal inference, where it can guide future methodological developments. Beyond a competing event setting, arguments similar to ours can justify time-varying interventionist mediation analysis, broadly defined, see   Appendix \ref{sec:ConnectionInterventionistMediation}. The one-step estimator in Section \ref{sec:DRMainBody} can also be adapted to construct doubly robust estimators for targets considered in related work, see Appendix \ref{sec:AppRelatedWorkDR}.

Finally, the strategy-centered encoding can be used to estimate other sustained treatment effects in observational studies \cite{hernan2017per,hernan2006comparison,cain2010start}, not just for mechanistic estimands like separable effects. In fact, we conjecture that there exist other situations where the encoding helps make identification arguments clearer or easier, which could support the discovery and explanation of new results.

\section*{Acknowledgement}
Ignacio Gonz\'alez-P\'erez and Mats J. Stensrud  were supported by the Swiss National Science Foundation, grant 211550.

\clearpage
\bibliography{references}
\bibliographystyle{unsrt}

\clearpage
\appendix
\section{Proofs of the results in the main text}
\subsection{Relation between Assumption \ref{ass:DCC} and partial isolation} \label{sec:AppProofDCCimpliesPI}
\propDccImpliesAYPI*
\begin{proof}
    This proof closely follows the arguments of the proof of \cite[Lemma 5, App. C]{stensrud2021generalized}. Assume that Assumption \ref{ass:DCC} holds under $\overline{L}_{Y,K}=\emptyset$, but that $Z_Y$ partial isolation does not hold. As per Definition \ref{def:AYPI}, there is a directed path from $Z_Y$ to $D_{k+1}$ (for a certain $k$) in the extended causal DAG which does not intersect $\overline{Y}_k, \overline{R}_{k+1}, \overline{C}_{k+1}$. W.l.o.g. assume this path does not intersect $\overline{D}_{k}$. Given the structure of the extended causal DAG, there are two possibilities.  If this path has no intermediate nodes: $Z_Y\to D_{K+1}$, then the second condition in Equation \eqref{eq:DCC} would be violated, which is a contradiction. If this path has intermediate nodes, then all the nodes are either in   $\overline{L}_K=\overline{L}_{D,K}$ or are unmeasured. If all nodes are unmeasured then second condition in Equation \eqref{eq:DCC} would again be violated. Otherwise, the fourth dismissible component condition would be violated with respect to the first node of $\overline{L}_{D,K}$ which is encountered in this path. This concludes the proof.
\end{proof}
\begin{restatable}{proposition}{propDccImpliesADPI}\label{prop:DccImpliesADPI}
    If the dismissible component conditions (Assumption \ref{ass:DCC}) hold under a partition where $\overline{L}_{D,K}=\emptyset$, then $Z_D$ partial isolation holds.
\end{restatable}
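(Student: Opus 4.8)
The plan is to transpose the argument used for Proposition~\ref{prop:DccImpliesAYPI}, interchanging the roles of the triples $(Z_Y,Y,L_Y)$ and $(Z_D,D,L_D)$. Under the partition $\overline{L}_{D,K}=\emptyset$ we have $L_k=L_{Y,k}$ for every $k$, so the first and third dismissible component conditions in \eqref{eq:DCC} read $Y_{k+1}^{\cor}(G)\independent Z_D(G)\mid Z_Y(G),D_{k+1}^{\cor}(G)=Y_k^{\cor}(G)=0,\overline{L}_k^{\cor}(G)$ and $L_{Y,k}^{\cor}(G)\independent Z_D(G)\mid Z_Y(G),D_k^{\cor}(G)=Y_k^{\cor}(G)=0,\overline{L}_{k-1}^{\cor}(G)$ (the conditioning on $L_{D,k}^{\cor}(G)$ being vacuous). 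I would argue by contradiction: supposing that Assumption~\ref{ass:DCC} holds under this partition but that $Z_D$ partial isolation fails, Definition~\ref{def:ADPI} would supply a directed path in the extended DAG from $Z_D$ to some node of $\overline{Y}_{K+1}$ that avoids $\overline{D}_{K+1}$, $\overline{R}_{K+1}$ and $\overline{C}_{K+1}$. Discarding the deterministic treatment edges and truncating at the first node of $\overline{Y}_{K+1}$ reached, I would obtain a directed path $\pi$ from $Z_D$ to some $Y_{k+1}$ that in addition avoids $\overline{Y}_k$; and since each covariate node precedes the contemporaneous $Y$-node in the temporal order, any node $L_{Y,j}$ on $\pi$ has $j\le k$.

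The two structural facts I would use are that the interior nodes of $\pi$ cannot be $Z$ or $Z_Y$ — these have no incoming edges from descendants of $Z_D$ once the deterministic edges are removed — and that, by construction, no interior node of $\pi$ lies in $\overline{Y}_{K+1}\cup\overline{D}_{K+1}\cup\overline{R}_{K+1}\cup\overline{C}_{K+1}$; hence every interior node of $\pi$ is either in $\overline{L}_K=\overline{L}_{Y,K}$ or is unmeasured. I would then split into the same two cases as in Proposition~\ref{prop:DccImpliesAYPI}. If $\pi$ has no interior node, or if all its interior nodes are unmeasured, then $\pi$ survives in the SWIG associated with the interventions fixing $Z_Y,Z_D$, imposing perfect adherence and no loss of follow-up after conditioning on $Z_Y(G)$, on $\overline{L}_k^{\cor}(G)$ and on survival, contradicting the first dismissible component condition in \eqref{eq:DCC}. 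Otherwise, letting $L_{Y,j}$ (with $j\le k$) be the first covariate node encountered along $\pi$, the initial segment of $\pi$ from $Z_D$ to $L_{Y,j}$ has only unmeasured interior nodes and avoids $Z_Y$, $\overline{Y}_{j-1}$, $\overline{D}_j$, $\overline{R}_j$, $\overline{C}_j$ and $\overline{L}_{Y,j-1}$, so it is unblocked by the conditioning set of the third dismissible component condition in \eqref{eq:DCC}, again a contradiction. This exhausts the cases.

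The main obstacle, exactly as in Proposition~\ref{prop:DccImpliesAYPI}, is the path bookkeeping in the middle paragraph: one must verify that truncating first at the earliest $\overline{Y}_{K+1}$-node and then at the earliest covariate node really removes from the interior of the relevant segment every variable appearing in the conditioning set of the dismissible component condition being invoked, and that no interior node is a treatment node $Z$ or $Z_Y$ — this is precisely where the structure of the \emph{extended} DAG and the convention that deterministic treatment edges are discarded when speaking of causal paths (footnote~1) enter. The remaining step, converting a surviving directed path in the relevant SWIG into a violated conditional independence, is routine and follows the template of \cite[Lemma~5, App.~C]{stensrud2021generalized} already invoked for Proposition~\ref{prop:DccImpliesAYPI}.
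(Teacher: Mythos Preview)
Your proposal is correct and follows essentially the same approach as the paper, which simply states that the proof is analogous to that of Proposition~\ref{prop:DccImpliesAYPI}. Your contradiction argument, case split on whether the path has measured covariate interior nodes, and use of the first and third dismissible component conditions (in place of the second and fourth for Proposition~\ref{prop:DccImpliesAYPI}) exactly mirror that template; the extra bookkeeping you spell out about treatment nodes and truncation is implicit in the paper's shorter version.
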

\begin{proof}
    The proof follows analogously to that of Proposition \ref{prop:DccImpliesAYPI}.
\end{proof}

\begin{restatable}{corollary}{coroDccImpliesFullPI}\label{coro:DccImpliesFullPI}
    If the dismissible component conditions (Assumption \ref{ass:DCC}) hold under both partition $\overline{L}_{Y,K}=\emptyset,\overline{L}_{D,K}=\overline{L}_{K} $ and $\overline{L}_{D,K}=\emptyset,\overline{L}_{Y,K}=\overline{L}_{K} $ then full  isolation holds.
\end{restatable}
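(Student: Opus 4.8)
The plan is to simply combine Propositions \ref{prop:DccImpliesAYPI} and \ref{prop:DccImpliesADPI}, observing that full isolation is by definition the conjunction of $Z_Y$ partial isolation and $Z_D$ partial isolation. First I would invoke Proposition \ref{prop:DccImpliesAYPI} with the partition $\overline{L}_{Y,K}=\emptyset$, $\overline{L}_{D,K}=\overline{L}_{K}$: this is precisely a partition in which $\overline{L}_{Y,K}=\emptyset$, so the hypothesis ``Assumption \ref{ass:DCC} holds under this partition'' yields $Z_Y$ partial isolation (Definition \ref{def:AYPI}). Next I would invoke Proposition \ref{prop:DccImpliesADPI} with the partition $\overline{L}_{D,K}=\emptyset$, $\overline{L}_{Y,K}=\overline{L}_{K}$, which is a partition with $\overline{L}_{D,K}=\emptyset$, so the second half of the hypothesis yields $Z_D$ partial isolation (Definition \ref{def:ADPI}).

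The key conceptual point to make explicit is that the partial isolation conditions in Definitions \ref{def:AYPI}--\ref{def:ADPI} are statements purely about the (fixed) extended causal DAG and do not reference any particular partition of $L_k$; only the dismissible component conditions are partition-dependent. Consequently, even though the two instances of Assumption \ref{ass:DCC} used above are stated relative to two \emph{different} partitions, the two conclusions they produce---$Z_Y$ partial isolation and $Z_D$ partial isolation---are both assertions about the same graph and may therefore be asserted jointly. Having established both, full isolation holds by definition, which completes the proof.

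I do not anticipate any real obstacle here: the corollary is an immediate bookkeeping consequence of the two preceding propositions, and the proof amounts to checking that each hypothesized partition matches the hypothesis of the corresponding proposition. The only thing worth stating carefully, to avoid any appearance of a ``cross-world'' mismatch, is the partition-independence of the graphical isolation conditions noted in the previous paragraph.
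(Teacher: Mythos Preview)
Your proposal is correct and matches the paper's own proof, which simply states that the result follows directly from Propositions \ref{prop:DccImpliesAYPI} and \ref{prop:DccImpliesADPI} together with the definition of full isolation. Your added remark that the partial isolation conditions are partition-independent graphical statements is a helpful clarification but not strictly necessary.
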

\begin{proof}
    Follows directly from Propositions \ref{prop:DccImpliesAYPI} and \ref{prop:DccImpliesADPI} and the definition of full isolation.
\end{proof}

\subsection{Proof of Theorem \ref{thm:IdentifcationFormula}}\label{sec:AppProofIdentification}
We follow identical steps to the proof of the identification formula presented in \cite[App. B]{stensrud2021generalized}. We begin by proving two lemmas which we will use to give a proof of Theorem \ref{thm:IdentifcationFormula}.
\begin{lemma}\label{lemma:Identification1}
    Under a FFRCISTG model, Assumption \ref{ass:DCC} implies the following equalities for $z_Y,z_D\in\{0,1\}$:
    \begin{align*}
        &\mathbb{P}(Y_{s+1}^{z_Y,z_D,\cor}=1\mid D_{s+1}^{z_Y,z_D,\cor}=Y_{s}^{z_Y,z_D,\cor}=0,\overline{L}_s^{z_Y,z_D,\cor}=\overline{l}_s)\\
        &=\mathbb{P}(Y_{s+1}^{z_Y,\cor}=1\mid D_{s+1}^{z_Y,\cor}=Y_{s}^{z_Y,\cor}=0,\overline{L}_s^{z_Y,\cor}=\overline{l}_s),\\
        &\mathbb{P}(D_{s+1}^{z_Y,z_D,\cor}=1\mid D_{s}^{z_Y,z_D,\cor}=Y_{s}^{z_Y,z_D,\cor}=0,\overline{L}_s^{z_Y,z_D,\cor}=\overline{l}_s)\\
        &=\mathbb{P}(D_{s+1}^{z_D,\cor}=1\mid D_{s}^{z_D,\cor}=Y_{s}^{z_D,\cor}=0,\overline{L}_s^{z_D,\cor}=\overline{l}_s),\\
        &\mathbb{P}(L_{D,s}^{z_Y,z_D,\cor}=l_{D,s}\mid D_{s}^{z_Y,z_D,\cor}=Y_{s}^{z_Y,z_D,\cor}=0,\overline{L}_{s-1}^{z_Y,z_D,\cor}=\overline{l}_{s-1})\\
        &=\mathbb{P}(L_{D,s}^{z_D,\cor}=l_{D,s}\mid D_{s}^{z_D,\cor}=Y_{s}^{z_D,\cor}=0,\overline{L}_{s-1}^{z_D,\cor}=\overline{l}_{s-1}),
    \end{align*} 
    \resizebox{\linewidth}{!}{$
        \begin{aligned}
        \mathbb{P}&(L_{Y,s}^{z_Y,z_D,\cor}=l_{Y,s}\mid D_{s}^{z_Y,z_D,\cor}=Y_{s}^{z_Y,z_D,\cor}=0,{L}_{D,s}^{z_Y,z_D,\cor}={l}_{D,s},\overline{L}_{s-1}^{z_Y,z_D,\cor}=\overline{l}_{s-1})\\
        =&\mathbb{P}(L_{Y,s}^{z_Y,\cor}=l_{Y,s}\mid D_{s}^{z_Y,\cor}=Y_{s}^{z_Y,\cor}=0,{L}_{D,s}^{z_Y,\cor}={l}_{D,s},\overline{L}_{s-1}^{z_Y,\cor}=\overline{l}_{s-1}),
    \end{aligned}
    $}\vspace{1em}
    where recall that the counterfactual notation $Y^z$ means $Y^{Z=z}$ which is by the generalized decomposition assumption equivalent to $Y^{Z_Y=Z_D=z}$.
\end{lemma}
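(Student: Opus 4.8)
The plan is to prove each of the four displayed equalities by the same device: show that, under the generalized decomposition assumption and a FFRCISTG model, the counterfactual law indexed by $(z_Y,z_D,\overline c=0,\overline r=1)$ can be obtained from the law indexed by a single treatment arm via a sequence of exchangeability statements that are exactly the dismissible component conditions in Assumption~\ref{ass:DCC}, transported to the four-arm trial $G$. The key structural fact I would invoke is that, because $G$ has the same causal structure as the observed data except that $Z_Y(G)$ and $Z_D(G)$ are randomized, the counterfactual variables $Y^{z_Y,z_D,\overline c=0,\overline r=1}$, $D^{z_Y,z_D,\overline c=0,\overline r=1}$, $L^{z_Y,z_D,\overline c=0,\overline r=1}$ are, as a joint law, equal to the distribution of the corresponding variables in $G$ under the event $\{Z_Y(G)=z_Y, Z_D(G)=z_D\}$ together with the interventions setting $\overline C=0$, $\overline R=1$; this is the standard FFRCISTG consistency/randomization argument and I would cite \cite[App. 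B]{stensrud2021generalized} and \cite{richardson2013single}. Once we are working inside $G$, each displayed identity becomes a statement that a particular counterfactual hazard (or covariate conditional) does not depend on one of $Z_Y(G)$ or $Z_D(G)$ given the other and the relevant history — which is precisely one of the four lines of \eqref{eq:DCC}.

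Concretely, for the first equality I would condition on $\{D_{s+1}^{\cor}(G)=Y_s^{\cor}(G)=0,\ \overline L_s^{\cor}(G)=\overline l_s,\ Z_Y(G)=z_Y\}$ and apply the first dismissible component condition, $Y_{s+1}^{\cor}(G)\independent Z_D(G)\mid Z_Y(G), D_{s+1}^{\cor}(G)=Y_s^{\cor}(G)=0, \overline L_s^{\cor}(G)$, to drop the dependence on $Z_D(G)=z_D$; translating back out of $G$ gives the stated equality between the $(z_Y,z_D)$-counterfactual and the $z_Y$-counterfactual. The second and third equalities follow identically from the second and fourth lines of \eqref{eq:DCC} respectively (dropping dependence on $Z_Y(G)$ given $Z_D(G)$). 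For the fourth equality, involving $L_{Y,s}$, I would use the third line of \eqref{eq:DCC}, conditioning additionally on $L_{D,s}^{\cor}(G)=l_{D,s}$; here the partition $L_k=(L_{D,k},L_{Y,k})$ and the temporal ordering (so that $L_{D,s}$ precedes $L_{Y,s}$ within interval $s$) are what make the conditioning set match the one in the assumption.

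The main obstacle — and the only place requiring genuine care rather than bookkeeping — is justifying the passage between counterfactuals in the observed-data world and conditional-on-treatment-arm quantities in the four-arm trial $G$, i.e. establishing that the relevant conditional independencies of the form "$\independent Z_D(G)\mid Z_Y(G),\ldots$" in $G$ are the correct operationalization of cross-arm comparisons of single-arm counterfactuals. This is the content of the FFRCISTG modularity/recanting-free argument and is handled exactly as in \cite[App. B, Lemma 5]{stensrud2021generalized}; I would state the transport lemma, note that the interventions on $\overline C$ and $\overline R$ commute with the randomization of $(Z_Y,Z_D)$ because those nodes are downstream of treatment in both graphs, and then the four equalities are immediate corollaries. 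The remaining steps — repeatedly applying the law of total probability and the four lines of \eqref{eq:DCC}, and re-expanding the abbreviated conditioning events $\{Y_s=D_s=0\}$ — are routine and I would not spell them out in full.
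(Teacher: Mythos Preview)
Your proposal is correct and follows essentially the same approach as the paper: the paper's proof simply states that the argument is identical to that of \cite[Lemma~1, App.~B]{stensrud2021generalized} with the intervention $\overline{c}=0$ replaced by $\overline{c}=0,\overline{r}=1$, and what you have written is a faithful unpacking of that reference --- passing to the four-arm trial $G$, applying each line of \eqref{eq:DCC} to drop the irrelevant treatment component, and translating back. The only minor quibble is the citation: the transport/modularity step you invoke is Lemma~1 (not Lemma~5) of Appendix~B in \cite{stensrud2021generalized}.
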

\begin{proof}
    The proof of this Lemma is identical to the proof of \cite[Lemma 1, App. B]{stensrud2021generalized}, where the interventions now include $\cor$ instead of just $\overline{c}=0$.
\end{proof}
\begin{lemma}\label{lemma:Identification2}
    Under Assumption \ref{ass:Exchangeability}-\ref{ass:Positivity}, for $s=0,\ldots,K$ and $a\in\{0,1\}$ we have that:
    \begin{align*}
        &\mathbb{P}(Y_{s+1}^{z,\cor}=1\mid D_{s+1}^{z,\cor}=Y_{s}^{z,\cor}=0,\overline{L}_s^{z,\cor}=\overline{l}_s)\\&=\mathbb{P}(Y_{s+1}=1\mid C_{s+1}=D_{s+1}=Y_s=0,\overline{L}_s=\overline{l}_s,Z=z,\overline{R}_{s+1}=1),\\
        &\mathbb{P}(D_{s+1}^{z,\cor}=1\mid D_{s}^{z,\cor}=Y_{s}^{z,\cor}=0,\overline{L}_s^{z,\cor}=\overline{l}_s)\\&=\mathbb{P}(D_{s+1}=1\mid C_{s+1}=D_{s}=Y_s=0,\overline{L}_s=\overline{l}_s,Z=z,\overline{R}_{s+1}=1),\\
        &\mathbb{P}(L_{Y,s}^{z,\cor}=l_{Y,s}\mid D_{s}^{z,\cor}=Y_{s}^{z,\cor}=0,\overline{L}_{s-1}^{z,\cor}=\overline{l}_{s-1},{L}_{D,s}^{z,\cor}={l}_{D,s})\\&=\mathbb{P}(L_{Y,s}=l_{Y,s}\mid C_{s}=D_{s}=Y_s=0,\overline{L}_{s-1}=\overline{l}_{s-1},{L}_{D,s}={l}_{D,s},Z=z,\overline{R}_{s}=1),\\
        &\mathbb{P}(L_{D,s}^{z,\cor}=l_{D,s}\mid D_{s}^{z,\cor}=Y_{s}^{z,\cor}=0,\overline{L}_{s-1}^{z,\cor}=\overline{l}_{s-1})\\&=\mathbb{P}(L_{D,s}=l_{D,s}\mid C_{s}=D_{s}=Y_s=0,\overline{L}_{s-1}=\overline{l}_{s-1},Z=z,\overline{R}_{s}=1).
    \end{align*}
\end{lemma}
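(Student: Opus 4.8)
\textbf{Proof proposal for Lemma \ref{lemma:Identification2}.}

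The plan is to establish each of the four displayed equalities by the same two-move strategy that is standard in the identification of longitudinal g-formulas, adapted to the strategy-centered encoding: first replace the counterfactual conditioning event by the corresponding factual event using consistency (Assumption \ref{ass:Consistency}), and then discharge the remaining dependence on $(Z, \overline{R}, \overline{C})$ using the sequential exchangeability conditions in Assumption \ref{ass:Exchangeability}, with positivity (Assumption \ref{ass:Positivity}) guaranteeing that every conditional probability we write down is well defined. I would proceed by a single induction on $s$, simultaneously handling the hazard of $Y$, the hazard of $D$, and the two covariate laws $L_{D,s}$, $L_{Y,s}$, because each equality at level $s$ conditions on $\overline{L}_s^{z,\cor}$ (or $\overline{L}_{s-1}^{z,\cor}$), and so the argument naturally consumes the factualization already obtained at earlier levels.

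Concretely, fix $s$ and $z$, and consider for instance the first equality. On the event $\{Z=z, \overline{C}_{s+1}=0, \overline{R}_{s+1}=1\}$, consistency (Assumption \ref{ass:Consistency}, applied through time $s+1$) gives $\overline{Y}_{s+1}^{z,\cor}=\overline{Y}_{s+1}$, $\overline{D}_{s+1}^{z,\cor}=\overline{D}_{s+1}$, $\overline{L}_s^{z,\cor}=\overline{L}_s$; hence
\begin{align*}
&\mathbb{P}(Y_{s+1}=1\mid C_{s+1}=D_{s+1}=Y_s=0,\overline{L}_s=\overline{l}_s,Z=z,\overline{R}_{s+1}=1)\\
&=\mathbb{P}(Y_{s+1}^{z,\cor}=1\mid C_{s+1}=D_{s+1}^{z,\cor}=Y_s^{z,\cor}=0,\overline{L}_s^{z,\cor}=\overline{l}_s,Z=z,\overline{R}_{s+1}=1).
\end{align*}
It then remains to delete $C_{s+1}$, $\overline{R}_{s+1}$, and $Z$ from the conditioning set. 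The conditional independence \eqref{eq:ExchangeabilityFuture} (applied at level $s$, and iteratively at levels $s-1,\dots,0$ to peel off $\overline{C}_s, \overline{R}_s$ one interval at a time, always conditioning on the event-free, adherent, uncensored history) removes the dependence on $C_{s+1}, R_{s+1}$ and on the earlier $\overline{C}_s,\overline{R}_s$; and \eqref{eq:ExchangeabilityPast} removes the dependence on $Z$ given $L_0$, which then propagates to the full covariate history since we condition on $\overline{L}_s=\overline{l}_s$. Throughout, positivity \eqref{eq:Positivity} ensures each conditioning event has positive probability so that the manipulations are valid. The remaining three equalities follow by the identical recipe, with the temporal bookkeeping adjusted: the $D$-hazard conditions on $\{D_s=0\}$ rather than $\{D_{s+1}=0\}$ in the counterfactual world but still uses $C_{s+1}$ in the factual world (consistent with the convention that $D_k$ is recorded before $Y_k$ and before $L_k$); the $L_{D,s}$ and $L_{Y,s}$ equalities condition on $\{C_s=D_s=Y_s=0,\overline{R}_s=1\}$ and use \eqref{eq:ExchangeabilityFuture} at level $s-1$, with $L_{Y,s}$ additionally conditioning on the already-factualized $L_{D,s}=l_{D,s}$.

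I expect the main obstacle to be purely the temporal-ordering bookkeeping rather than any conceptual difficulty: one must be careful about exactly which of $C_{s+1}$ versus $C_s$, and $\overline{R}_{s+1}$ versus $\overline{R}_s$, appears in each factual conditioning set, and about the order $(\dots,C_k,R_k,D_k,Y_k,L_k,\dots)$ stipulated in Section \ref{sec:ObservedDataStructure}, so that each application of \eqref{eq:ExchangeabilityFuture} is to a genuinely ``future'' block relative to its conditioning event and each consistency invocation is licensed by adherence and no-censoring up to the right time. Since the statement asserts this is ``identical to the proof of \cite[Lemma 1, App. B]{stensrud2021generalized}'' modulo replacing $\overline{c}=0$ by $\cor$, the cleanest write-up is to observe that the additional intervention on $R$ only enlarges the conditioning event by $\{\overline{R}=1\}$, that \eqref{eq:ExchangeabilityFuture} has been stated so as to jointly handle the $(C,R)$ block, and that positivity condition \eqref{eq:PositivityCR} is exactly what is needed to keep that enlarged event of positive probability; the rest of the argument then transcribes verbatim.
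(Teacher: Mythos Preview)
Your proposal is correct and follows essentially the same argument as the paper. The paper's proof runs the chain of equalities in the forward direction---starting from the counterfactual conditional, first adding $Z=z,C_1=0,R_1=1$ via \eqref{eq:ExchangeabilityPast} and \eqref{eq:ExchangeabilityFuture}, then invoking consistency to factualize $Y_1,D_1,L_1$, then adding $C_2=0,R_2=1$, and so on up to time $s+1$---whereas you read the same chain backward, starting from the factual side and peeling off $(C_{s+1},R_{s+1}),\ldots,(C_1,R_1),Z$ one block at a time; either direction is standard and the ingredients (Assumptions \ref{ass:Exchangeability}--\ref{ass:Positivity}) are used identically. One minor slip: the paper cites \cite[Lemma~2, App.~B]{stensrud2021generalized}, not Lemma~1.
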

\begin{proof}
    We present here the proof of the first equality, as the rest come analogously. This proof follows the steps of the proof of \cite[Lemma 2, App. B]{stensrud2021generalized}. Now, using the laws of probability, exchangeability, consistency for $L_0$, positivity, and the fact that all subjects are uncensored and event free at time $s=0$ and $R_0=\emptyset$ we see that: 
    \begin{center}
        \rotatebox{90}{$
    \begin{aligned}
        \mathbb{P}&(Y_{s+1}^{z,\cor}=1\mid D_{s+1}^{z,\cor}=Y_{s}^{z,\cor}=0,\overline{L}_s^{z,\cor}=\overline{l}_s)\\
        =&\mathbb{P}(Y_{s+1}^{z,\cor}=1\mid D_{s+1}^{z,\cor}=Y_{s}^{z,\cor}=0,\overline{L}_s^{z,\cor}=\overline{l}_s,Y_0=D_0=C_0=0,L_0=l_0)\\
        =&\frac{\mathbb{P}(Y_{s+1}^{z,\cor}=1,D_{s+1}^{z,\cor}=Y_{s}^{z,\cor}=0,\overline{L}_s^{z,\cor}=\overline{l}_s\mid Y_0=D_0=C_0=0,L_0=l_0)}{\mathbb{P}(D_{s+1}^{z,\cor}=Y_{s}^{z,\cor}=0,\overline{L}_s^{z,\cor}=\overline{l}_s\mid Y_0=D_0=C_0=0,L_0=l_0)}\\
        =&\frac{\mathbb{P}(Y_{s+1}^{z,\cor}=1,D_{s+1}^{z,\cor}=Y_{s}^{z,\cor}=0,\overline{L}_s^{z,\cor}=\overline{l}_s\mid Y_0=D_0={C}_1=0,L_0=l_0,Z=z,\overline{R}_1=1)}{\mathbb{P}(D_{s+1}^{z,\cor}=Y_{s}^{z,\cor}=0,\overline{L}_s^{z,\cor}=\overline{l}_s\mid Y_0=D_0={C}_1=0,L_0=l_0,Z=z,\overline{R}_1=1)}\\
        =&\mathbb{P}(Y_{s+1}^{z,\cor}=1\mid D_{s+1}^{z,\cor}=Y_{s}^{z,\cor}=0,\overline{L}_s^{z,\cor}=\overline{l}_s,L_0=l_0,Z=z,\overline{R}_1=1,C_1=0 )
    \end{aligned}
    $}
    \end{center}
    The case $s=0$ follows from consistency and the fact that $Y_0=C_0=D_0=0$ by design:
    \begin{align*}
        \mathbb{P}(Y_{1}^{z,\cor}=1\mid& D_{1}^{z,\cor}=0,L_0=l_0,Z=z,{C}_1=0,\overline{R}_1=1)\\
        =&\mathbb{P}(Y_{1}=1\mid D_{1}=0,L_0=l_0,Z=z,{C}_1=0,\overline{R}_1=1).
    \end{align*}
    Now we study the case $s\geq 1$. We have that:
    \begin{center}
        \rotatebox{90}{$
        \begin{aligned}
        \mathbb{P}&(Y_{s+1}^{z,\cor}=1\mid D_{s+1}^{z,\cor}=Y_{s}^{z,\cor}=0,\overline{L}_s^{z,\cor}=\overline{l}_s,L_0=l_0,Z=z,{C}_1=0,\overline{R}_1=1)\\
        \stackrel{(*)}{=}&\mathbb{P}(Y_{s+1}^{z,\cor}=1\mid D_{s+1}^{z,\cor}=Y_{s}^{z,\cor}=0=Y_1=D_1,\overline{L}_s^{z,\cor}=\overline{l}_s,\overline{L}_1=\overline{l}_1,Z=z,{C}_1=0,\overline{R}_1=1)\\
        =&\frac{\mathbb{P}(Y_{s+1}^{z,\cor}=1 ,D_{s+1}^{z,\cor}=Y_{s}^{z,\cor}=0,\overline{L}_s^{z,\cor}=\overline{l}_s\mid Y_1=D_1=0,\overline{L}_1=\overline{l}_1,Z=z,{C}_1=0,\overline{R}_1=1)}{\mathbb{P}(D_{s+1}^{z,\cor}=Y_{s}^{z,\cor}=0,\overline{L}_s^{z,\cor}=\overline{l}_s\mid Y_1=D_1=0,\overline{L}_1=\overline{l}_1,Z=z,{C}_1=0,\overline{R}_1=1)}\\
        \stackrel{(**)}{=}&\frac{\mathbb{P}(Y_{s+1}^{z,\cor}=1 ,D_{s+1}^{z,\cor}=Y_{s}^{z,\cor}=0,\overline{L}_s^{z,\cor}=\overline{l}_s\mid Y_1=D_1=0,\overline{L}_1=\overline{l}_1,Z=z,{C}_2=0,\overline{R}_2=1)}{\mathbb{P}(D_{s+1}^{z,\cor}=Y_{s}^{z,\cor}=0,\overline{L}_s^{z,\cor}=\overline{l}_s\mid Y_1=D_1=0,\overline{L}_1=\overline{l}_1,Z=z,{C}_2=0,\overline{R}_2=1)}\\
        =&\mathbb{P}(Y_{s+1}^{z,\cor}=1\mid D_{s+1}^{z,\cor}=Y_{s}^{z,\cor}=0=Y_1=D_1,\overline{L}_s^{z,\cor}=\overline{l}_s,\overline{L}_1=\overline{l}_1,Z=z,{C}_2=0,\overline{R}_2=1)\\
        \stackrel{(***)}{=}&\mathbb{P}(Y_{s+1}^{z,\cor}=1\mid D_{s+1}^{z,\cor}=Y_{s}^{z,\cor}=0=Y_2=D_2,\overline{L}_s^{z,\cor}=\overline{l}_s,\overline{L}_2=\overline{l}_2,Z=z,{C}_2=0,\overline{R}_2=1), 
    \end{aligned}
        $}
    \end{center}
    which follows from:
    \begin{itemize}
        \item[$(*)$] Consistency for the case $k=0$ and positivity,
        \item[$(**)$] Exchangeability for $k=1$ and positivity,
        \item[$(***)$] Consistency for $k=1$.
    \end{itemize}
    Iterating these previous steps we conclude that:
    \begin{align*}
        \mathbb{P}(Y_{s+1}^{z,\cor}=1\mid& D_{s+1}^{z,\cor}=Y_{s}^{z,\cor}=0,\overline{L}_s^{z,\cor}=\overline{l}_s,L_0=l_0,Z=z,\\&{C}_1=0,\overline{R}_1=1)\\
        {=}&\mathbb{P}(Y_{s+1}=1\mid D_{s+1}=Y_{s}=0,\overline{L}_s=\overline{l}_s,Z=z,{C}_{s+1}=0,\overline{R}_{s+1}=1), 
    \end{align*}
    as we wanted to show.
\end{proof}
\thmIdentifcationFormula*
\begin{proof}
    Lemma \ref{lemma:Identification1} allows us to change interventional quantities in the four-arm trial to  the two-arm trial, and Lemma \ref{lemma:Identification2} shows  how to change these counterfactual quantities to factual ones. The proof is then straightforward after applying the law of total probability and taking into account all the disjoint cases in which we can observe the target event. Consider the case where $z_Y\neq z_D$. Then:
    \resizebox{\linewidth}{!}{$
        \begin{aligned}
        &\mathbb{P}(Y_{K+1}^{z_Y,z_D,\cor}=1)\\=&\sum_{\overline{l}_K}\mathbb{P}(Y_{K+1}^{z_Y,z_D,\cor}=1,\overline{L}_K^{z_Y,z_D,\cor}=\overline{l}_K)\\
        =&\sum_{j=0}^{K}\sum_{\overline{l}_K}  \mathbb{P}  ( Y_{j+1}^{z_Y,z_D,\cor}=1 \mid D_{j+1}^{z_Y,z_D,\cor}=Y_j^{z_Y,z_D,\cor}=0,  \overline{L}^{z_Y,z_D,\cor}_{j}=\overline{l}_{j})  \\
        &  \prod_{s=0}^{j} \bigg[ \mathbb{P} ( D_{s+1}^{z_Y,z_D,\cor}=0 \mid Y_s^{z_Y,z_D,\cor}=D_{s}^{z_Y,z_D,\cor}=0, \overline{L}_{s}^{z_Y,z_D,\cor} = \overline{l}_{s})  \\
        & \times \mathbb{P} ( L_{Y,s}^{z_Y,z_D,\cor}=l_{Y,s} \mid Y_s^{z_Y,z_D,\cor}=D_{s}^{z_Y,z_D,\cor}=0, L_{D,s}^{z_Y,z_D,\cor} =  l_{D,s}, \overline{L}_{s-1}^{z_Y,z_D,\cor} = \overline{l}_{s-1})  \\
        & \times \mathbb{P} ( L_{D,s}^{z_Y,z_D,\cor}=l_{D,s} \mid Y_s^{z_Y,z_D,\cor}=D_{s}^{z_Y,z_D,\cor}=0, \overline{L}_{s-1}^{z_Y,z_D,\cor} = \overline{l}_{s-1})\\
        & \times \mathbb{P}  ( Y_{s}^{z_Y,z_D,\cor}=0 \mid D_{s}^{z_Y,z_D,\cor}=Y_{s-1}^{z_Y,z_D,\cor}=0,  \overline{L}_{s-1}^{z_Y,z_D,\cor}=\overline{l}_{s-1}) \bigg]\\
        \stackrel{\text{Lemma } \ref{lemma:Identification1}}{=}&\sum_{j=0}^{K}\sum_{\overline{l}_K}  \mathbb{P}  ( Y_{j+1}^{z_Y,\cor}=1 \mid D_{j+1}^{z_Y,\cor}=Y_j^{z_Y,\cor}=0,  \overline{L}^{z_Y,\cor}_{j}=\overline{l}_{j})  \\
        &  \prod_{s=0}^{j} \bigg[ \mathbb{P} ( D_{s+1}^{z_D,\cor}=0 \mid Y_s^{z_D,\cor}=D_{s}^{z_D,\cor}=0, \overline{L}_{s}^{z_D,\cor} = \overline{l}_{s})  \\
        & \times \mathbb{P} ( L_{Y,s}^{z_Y,\cor}=l_{Y,s} \mid Y_s^{z_Y,\cor}=D_{s}^{z_Y,\cor}=0, L_{D,s}^{z_Y,\cor} =  l_{D,s}, \overline{L}_{s-1}^{z_Y,\cor} = \overline{l}_{s-1})  \\
        & \times \mathbb{P} ( L_{D,s}^{z_D,\cor}=l_{D,s} \mid Y_s^{z_D,\cor}=D_{s}^{z_D,\cor}=0, \overline{L}_{s-1}^{z_D,\cor} = \overline{l}_{s-1})\\
        & \times \mathbb{P}  ( Y_{s}^{z_Y,\cor}=0 \mid D_{s}^{z_Y,\cor}=Y_{s-1}^{z_Y,\cor}=0,  \overline{L}_{s-1}^{z_Y,\cor}=\overline{l}_{s-1}) \bigg]\\
        \stackrel{\text{Lemma } \ref{lemma:Identification2}}{=}&\sum_{j=0}^{K}\sum_{\overline{l}_K}  \mathbb{P}  ( Y_{j+1}=1 \mid D_{j+1}=Y_j=C_{j+1}=0,  \overline{L}_{j}=\overline{l}_{j}, Z=z_Y, \overline{R}_{j+1}=1)  \\
        &  \prod_{s=0}^{j} \bigg[ \mathbb{P} ( D_{s+1}=0 \mid Y_s=D_{s}=C_{s+1}=0, \overline{L}_{s} = \overline{l}_{s}, Z=z_D, \overline{R}_{s+1}=1)  \\
        & \times \mathbb{P} ( L_{Y,s}=l_{Y,s} \mid Y_s=D_{s}=C_{s}=0, L_{D,s} =  l_{D,s}, \overline{L}_{s-1} = \overline{l}_{s-1}, Z =z_Y, \overline{R}_{s}=1)  \\
        & \times \mathbb{P} ( L_{D,s}=l_{D,s} \mid Y_s=D_{s}=C_{s}=0, \overline{L}_{s-1} = \overline{l}_{s-1}, Z=z_D, \overline{R}_{s}=1)\\
        & \times \mathbb{P}  ( Y_{s}=0 \mid D_{s}=Y_{s-1}=C_{s}=0,  \overline{L}_{s-1}=\overline{l}_{s-1}, Z=z_Y, \overline{R}_{s}=1) \bigg].
    \end{aligned}
    $}\vspace{1em}\\
    The proof in the case $z_Y=z_D$ is follows the same steps as the one just presented, but now Lemma \ref{lemma:Identification1} is not needed as the third equality follows from the generalized decomposition assumption. 
\end{proof}

\subsection{Proofs of the results in Section \ref{sec:Estimation}}\label{sec:AppProofEstimation}
\thmWeigthedFormulas*
\begin{proof}
 Firstly consider     $$W^*_{(C,R),s} (z) = \frac{1 }{ \prod\limits_{j=0}^{s}  \mathbb{P}(C_{j+1}=0, R_{j+1}=1 \mid C_{j}=D_{j}=Y_j=0, \overline{L}_{j},  Z=z, \overline{R}_{j}=1) },$$ so that $ {W}_{(C,R),s}(z) = I(C_{s+1} =0)I( \overline{R}_{s+1}=1)W^*_{(C,R),s}(z)$. For $s=0,\ldots,K$ we have that: 
    \begin{align*}
        \mathbb{E} &[ W_{(C,R),s}(z_Y) W_{D,s}  W_{L_{D},s}  (1-Y_s)(1-D_{s+1}) Y_{s+1} \mid Z=z_Y]\\
        =&\mathbb{E}[ W^*_{(C,R),s}(z_Y) W_{D,s}  W_{L_{D},s} (1-C_{s+1})\overline{R}_{s+1} (1-Y_s)(1-D_{s+1}) Y_{s+1} \mid Z=z_Y]\\
        =&\sum_{\overline{l}_K}\mathbb{P}(Y_{s+1}=1, Y_s=D_{s+1}=C_{s+1}=0, \overline{R}_{s+1}=1, \overline{L}_K=\overline{l}_K\mid Z=z_Y)\times\\& \qquad W^*_{(C,R),s}(z_Y) W_{D,s}  W_{L_{D},s}\\
        =&\sum_{\overline{l}_K}\bigg[\mathbb{P}(Y_{s+1}=1\mid Y_s=D_{s+1}=C_{s+1}=0, \overline{R}_{s+1}=1, \overline{L}_K=\overline{l}_K, Z=z_Y)\times\\
        &\prod_{j=0}^s \left\{\mathbb{P}(D_{j+1}=0\mid Y_j=D_{j}=C_{j+1}=0, \overline{R}_{j+1}=1, \overline{L}_j=\overline{l}_j, Z=z_Y)\times\right.\\
        &\mathbb{P}(C_{j+1}=0, R_{j+1}=1\mid Y_j=D_{j}=C_{j}=0, \overline{R}_{j}=1, \overline{L}_j=\overline{l}_j, Z=z_Y)\times\\
        &\mathbb{P}(L_{Y,j}=l_{Y,j}\mid Y_j=D_{j}=C_{j}=0, \overline{R}_{j}=1, L_{D,j}=l_{D,j},\overline{L}_{j-1}=\overline{l}_{j-1}, Z=z_Y)\times\\
        &\mathbb{P}(L_{D,j}=l_{D,j}\mid Y_j=D_{j}=C_{j}=0, \overline{R}_{j}=1, \overline{L}_{j-1}=\overline{l}_{j-1}, Z=z_Y)\times\\
        &\left.\mathbb{P}(Y_{j}=0\mid Y_{j-1}=D_{j}=C_{j}=0, \overline{R}_{j}=1, \overline{L}_{j-1}=\overline{l}_{j-1}, Z=z_Y)\right\}\times\\&W^*_{(C,R),s}(z_Y) W_{D,s}  W_{L_{D},s}\bigg]\\
        =&\sum_{\overline{l}_K}\mathbb{P}(Y_{s+1}=1\mid Y_s=D_{s+1}=C_{s+1}=0, \overline{R}_{s+1}=1, \overline{L}_K=\overline{l}_K, Z=z_Y)\times\\
        &\prod_{j=0}^s \left\{\mathbb{P}(D_{j+1}=0\mid Y_j=D_{j}=C_{j+1}=0, \overline{R}_{j+1}=1, \overline{L}_j=\overline{l}_j, Z=z_D)\times\right.\\
        &\mathbb{P}(L_{Y,j}=l_{Y,j}\mid Y_j=D_{j}=C_{j}=0, \overline{R}_{j}=1, L_{D,j}=l_{D,j},\overline{L}_{j-1}=\overline{l}_{j-1}, Z=z_Y)\times\\
        &\mathbb{P}(L_{D,j}=l_{D,j}\mid Y_j=D_{j}=C_{j}=0, \overline{R}_{j}=1, \overline{L}_{j-1}=\overline{l}_{j-1}, Z=z_D)\times\\
        &\left.\mathbb{P}(Y_{j}=0\mid Y_{j-1}=D_{j}=C_{j}=0, \overline{R}_{j}=1, \overline{L}_{j-1}=\overline{l}_{j-1}, Z=z_Y)\right\}.
    \end{align*}
    If we now sum this last expression over $s=0,\ldots,K$, we obtain precisely the identification formula \eqref{eq:IdentificationFormula}. This concludes the proof.
\end{proof}

\section{Constructing a DAG under the strategy-centered encoding}\label{sec:AppAlgoDrawDags}
As we mentioned in Section \ref{sec:AdherenceEncodings}, we take the treatment-centered encoding as primitive. Given a DAG $\mathcal{G}$ under this encoding, we have introduced a procedure to construct a corresponding DAG $\mathcal{G}^*$ under the strategy-centered encoding, reflecting the definitional relation between both encodings. We present this procedure in full generality in Algorithm \ref{alg:DagDrawing}. We can also see from this algorithm what we anticipated in the main text: $\mathcal{G}^*$ has fewer nodes than $\mathcal{G}$ when $k>2$, but more edges, except in trivial cases where treatments have no effects. Nevertheless, the identification conditions can be expressed more concisely in $\mathcal{G}^*$, as it will become evident in Appendix \ref{sec:EquivalenceAdherenceEncodings}.

\begin{algorithm}
\caption{Algorithm to construct a DAG $\mathcal{G}^*$ under the strategy-centered encoding based on a DAG $\mathcal{G}$ under the primitive, treatment-centered, encoding $V$.}
\label{alg:DagDrawing}
\KwIn{A extended causal DAG $\mathcal{G}$ under the treatment-centered encoding}
Define $\Gamma\colon V\to V^*$ as $\Gamma(A_{W,k})=R_k$ for $W \in \{D, Y\}$ and $k\in\{1,\ldots,K+1\}$, and as the identity otherwise\; 
Define $\mathcal{G}^*$ as the empty DAG over $V^*$\;
\For{$W \in \{D, Y\}$}{
    \For{$k \gets 1$ \KwTo $K+1$}{
        \ForEach{$X$ a child of $A_{W,k}$ in $\mathcal{G}$}{
            Add an edge $Z_W \to \Gamma(X)$ in $\mathcal{G}^*$\;
            \If{$\Gamma(X)\neq\Gamma(A_{W,k})$}{
                Add an edge $R_k \to \Gamma(X)$ in $\mathcal{G}^*$\;
            }
        }
        \ForEach{$X$ a parent of $A_{W,k}$ in $\mathcal{G}$}{
            \If{the edge $\Gamma(X) \to R_k$ does not exist in $\mathcal{G}^*$}{
                Add an edge $\Gamma(X) \to R_k$ in $\mathcal{G}^*$\;
            }
        }
        \If{the edge $Z_W \to R_k$ does not exist in $\mathcal{G}^*$}{
            Add the edge $Z_W \to R_k$ to $\mathcal{G}^*$\;
        }
    }
}
Add all edges in $\mathcal{G}_{V \backslash \{A_Y, A_D\}}$\footnotemark[1] to $\mathcal{G}^*$\;
\KwOut{DAG $\mathcal{G}^*$}
\end{algorithm}
\footnotetext[1]{For a DAG $\mathcal{G}$ over nodes $V$ and $B\subset V$ we define the sub-DAG $\mathcal{G}_B$ as the DAG over $B$ obtained by retaining all edged in $\mathcal{G}$ between nodes in $B$.}

Furthermore, we argue that Algorithm \ref{alg:DagDrawing} is the appropriate way to construct the extended causal DAG under the strategy-centered encoding. As we will see in Appendix \ref{sec:AppEquivalenceIdentificationConditions}, this procedure yields two causal DAGs which, under a faithfulness assumption, make the dismissible component conditions equivalent under both encodings.

\section{Equivalence of the strategy- and treatment-centered adherence encodings towards identification}\label{sec:EquivalenceAdherenceEncodings}
\subsection{Identification under the treatment-centered encoding}
We have seen in Section \ref{sec:AdherenceEncodings}  how  the perfect adherence intervention can be described equivalently using both encodings, $\{Z=z, \overline{R}_{K+1}=1\}\Leftrightarrow \{\overline{A}_{K+1}=a\}$. Thus,  it could be expected that both allow for suitable conditions under which the quantity of interest can be identified, and that the identification formulas should differ only on an adherence reparametrization. This will be the focus of this section.

Consider then the observed data structure introduced in Section \ref{sec:ObservedDataStructure} but now under the treatment-centered encoding.
As explained in Section \ref{sec:TreatmentDecomposition}, in a separable effects context under this time-varying setting, the treatment-centered encoding requires $A_k$ to be decomposed into ${A_{D,k},A_{Y,k}}$ at every time point following the hypothetical treatment decomposition. In contrast, the strategy-centered encoding requires separation only of $Z$, but not of $R_k$.
In this context, the sustained separable effects (Definitions \ref{def:AYSeparableEffect}-\ref{def:ADSeparableEffect}) are defined analogously but under interventions which set $\{\overline{A}_{Y,K+1}=a_Y,\ \overline{A}_{D,K+1}=a_D,\ C_{K+1}=0\}$, which represents no loss of follow-up and perfect adherence to the treatment strategy $Z_Y=a_Y,\ Z_D=a_D$. To show the desired equivalence between the two considered adherence encodings, we first show that under conditions similar to those required in Theorem \ref{thm:IdentifcationFormula} we can identify $\mathbb{P}(Y_{K+1}^{\overline{A}_{Y,K}=a_Y,\overline{A}_{D,K}=a_D,\overline{c}=0}=1)$ in the two-arm trial. We first begin by adapting the identifiability assumptions from Section \ref{sec:Identification} to the treatment-centered adherence encoding. 
\begin{assumption}[Identifiability conditions under the treatment-centered encoding]\label{ass:EquivalenceIdentifiability}
For $a\in\{0,1\}$ and  $k\in\{0,\ldots,K\}$
\begin{align}\label{eq:ExchangeabilityEquivalence}
    \underline{Y}_{k+1}^{\overline{A}_{K+1}=a,\overline{c}=0}, \underline{D}_{k+1}^{\overline{A}_{K+1}=a,\overline{c}=0},& \underline{L}_{k+1}^{\overline{A}_{K+1}=a,\overline{c}=0}\independent \\&C_{k+1},A_{k+1}\mid Y_k=D_k=0,\overline{L}_k, \overline{C}_k=0, \overline{A}_{k}=a,\nonumber
\end{align}
\begin{align}\label{eq:PositivityEquivalenceDCC}
    &\mathbb{P}(\overline{L}_{k}=\overline{l}_{k},Y_k=D_{k+1}=C_{k+1}=0,A_1=\ldots=A_{k+1})>0\Rightarrow\\\nonumber  &\qquad\mathbb{P}(\overline{A}_{k+1}=a\mid \overline{L}_{k}=\overline{l}_{k},Y_k=D_{k+1}=C_{k+1}=0)>0
\end{align}
\begin{align}\label{eq:PositivityEquivalence}
    &\mathbb{P}(\overline{L}_{k}=\overline{l}_{k}, \overline{A}_{k}=a, Y_k=D_{k}=C_{k}=0)>0\Rightarrow\\&\qquad \mathbb{P}(A_{k+1}=a,C_{k+1}=0|\overline{L}_{k}=\overline{l}_{k}, \overline{A}_{k}=a,Y_k=D_{k}=C_{k}=0)>0,\nonumber
\end{align}
 and if $C_{k+1}=0$ and $\overline{A}_{k+1}=a$ then
\begin{equation}\label{eq:ConsistencyEquivalence}
    \overline{Y}_{k+1}^{\overline{A}_{K+1}=a,\overline{c}=0}=\overline{Y}_{k+1},\ \overline{D}_{k+1}^{\overline{A}_{K+1}=a,\overline{c}=0}=\overline{D}_{k+1} \text{ and } \overline{L}_{k+1}^{\overline{A}_{K+1}=a,\overline{c}=0}=\overline{L}_{k+1}.
\end{equation}
\end{assumption}
Equations \eqref{eq:ExchangeabilityEquivalence}-\eqref{eq:ConsistencyEquivalence} are  the exchangeability, positivity and consistency conditions under the treatment-centered encoding. As with Assumption \ref{ass:Exchangeability}, the conditional exchangeability conditions \eqref{eq:ExchangeabilityEquivalence} could be expanded into $6(K+1)$  conditional independencies which are slightly weaker than the ones we present here.

Remaining are the dismissible component conditions. As we have done with Assumption \ref{ass:EquivalenceIdentifiability}, we could consider the dismissible component conditions under the strategy-centered encoding (Assumption \ref{ass:DCC}) and translate them to the treatment-centered encoding using the definitional relation between these variables. This procedure yields the following assumption.

\begin{assumption}
Let the time varying covariates be expressed as two components: $L_k=(L_{D,k},L_{Y,k})$. Furthermore, let $G$ refer to  the  trial where $A_{Y,1}$ and $A_{D,1}$ are randomly assigned, but the causal structure between variables is otherwise identical to the observed data. We use the notation $X(G)$ to indicate  a variable $X$ in this trial. Then, for all $k \in \{0,\dots, K\}$:
\begin{align}
     Y_{k+1}^{\underline{A}_{Y,2}=A_{Y,1},\underline{A}_{D,2}=A_{D,1},\overline{c}=0}(G) \independent A_{D,1}(G) \mid& \{A_{Y,1}(G), D_{k+1}^{\underline{A}_{Y,2}=A_{Y,1},\underline{A}_{D,2}=A_{D,1},\overline{c}=0} (G)=0, \\ &\quad Y_k^{\underline{A}_{Y,2}=A_{Y,1},\underline{A}_{D,2}=A_{D,1},\overline{c}=0}(G)=0, \nonumber\\ &\quad\overline{L}_k^{\underline{A}_{Y,2}=A_{Y,1},\underline{A}_{D,2}=A_{D,1},\overline{c}=0}(G)\}, \nonumber\\
      D_{k+1}^{\underline{A}_{Y,2}=A_{Y,1},\underline{A}_{D,2}=A_{D,1},\overline{c}=0}(G) \independent A_{Y,1}(G) \mid& \{A_{D,1}(G), D_{k}^{\underline{A}_{Y,2}=A_{Y,1},\underline{A}_{D,2}=A_{D,1},\overline{c}=0} (G)=0, \nonumber\\ &\quad  Y_k^{\underline{A}_{Y,2}=A_{Y,1},\underline{A}_{D,2}=A_{D,1},\overline{c}=0}(G)=0, \nonumber\\ &\quad \overline{L}_k^{\underline{A}_{Y,2}=A_{Y,1},\underline{A}_{D,2}=A_{D,1},\overline{c}=0}(G)\},\nonumber\\
     L_{Y,k}^{\underline{A}_{Y,2}=A_{Y,1},\underline{A}_{D,2}=A_{D,1},\overline{c}=0}(G) \independent A_{D,1}(G) \mid& \{A_{Y,1}(G), D_{k}^{\underline{A}_{Y,2}=A_{Y,1},\underline{A}_{D,2}=A_{D,1},\overline{c}=0} (G)=0,\nonumber\\ &\quad Y_k^{\underline{A}_{Y,2}=A_{Y,1},\underline{A}_{D,2}=A_{D,1},\overline{c}=0}(G)=0, \nonumber\\ & \quad L_{D,k}^{\underline{A}_{Y,2}=A_{Y,1},\underline{A}_{D,2}=A_{D,1},\overline{c}=0}(G),\nonumber\\ & \quad \overline{L}_{k-1}^{\underline{A}_{Y,2}=A_{Y,1},\underline{A}_{D,2}=A_{D,1},\overline{c}=0}(G)\}, \nonumber\\
     L_{D,k}^{\underline{A}_{Y,2}=A_{Y,1},\underline{A}_{D,2}=A_{D,1},\overline{c}=0}(G) \independent \{A_{Y,1}(G) \mid& A_{D,1}(G), D_{k}^{\underline{A}_{Y,2}=A_{Y,1},\underline{A}_{D,2}=A_{D,1},\overline{c}=0} (G)=0, \nonumber\\ &\quad Y_k^{\underline{A}_{Y,2}=A_{Y,1},\underline{A}_{D,2}=A_{D,1},\overline{c}=0}(G)=0, \nonumber\\ &\quad  \overline{L}_{k-1}^{\underline{A}_{Y,2}=A_{Y,1},\underline{A}_{D,2}=A_{D,1},\overline{c}=0}(G)\}.  \nonumber
\end{align}     
\end{assumption}

The main issue with these conditions is that they are read off a SWIG where fixed nodes are random, as their value comes from the realization of a random node. To avoid this potential complication we go back to basic principles, as postulate a different set of dismissible component conditions based on the fact that $\overline{A}_{K+1}$ is split at every time point. This alternative conditions will still prove sufficient for identification.

\begin{assumption}\label{ass:EquivalenceDCC}
    Let the time varying covariates be expressed as two components: $L_k=(L_{Y,k},L_{D,k})$, and let $H$ refer to  the  trial where  $A_Y$ and $A_D$ are observed separately at every time point. We use the notation $X(H)$ to indicate  a variable $X$ in this trial. Then for all $k \in \{0,\dots, K\}$:
\begin{align}\label{eq:DCCsEquivalence}\allowdisplaybreaks
    Y_{k+1}^{\overline{c}=0}(H) &\independent \overline{A}_{D,k+1}^{\overline{c}=0}(H)\mid \overline{A}_{Y,k+1}^{\overline{c}=0}(H), D_{k+1}^{\overline{c}=0} (H)=Y_k^{\overline{c}=0}(H)=0, \overline{L}_k^{\overline{c}=0}(H), \\
    D_{k+1}^{\overline{c}=0}(H) &\independent \overline{A}_{Y,k+1}^{\overline{c}=0}(H) \mid\overline{A}_{D,k+1}^{\overline{c}=0}(H),\nonumber D_{k}^{\overline{c}=0} (H)=Y_k^{\overline{c}=0}(H)=0, \overline{L}_k^{\overline{c}=0}(H),\nonumber\\
    L_{Y,k}^{\overline{c}=0}(H) &\independent \overline{A}_{D,k}^{\overline{c}=0}(H) \mid \overline{A}_{Y,k}^{\overline{c}=0}(H),\nonumber D_{k}^{\overline{c}=0} (H)=Y_k^{\overline{c}=0}(H)=0, L_{D,k}^{\overline{c}=0}(H), \overline{L}_{k-1}^{\overline{c}=0}(H), \nonumber\\
    L_{D,k}^{\overline{c}=0}(H) &\independent \overline{A}_{Y,k}^{\overline{c}=0}(H) \mid\overline{A}_{D,k}^{\overline{c}=0}(H),\nonumber D_{k}^{\overline{c}=0} (H)=Y_k^{\overline{c}=0}(H)=0,  \overline{L}_{k-1}^{\overline{c}=0}(H).  \nonumber
\end{align} 
\end{assumption}

Assumptions \ref{ass:EquivalenceIdentifiability} and \ref{ass:EquivalenceDCC} will allow us to identify the target quantity under the treatment-centered adherence encoding, as shown in the next theorem.

\begin{restatable}{theorem}{thmEquivalence}\label{thm:Equivalence}
Consider a FFRCISTG model with the treatment-centered  encoding. Assume Assumptions \ref{ass:EquivalenceIdentifiability} and \ref{ass:EquivalenceDCC} hold. Then, the counterfactual probability of observing the event of interest is identified by
\begin{align}\label{eq:Equivalence}
        &\mathbb{P}(Y_{K+1}^{\overline{A}_{Y,K+1}=a_Y,\overline{A}_{D,K+1}=a_D,\overline{C}_{K+1}=0}=1)\\
    \nonumber=&\sum_{j=0}^{K}\sum_{\overline{l}_K}  \mathbb{P}  ( Y_{j+1}=1 \mid D_{j+1}=Y_j=C_{j+1}=0,  \overline{L}_{j}=\overline{l}_{j}, \overline{A}_{j+1}=a_Y)  \\
        &  \prod_{s=0}^{j} \bigg[ \mathbb{P} ( D_{s+1}=0 \mid Y_s=D_{s}=C_{s+1}=0, \overline{L}_{s} = \overline{l}_{s}, \overline{A}_{s+1}=a_D) \nonumber \\
        & \times \mathbb{P} ( L_{Y,s}=l_{Y,s} \mid Y_s=D_{s}=C_{s}=0, L_{D,s} =  l_{D,s}, \overline{L}_{s-1} = \overline{l}_{s-1}, \overline{A}_{s}=a_Y)  \nonumber\\
        & \times \mathbb{P} ( L_{D,s}=l_{D,s} \mid Y_s=D_{s}=C_{s}=0, \overline{L}_{s-1} = \overline{l}_{s-1},\overline{A}_{s}=a_D)\nonumber\\
        & \times \mathbb{P}  ( Y_{s}=0 \mid D_{s}=Y_{s-1}=C_{s}=0,  \overline{L}_{s-1}=\overline{l}_{s-1}, \overline{A}_{s}=a_Y) \bigg]\nonumber .
\end{align}
\end{restatable}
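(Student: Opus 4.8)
The plan is to mirror the proof of Theorem~\ref{thm:IdentifcationFormula} (equivalently, of \cite[App. B]{stensrud2021generalized}), carrying out the same two-step reduction but with treatment-centered bookkeeping: the intervention $\{Z=z,\overline{R}_{K+1}=1\}$ is replaced throughout by $\{\overline{A}_{K+1}=a\}$, Assumption~\ref{ass:DCC} is replaced by Assumption~\ref{ass:EquivalenceDCC}, and Assumptions~\ref{ass:Exchangeability}--\ref{ass:Positivity} are replaced by Assumption~\ref{ass:EquivalenceIdentifiability}. First I would expand $\mathbb{P}(Y_{K+1}^{\overline{A}_{Y,K+1}=a_Y,\overline{A}_{D,K+1}=a_D,\overline{c}=0}=1)$ by the law of total probability over the disjoint intervals in which $Y$ first takes value $1$, obtaining the same product of interventional hazards and conditional covariate densities as in the first display of the proof of Theorem~\ref{thm:IdentifcationFormula}, now evaluated under the joint treatment-centered intervention.

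The first step is the analogue of Lemma~\ref{lemma:Identification1}. Using the dismissible component conditions \eqref{eq:DCCsEquivalence} in the split-treatment trial $H$, together with the consequence of the FFRCISTG that the time-$s$ variables $Y_{s+1}$ and $L_s$ are unaffected by treatment components assigned after time $s$, I would show that each factor in the product depends on only one of the two treatment components and only through the current time; for instance, the $Y_{s+1}$-hazard given $\{D_{s+1}=Y_s=0,\overline{L}_s=\overline{l}_s\}$ under the joint intervention equals the corresponding hazard under the intervention that fixes only $\overline{A}_{Y,s+1}=a_Y$ together with $\overline{c}=0$, and analogously for the $D_{s+1}$, $L_{Y,s}$ and $L_{D,s}$ factors (the covariate factors conditioning additionally on the relevant partition component, as in Lemma~\ref{lemma:Identification1}). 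This is the argument behind \cite[Lemma 1, App. B]{stensrud2021generalized}, carried out with the censoring intervention present and with the separate-treatment process in place of the four-arm one.

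The second step is the analogue of Lemma~\ref{lemma:Identification2}. Using sequential exchangeability \eqref{eq:ExchangeabilityEquivalence}, consistency \eqref{eq:ConsistencyEquivalence}, positivity \eqref{eq:PositivityEquivalence}--\eqref{eq:PositivityEquivalenceDCC}, and the deterministic identity $A_{Y,k}=A_{D,k}=A_k$ in the two-arm data (so that $\{\overline{A}_{Y,s+1}=a_Y\}$ and $\{\overline{A}_{s+1}=a_Y\}$ are the same event), I would convert each single-component-intervention factor into a factual conditional probability by the same iteration as in the proof of Lemma~\ref{lemma:Identification2}: anchor at $Y_0=D_0=C_0=0$, then alternately apply consistency (to insert a factual conditioning event at the current time) and exchangeability (to advance the censoring/treatment intervention one step), with positivity ensuring each conditional probability is well defined. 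This produces, for example, $\mathbb{P}(Y_{j+1}=1\mid D_{j+1}=Y_j=C_{j+1}=0,\overline{L}_j=\overline{l}_j,\overline{A}_{j+1}=a_Y)$ for the $Y$-factor and the corresponding expressions for the other three. Substituting the two steps into the total-probability expansion and collecting terms as in the final display of the proof of Theorem~\ref{thm:IdentifcationFormula} gives \eqref{eq:Equivalence}; when $a_Y=a_D$ the decomposition is vacuous and only the second step is needed.

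The main obstacle is the first step --- in particular, keeping straight the two relevant data structures (the observed two-arm trial, where $A_{Y,k}=A_{D,k}$, and the split-treatment trial $H$ of Assumption~\ref{ass:EquivalenceDCC}, whose dismissible component conditions are stated with only a censoring intervention applied rather than a treatment intervention) and checking that \eqref{eq:DCCsEquivalence} genuinely licenses dropping $\overline{A}_{D,s+1}$ from the $Y_{s+1}$-hazard once the joint treatment intervention has been imposed. The deterministic relation $A_{Y,k}=A_{D,k}=A_k$ should enter only in the second step, where factual conditioning events of the form $\{A_1=\cdots=A_{s+1}=a_Y\}$ appear; the constant-treatment-path restriction already built into the positivity condition \eqref{eq:PositivityEquivalenceDCC} is precisely what makes that substitution legitimate.
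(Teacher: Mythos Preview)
Your proposal is correct and follows essentially the same route as the paper's own proof: expand $\mathbb{P}(Y_{K+1}^{a_Y,a_D,\overline{c}=0}=1)$ by total probability over first-event times, then apply the treatment-centered analogue of Lemma~\ref{lemma:Identification1} (via Assumption~\ref{ass:EquivalenceDCC}) to reduce each factor to a single-component intervention, and finally the analogue of Lemma~\ref{lemma:Identification2} (via Assumption~\ref{ass:EquivalenceIdentifiability}) to replace counterfactual by factual conditionals. Your remark that the FFRCISTG temporal structure lets time-$s$ variables ignore future treatment components is exactly what the paper calls ``taking into account the time structure over $A_Y$ and $A_D$,'' and your caution about keeping the two-arm data and the split-treatment trial $H$ separate until the second step is well placed.
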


\begin{proof}
   This proof follows identical steps to the proof of Theorem \ref{thm:IdentifcationFormula} given in Appendix section  \ref{sec:AppProofIdentification}. Firstly, the modified dismissible component conditions given in Equation \eqref{eq:DCCsEquivalence} imply that for $a_Y,a_D\in\{0,1\}$:
    \begin{align*}
        \mathbb{P}&(Y_{s+1}^{a_Y,a_D,\overline{c}=0}=1\mid D_{s+1}^{a_Y,a_D,\overline{c}=0}=Y_{s}^{a_Y,a_D,\overline{c}=0}=0,\overline{L}_s^{a_Y,a_D,\overline{c}=0}=\overline{l}_s)\\
        =&\mathbb{P}(Y_{s+1}^{a_Y,\overline{c}=0}=1\mid D_{s+1}^{a_Y,\overline{c}=0}=Y_{s}^{a_Y,\overline{c}=0}=0,\overline{L}_s^{a_Y,\overline{c}=0}=\overline{l}_s),\\
        \mathbb{P}&(D_{s+1}^{a_Y,a_D,\overline{c}=0}=1\mid D_{s}^{a_Y,a_D,\overline{c}=0}=Y_{s}^{a_Y,a_D,\overline{c}=0}=0,\overline{L}_s^{a_Y,a_D,\overline{c}=0}=\overline{l}_s)\\
        =&\mathbb{P}(D_{s+1}^{a_D,\overline{c}=0}=1\mid D_{s}^{a_D,\overline{c}=0}=Y_{s}^{a_D,\overline{c}=0}=0,\overline{L}_s^{a_D,\overline{c}=0}=\overline{l}_s),\\
        \mathbb{P}&(L_{Y,s}^{a_Y,a_D,\overline{c}=0}=l_{Y,s}\mid D_{s}^{a_Y,a_D,\overline{c}=0}=Y_{s}^{a_Y,a_D,\overline{c}=0}=0,{L}_{D,s}^{a_Y,a_D,\overline{c}=0}={l}_{D,s},\overline{L}_{s-1}^{a_Y,a_D,\overline{c}=0}=\overline{l}_{s-1})\\
        =&\mathbb{P}(L_{Y,s}^{a_Y,\overline{c}=0}=l_{Y,s}\mid D_{s}^{a_Y,\overline{c}=0}=Y_{s}^{a_Y,\overline{c}=0}=0,{L}_{D,s}^{a_Y,\overline{c}=0}={l}_{D,s},\overline{L}_{s-1}^{a_Y,\overline{c}=0}=\overline{l}_{s-1}),\\
        \mathbb{P}&(L_{D,s}^{a_Y,a_D,\overline{c}=0}=l_{D,s}\mid D_{s}^{a_Y,a_D,\overline{c}=0}=Y_{s}^{a_Y,a_D,\overline{c}=0}=0,\overline{L}_{s-1}^{a_Y,a_D,\overline{c}=0}=\overline{l}_{s-1})\\
        =&\mathbb{P}(L_{D,s}^{a_D,\overline{c}=0}=l_{D,s}\mid D_{s}^{a_D,\overline{c}=0}=Y_{s}^{a_D,\overline{c}=0}=0,\overline{L}_{s-1}^{a_D,\overline{c}=0}=\overline{l}_{s-1}), 
    \end{align*}
    where the intervention notation $a_Y,a_D,\overline{c}=0$ denotes $\overline{A}_Y=a_Y,\overline{A}_D=a_D,\overline{c}=0$, while $a,\overline{c}=0$ stands for $\overline{A}=a,\overline{c}=0$.   The proof of these equalities follows exactly the steps of the proof of \cite[Lemma 1, App. B]{stensrud2021generalized}, with the only difference of taking into account the time structure over $A_Y$ and $A_D$.    Analogously,  conditions \eqref{eq:ExchangeabilityEquivalence}-\eqref{eq:ConsistencyEquivalence} imply that for $s=0,\ldots,K$ and $z\in\{0,1\}$ we have that:\\
    
    \resizebox{\linewidth}{!}{$\begin{aligned}
        \mathbb{P}(&Y_{s+1}^{\overline{A}=a,\overline{c}=0}=1\mid D_{s+1}^{\overline{A}=a,\overline{c}=0}=Y_{s}^{\overline{A}=a,\overline{c}=0}=0,\overline{L}_s^{\overline{A}=a,\overline{c}=0}=\overline{l}_s)\\&=\mathbb{P}(Y_{s+1}=1\mid C_{s+1}=D_{s+1}=Y_s=0,\overline{L}_s=\overline{l}_s,\overline{A}_{s+1}=a),\\
        \mathbb{P}(&D_{s+1}^{\overline{A}=a,\overline{c}=0}=1\mid D_{s}^{\overline{A}=a,\overline{c}=0}=Y_{s}^{\overline{A}=a,\overline{c}=0}=0,\overline{L}_s^{\overline{A}=a,\overline{c}=0}=\overline{l}_s)\\&=\mathbb{P}(D_{s+1}=1\mid C_{s+1}=D_{s}=Y_s=0,\overline{L}_s=\overline{l}_s,\overline{A}_{s+1}=a),\\
        \mathbb{P}(&L_{Y,s}^{\overline{A}=a,\overline{c}=0}=l_{Y,s}\mid D_{s}^{\overline{A}=a,\overline{c}=0}=Y_{s}^{\overline{A}=a,\overline{c}=0}=0,\overline{L}_{s-1}^{\overline{A}=a,\overline{c}=0}=\overline{l}_{s-1},{L}_{D,s}^{\overline{A}=a,\overline{c}=0}={l}_{D,s})\\&=\mathbb{P}(L_{Y,s}=l_{Y,s}\mid C_{s}=D_{s}=Y_s=0,\overline{L}_{s-1}=\overline{l}_{s-1},{L}_{D,s}={l}_{D,s},\overline{A}_{s}=a),\\
        \mathbb{P}(&L_{D,s}^{\overline{A}=a,\overline{c}=0}=l_{D,s}\mid D_{s}^{\overline{A}=a,\overline{c}=0}=Y_{s}^{\overline{A}=a,\overline{c}=0}=0,\overline{L}_{s-1}^{\overline{A}=a,\overline{c}=0}=\overline{l}_{s-1})\\&=\mathbb{P}(L_{D,s}=l_{D,s}\mid C_{s}=D_{s}=Y_s=0,\overline{L}_{s-1}=\overline{l}_{s-1},\overline{A}_{s}=a).
    \end{aligned}
    $}\vspace{1em}
    The proof of these equalities is analogous to the proof of Lemma \ref{lemma:Identification2} in Appendix \ref{sec:AppProofIdentification}. At this point, the identification formula under the treatment-centered adherence encoding \eqref{eq:Equivalence} follows as: \\
    \resizebox{\linewidth}{!}{$ 
    \begin{aligned}\allowdisplaybreaks
        &\mathbb{P}(Y_{K+1}^{a_Y,a_D,\overline{c}=0}=1)\\=&\sum_{\overline{l}_K}\mathbb{P}(Y_{K+1}^{a_Y,a_D,\overline{c}=0}=1,\overline{L}_K^{a_Y,a_D,\overline{c}=0}=\overline{l}_K)\\
        =&\sum_{j=0}^{K}\sum_{\overline{l}_K}  \mathbb{P}  ( Y_{j+1}^{a_Y,a_D,\overline{c}=0}=1 \mid D_{j+1}^{a_Y,a_D,\overline{c}=0}=Y_j^{a_Y,a_D,\overline{c}=0}=0,  \overline{L}^{a_Y,a_D,\overline{c}=0}_{j}=\overline{l}_{j})  \\
        &  \prod_{s=0}^{j} \bigg[ \mathbb{P} ( D_{s+1}^{a_Y,a_D,\overline{c}=0}=0 \mid Y_s^{a_Y,a_D,\overline{c}=0}=D_{s}^{a_Y,a_D,\overline{c}=0}=0, \overline{L}_{s}^{a_Y,a_D,\overline{c}=0} = \overline{l}_{s})  \\
        & \times \mathbb{P} ( L_{Y,s}^{a_Y,a_D,\overline{c}=0}=l_{Y,s} \mid Y_s^{a_Y,a_D,\overline{c}=0}=D_{s}^{a_Y,a_D,\overline{c}=0}=0, L_{D,s}^{a_Y,a_D,\overline{c}=0} =  l_{D,s}, \overline{L}_{s-1}^{a_Y,a_D,\overline{c}=0} = \overline{l}_{s-1})  \\
        & \times \mathbb{P} ( L_{D,s}^{a_Y,a_D,\overline{c}=0}=l_{D,s} \mid Y_s^{a_Y,a_D,\overline{c}=0}=D_{s}^{a_Y,a_D,\overline{c}=0}=0, \overline{L}_{s-1}^{a_Y,a_D,\overline{c}=0} = \overline{l}_{s-1})\\
        & \times \mathbb{P}  ( Y_{s}^{a_Y,a_D,\overline{c}=0}=0 \mid D_{s}^{a_Y,a_D,\overline{c}=0}=Y_{s-1}^{a_Y,a_D,\overline{c}=0}=0,  \overline{L}_{s-1}^{a_Y,a_D,\overline{c}=0}=\overline{l}_{s-1}) \bigg]\\
        =&\sum_{j=0}^{K}\sum_{\overline{l}_K}  \mathbb{P}  ( Y_{j+1}^{a_Y,\overline{c}=0}=1 \mid D_{j+1}^{a_Y,\overline{c}=0}=Y_j^{a_Y,\overline{c}=0}=0,  \overline{L}^{a_Y,\overline{c}=0}_{j}=\overline{l}_{j})  \\
        &  \prod_{s=0}^{j} \bigg[ \mathbb{P} ( D_{s+1}^{a_D,\overline{c}=0}=0 \mid Y_s^{a_D,\overline{c}=0}=D_{s}^{a_D,\overline{c}=0}=0, \overline{L}_{s}^{a_D,\overline{c}=0} = \overline{l}_{s})  \\
        & \times \mathbb{P} ( L_{Y,s}^{a_Y,\overline{c}=0}=l_{Y,s} \mid Y_s^{a_Y,\overline{c}=0}=D_{s}^{a_Y,\overline{c}=0}=0, L_{D,s}^{a_Y,\overline{c}=0} =  l_{D,s}, \overline{L}_{s-1}^{a_Y,\overline{c}=0} = \overline{l}_{s-1})  \\
        & \times \mathbb{P} ( L_{D,s}^{a_D,\overline{c}=0}=l_{D,s} \mid Y_s^{a_D,\overline{c}=0}=D_{s}^{a_D,\overline{c}=0}=0, \overline{L}_{s-1}^{a_D,\overline{c}=0} = \overline{l}_{s-1})\\
        & \times \mathbb{P}  ( Y_{s}^{a_Y,\overline{c}=0}=0 \mid D_{s}^{a_Y,\overline{c}=0}=Y_{s-1}^{a_Y,\overline{c}=0}=0,  \overline{L}_{s-1}^{a_Y,\overline{c}=0}=\overline{l}_{s-1}) \bigg]\\
        =&\sum_{j=0}^{K}\sum_{\overline{l}_K}  \mathbb{P}  ( Y_{j+1}=1 \mid D_{j+1}=Y_j=C_{j+1}=0,  \overline{L}_{j}=\overline{l}_{j},\overline{A}_{j+1}=a_Y)  \\
        &  \prod_{s=0}^{j} \bigg[ \mathbb{P} ( D_{s+1}=0 \mid Y_s=D_{s}=C_{s+1}=0, \overline{L}_{s} = \overline{l}_{s},\overline{A}_{s+1}=a_D)  \\
        & \times \mathbb{P} ( L_{Y,s}=l_{Y,s} \mid Y_s=D_{s}=C_{s}=0, L_{D,s} =  l_{D,s}, \overline{L}_{s-1} = \overline{l}_{s-1},  \overline{A}_{s}=a_Y)  \\
        & \times \mathbb{P} ( L_{D,s}=l_{D,s} \mid Y_s=D_{s}=C_{s}=0, \overline{L}_{s-1} = \overline{l}_{s-1},\overline{A}_{s}=a_D)\\
        & \times \mathbb{P}  ( Y_{s}=0 \mid D_{s}=Y_{s-1}=C_{s}=0,  \overline{L}_{s-1}=\overline{l}_{s-1},  \overline{A}_{s}=a_Y) \bigg].
    \end{aligned}
    $}\vspace{1em}
    This concludes the proof.
\end{proof}
The similarities between the identification formulas \eqref{eq:IdentificationFormula} and \eqref{eq:Equivalence} will allow us to clarify in which sense the strategy-centered and treatment-centered adherence encodings are equivalent. This becomes explicit in the following corollary.
\begin{restatable}{corollary}{coroEquivalence}\label{coro:Equivalence}
    Under the conditions of Theorems \ref{thm:IdentifcationFormula} and \ref{thm:Equivalence}, we have that $$\mathbb{P}(Y_{K+1}^{Z_Y=z_Y,Z_D=z_D,\overline{C}_{K+1}=0,\overline{R}_{K+1}=1}=1)=\mathbb{P}(Y_{K+1}^{\overline{A}_{Y,K+1}=z_Y,\overline{A}_{D,K+1}=z_D,\overline{C}_{K+1}=0}=1),$$
    meaning both adherence encodings equivalently describe the probability of observing the event of interest.
\end{restatable}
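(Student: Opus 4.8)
The plan is to read off the identity directly from the two identification results already established. By Theorem~\ref{thm:IdentifcationFormula}, the left-hand side $\mathbb{P}(Y_{K+1}^{Z_Y=z_Y,Z_D=z_D,\overline{C}_{K+1}=0,\overline{R}_{K+1}=1}=1)$ — which is precisely $\mathbb{P}(Y_{K+1}^{z_Y,z_D,\cor}=1)$ in that theorem's notation — equals the functional \eqref{eq:IdentificationFormula}; by Theorem~\ref{thm:Equivalence}, the right-hand side equals the functional \eqref{eq:Equivalence} evaluated at $a_Y=z_Y$, $a_D=z_D$. Both functionals are written entirely in terms of the factual law and share the same skeleton: the same outer sums over $j$ and over $\overline{l}_K$ and the same inner product over $s$, with one conditional-probability factor of each kind. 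So the whole argument reduces to checking that the two expressions agree factor by factor.

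The only place in which corresponding factors differ is the treatment/adherence conditioning event: \eqref{eq:IdentificationFormula} conditions on events of the form $\{Z=z,\ \overline{R}_{m}=1\}$ with $z\in\{z_Y,z_D\}$ and $m\in\{s,s+1,j+1\}$, whereas \eqref{eq:Equivalence} conditions on $\{\overline{A}_{m}=z\}$. The first step is therefore to establish the set identity
\begin{equation*}
\{Z=z,\ \overline{R}_{m}=1\}=\{\overline{A}_{m}=z\}\qquad\text{for every }m\ge 1\text{ and }z\in\{0,1\}.
\end{equation*}
This follows from the definitional links between the two encodings together with the generalized decomposition assumption: in the observed data the latter forces $A_{Y,k}=A_{D,k}=A_k$ for all $k$, hence $Z_Y=A_{Y,1}=A_1=Z$ and $Z_D=A_{D,1}=A_1=Z$, and therefore $R_k=I(A_{Y,k}=Z_Y,\,A_{D,k}=Z_D)=I(A_k=Z)$. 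Consequently $\{Z=z\}\cap\{\overline{R}_{m}=1\}$ holds exactly when $A_1=\cdots=A_m=z$, which is $\{\overline{A}_{m}=z\}$. The residual factors indexed by $s=0$ either have probability one (e.g.\ the $Y_s=0$ factor, since $Y_0=0$ by design) or involve only pre-randomization quantities, and hence match trivially. The positivity conditions of Theorems~\ref{thm:IdentifcationFormula} and~\ref{thm:Equivalence} ensure that each conditioning event has positive probability on the relevant covariate and event histories, so every conditional probability in play is well defined.

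The second step is pure bookkeeping: substitute the set identity into each conditional-probability factor of a summand of \eqref{eq:IdentificationFormula}, pairing $\{Z=z_Y,\overline{R}_{j+1}=1\}$ with $\{\overline{A}_{j+1}=z_Y\}$ in the leading $Y_{j+1}$ factor, $\{Z=z_D,\overline{R}_{s+1}=1\}$ with $\{\overline{A}_{s+1}=z_D\}$ in the $D_{s+1}$ factor, $\{Z=z_Y,\overline{R}_{s}=1\}$ with $\{\overline{A}_{s}=z_Y\}$ in the $L_{Y,s}$ and $Y_s$ factors, and $\{Z=z_D,\overline{R}_{s}=1\}$ with $\{\overline{A}_{s}=z_D\}$ in the $L_{D,s}$ factor. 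Each factor of \eqref{eq:IdentificationFormula} then equals the corresponding factor of \eqref{eq:Equivalence}, so the two functionals coincide, and combining this with the two identification theorems yields the stated equality of counterfactual probabilities. It also makes precise the informal equivalence $\{Z=z,\overline{R}_{K+1}=1\}\Leftrightarrow\{\overline{A}_{K+1}=z\}$ noted in Section~\ref{sec:AdherenceEncodings}.

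I do not expect a genuine obstacle here; the only mild hazard is the bookkeeping — keeping straight which treatment component ($z_Y$ or $z_D$) and which time index ($s$, $s+1$, or $j+1$) is attached to each adherence event in \eqref{eq:IdentificationFormula}, and confirming that the same component and index sit on the matching factor of \eqref{eq:Equivalence}. A marginally cleaner route would dispense with Theorem~\ref{thm:Equivalence} and instead insert the set identity into its proof, reducing that derivation line-for-line to the proof of Theorem~\ref{thm:IdentifcationFormula}; but invoking the two identification formulas as established facts is the most economical path.
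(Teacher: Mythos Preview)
Your proposal is correct and follows precisely the paper's own approach: invoke the two identification formulas \eqref{eq:IdentificationFormula} and \eqref{eq:Equivalence} from Theorems~\ref{thm:IdentifcationFormula} and~\ref{thm:Equivalence}, and match them factor by factor via the set identity $\{Z=z,\ \overline{R}_m=1\}=\{\overline{A}_m=z\}$. The paper's proof is a one-sentence sketch of exactly this; you have simply spelled out the derivation of the set identity and the bookkeeping in more detail.
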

\begin{proof}
    The equality follows from the identification formulas \eqref{eq:IdentificationFormula} and \eqref{eq:Equivalence}, and the equivalence between the two adherence encodings for interventions on perfect adherence to a certain treatment strategy in the two-arm trial: $\{\overline{A}_k=z\}\Leftrightarrow \{Z=z,\ \overline{R}_k=1\}$ for all $k$ in $\{1,\ldots,K+1\}$ and $z\in\{0,1\}$.
\end{proof}

This corollary  clarifies the sense of equivalence we previously mentioned: the  probabilities of observing the event of interest under a certain sustained treatment strategy in the four-arm trial are identified by  quantities in the two-arm trial which differ only on a reparametrization of the adherence encoding.

\subsection{On the equivalence of the identification conditions under both adherence encodings}\label{sec:AppEquivalenceIdentificationConditions}
Corollary \ref{coro:Equivalence} requires the identifiability conditions under both adherence encodings to hold. It could be argued that if both adherence encodings lead to equivalent identification results for the probabilities of interest, the identifiability conditions should also be equivalent. In this section we elaborate on this argument. Consider first the identifiability conditions under the strategy-centered encoding stated in the two-arm trial: Equations \eqref{eq:Exchangeability}-\eqref{eq:Positivity}. These are translated onto the treatment-centered encoding as conditions \eqref{eq:ExchangeabilityEquivalence}-\eqref{eq:ConsistencyEquivalence}. In the next proposition we show that these sets of conditions are equivalent.

\begin{proposition}
    Under a FFRCISTG model conditions \eqref{eq:Exchangeability}-\eqref{eq:Positivity} and conditions \eqref{eq:ExchangeabilityEquivalence}- \eqref{eq:ConsistencyEquivalence} are equivalent.
\end{proposition}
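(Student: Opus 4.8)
The plan is to prove the equivalence by translating each condition on one side into the matching condition(s) on the other through the definitional dictionary between the two encodings, using only elementary properties of conditional independence, equality in distribution, and positivity. The starting observation is that in the two-arm trial $Z\equiv A_1$ and $R_k\equiv I(A_k=Z)$, so the assignment $\overline{A}_{K+1}\mapsto(Z,\overline{R}_{K+1})$ is a deterministic bijection of $\{0,1\}^{K+1}$ onto itself (componentwise ``exclusive-or with $A_1$''). Hence $\{\overline{A}_k=z\}=\{Z=z,\overline{R}_k=1\}$ on the event-free, uncensored substratum that appears in the conditioning of every one of these assumptions, and, conditionally on that event, $A_{k+1}$ and $R_{k+1}=I(A_{k+1}=z)$ generate the same $\sigma$-algebra. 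Since $L,D,Y,C$ are literally the same variables in both encodings, and since the excerpt has already recorded that the perfect-adherence counterfactuals coincide, $\overline{Y}_k^{z,\cor}=\overline{Y}_k^{\overline{A}_k=z,\overline{c}=0}$ and likewise for $\overline{D}_k$ and $\overline{L}_k$, this dictionary is all that is needed.

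I would then check the three kinds of conditions in turn. Consistency \eqref{eq:Consistency} conditions on $\{Z=z,C_k=0,\overline{R}_k=1\}=\{\overline{A}_k=z,C_k=0\}$ and equates the factual and counterfactual histories, which after re-indexing is exactly \eqref{eq:ConsistencyEquivalence}. For future exchangeability with $k\ge 1$, \eqref{eq:ExchangeabilityFuture} conditions on $\{Z=z,\overline{R}_k=1,\dots\}=\{\overline{A}_k=z,\dots\}$ and asserts independence of the counterfactual futures from $(C_{k+1},R_{k+1})$; replacing $R_{k+1}$ by the $\sigma$-equivalent $A_{k+1}$ yields \eqref{eq:ExchangeabilityEquivalence} at time $k$, and the converse runs the same way. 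At the boundary, $R_1\equiv1$ and $\overline{A}_0,\overline{R}_0$ are empty, so \eqref{eq:ExchangeabilityEquivalence} at $k=0$ reads ``(futures)$\;\independent(C_1,A_1)\mid L_0$'', which by weak union and contraction is equivalent to the conjunction of \eqref{eq:ExchangeabilityPast} (the $k=0$ histories carrying the same information as the futures because $Y_0=D_0=0$ and $L_0$ is in the conditioning set) and the $k=0$ instance of \eqref{eq:ExchangeabilityFuture} (after dropping the trivial $R_1$).

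For positivity I would use $\{\overline{R}_{k+1}=1\}=\{A_1=\dots=A_{k+1}\}$, so the hypothesis of \eqref{eq:PositivityA} is literally that of \eqref{eq:PositivityEquivalenceDCC}; since $\{\overline{A}_{k+1}=a\}\cap\{\overline{R}_{k+1}=1\}=\{Z=a,\overline{R}_{k+1}=1\}$ and $\mathbb{P}(\overline{R}_{k+1}=1\mid\dots)>0$ under that hypothesis, the two conclusions are equivalent. Likewise $\{Z=z,\overline{R}_k=1,R_{k+1}=1\}=\{\overline{A}_{k+1}=z\}$ makes \eqref{eq:PositivityCR} a verbatim restatement of \eqref{eq:PositivityEquivalence}, while \eqref{eq:PositivityL0} is the $k=0$ case of \eqref{eq:PositivityEquivalenceDCC} once one uses $Y_0\equiv0$, the temporal order $(L_0,Z,\dots,C_1,R_1,D_1,Y_1,\dots)$, and the monotonicity of $D,C$. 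Collecting these implications in both directions finishes the proof. I expect the only genuinely delicate points to be the $k=0$ matchings and the reconciliation of the slightly different conditioning sets between the two lists; that is where the event-free/uncensored-at-baseline conventions and the graphoid axioms do the work, while the remainder is mechanical substitution along the bijection above.
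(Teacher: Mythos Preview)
Your approach is essentially the same as the paper's: use the dictionary $\{\overline{A}_k=z\}=\{Z=z,\overline{R}_k=1\}$ and $R_1\equiv 1$ to translate each condition, handling consistency, then exchangeability (with the $k=0$ boundary via weak union and contraction), then positivity. The paper argues in exactly this order and with the same identifications.

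One small slip: you obtain \eqref{eq:PositivityL0} as ``the $k=0$ case of \eqref{eq:PositivityEquivalenceDCC}''. At $k=0$, \eqref{eq:PositivityEquivalenceDCC} reads $\mathbb{P}(L_0=l_0,D_1=C_1=0)>0\Rightarrow\mathbb{P}(Z=a\mid L_0=l_0,D_1=C_1=0)>0$, and the temporal order and monotonicity of $D,C$ do not let you drop the conditioning on $D_1=C_1=0$ in general. The paper instead derives \eqref{eq:PositivityL0} from \eqref{eq:PositivityEquivalence} at $k=0$, which (since $\overline{A}_0$ is empty and $Y_0=D_0=C_0=0$) reads $\mathbb{P}(L_0=l_0)>0\Rightarrow\mathbb{P}(A_1=a,C_1=0\mid L_0=l_0)>0$, and then marginalizes over $C_1$. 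You have already matched \eqref{eq:PositivityEquivalence} with \eqref{eq:PositivityCR}, so this correct route is immediately available to you; just reroute that one sentence.
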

\begin{proof}
    As per the equivalence of the perfect adherence intervention in  both treatment encodings $\{\overline{A}_{K+1}=z\}\Leftrightarrow \{Z=z,\ \overline{R}_{K+1}=1\}$, the consistency conditions \eqref{eq:Consistency} and \eqref{eq:ConsistencyEquivalence} are equivalent. Keeping this equivalence in mind, if we evaluate \eqref{eq:ExchangeabilityEquivalence} for $k=0$  we obtain a condition which implies \eqref{eq:ExchangeabilityPast} and \eqref{eq:ExchangeabilityFuture} when $k=0$, as in $R_1=1$ almost surely. Conversely, condition  \eqref{eq:ExchangeabilityFuture} implies \eqref{eq:ExchangeabilityEquivalence} when $k\geq 1$, and when $k=0$ the same implication holds taking \eqref{eq:ExchangeabilityPast} into account.

    Regarding positivity, \eqref{eq:PositivityEquivalence} evaluated when $k=0$ implies \eqref{eq:PositivityL0} after marginalizing over $C_1$. Conditions   \eqref{eq:PositivityEquivalence} and \eqref{eq:PositivityCR} are equivalent for all $k$ in $\{0,\ldots,K\}$ as after conditioning on  $Z=z $ (equivalently $A_1=z$) we have that $R_{k}=1\Leftrightarrow A_k=z$. Lastly, conditions \eqref{eq:PositivityEquivalenceDCC} and \eqref{eq:PositivityA} are equivalent as $\overline{R}_k=1$ is equivalent to $A_1=\ldots=A_k$.
\end{proof}

Lastly we consider the dismissible component conditions. The next result shows that Algorithm \ref{alg:DagDrawing} provides the correct way of constructing the extended causal DAG under the strategy-centered encoding, as the dismissible component conditions under both treatment encodings become equivalent. 

\begin{proposition}\label{prop:EquivalenceDCCd-sep}
    Let $\mathcal{G}$ be the extended causal DAG under the treatment-centered encoding. Let $\mathcal{G}^*$ the extended causal DAG under the strategy-centered encoding be constructed from $\mathcal{G}$ applying Algorithm \ref{alg:DagDrawing}. Then Assumptions \ref{ass:DCC} and  \ref{ass:EquivalenceDCC}  are equivalent when they are read as $d$-separation statements in the associated SWIGs.
\end{proposition}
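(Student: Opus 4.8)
The plan is to reduce the statement to a purely graphical one and then match the two lists of independences term by term. By the $d$-separation criterion for SWIGs \cite{richardson2013single}, it suffices to show that, for each $k$, each of the four independences of Assumption \ref{ass:DCC}, read as a $d$-separation in the SWIG $\mathcal{G}^{*}_{rc}$ obtained from $\mathcal{G}^{*}$ by the node splits for the intervention $\{\overline{R}_{K+1}=1,\overline{C}_{K+1}=0\}$ (leaving $Z_Y,Z_D$ as random roots), is equivalent to the corresponding independence of Assumption \ref{ass:EquivalenceDCC}, read as a $d$-separation in the SWIG $\mathcal{G}_{c}$ obtained from $\mathcal{G}$ by the splits for $\{\overline{C}_{K+1}=0\}$ (leaving the $A_{Y,j},A_{D,j}$ random). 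In $\mathcal{G}_{c}$ the treatment nodes form the two chains $A_{Y,1}\to A_{Y,2}\to\cdots$ and $A_{D,1}\to A_{D,2}\to\cdots$; in $\mathcal{G}^{*}_{rc}$ each $R_k$ is split into a random half with the parents prescribed by Algorithm \ref{alg:DagDrawing} and a constant half $\overline{r}_k\equiv1$.

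First I would record the consequences of Algorithm \ref{alg:DagDrawing} for $\mathcal{G}^{*}_{rc}$. Because every substantive child $X$ (a node of $\overline{D}_{K+1}\cup\overline{Y}_{K+1}\cup\overline{L}_K$) of $A_{Y,k}$ (resp.\ $A_{D,k}$) in $\mathcal{G}$ is attached in $\mathcal{G}^{*}$ to $Z_Y$ (resp.\ $Z_D$) and to $R_k$, after splitting $R_k$ this child acquires $\overline{r}_k$, not the random half, as parent. Hence: (i) the random half of $R_k$ is childless, so it can never lie on a path active relative to the conditioning sets of Assumption \ref{ass:DCC}, none of which contains an $R$-node; (ii) the constant halves $\overline{r}_k$ are parentless, hence non-colliders on every path through them, and therefore block all such paths; and so (iii) every active path of $\mathcal{G}^{*}_{rc}$ between substantive nodes that touches the treatment structure must pass through $Z_Y$ or $Z_D$. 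Moreover, every edge of $\mathcal{G}^{*}_{rc}$ not incident to an $R$-half is the $\Gamma$-image of an edge of $\mathcal{G}_{c}$, and each edge $Z_W\to X$ ($W\in\{Y,D\}$, $X$ substantive) of $\mathcal{G}^{*}_{rc}$ is the common image of the edges $A_{W,j}\to X$ of $\mathcal{G}_{c}$, where the root $A_{W,1}$ is identified with $Z_W$.

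The core of the proof is a path correspondence mediated by $\Gamma$. Take, for concreteness, the first condition. An active path from $Y_{k+1}$ to $Z_D$ in $\mathcal{G}^{*}_{rc}$ relative to $\{Z_Y,D_{k+1},Y_k\}\cup\overline{L}_k$ is ``unbundled'' into an active path from $Y_{k+1}$ to some $A_{D,j}$ in $\mathcal{G}_{c}$ relative to $\overline{A}_{Y,k+1}\cup\{D_{k+1},Y_k\}\cup\overline{L}_k$: each maximal segment entering and leaving $Z_D$ along edges $Z_D\to X_1$, $Z_D\to X_2$ with $X_i$ a child of $A_{D,j_i}$ is replaced by $A_{D,j_1}\to X_1$, a walk along $A_{D,j_1}\to\cdots\to A_{D,j_2}$ (or its reverse), and $A_{D,j_2}\to X_2$; this detour is never blocked, since no $A_D$-node is conditioned on and its interior nodes are non-colliders along it. A passage through the conditioned node $Z_Y$ can occur only as a collider (a non-collider there would block, contradicting activeness), matched by $A_{Y,1}=Z_Y$ in the same role; the remaining edges, being among substantive nodes, are copied verbatim. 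Conversely, an active $\mathcal{G}_{c}$-path is pushed forward by $\Gamma$ edgewise, collapsing runs of $A_Y$- or $A_D$-nodes onto $Z_Y$ or $Z_D$ and discarding, by (i)--(iii), anything that would land in the $R$-appendage.

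\textbf{The main obstacle} is reconciling the conditioning sets --- the full factual history $\overline{A}_{Y,k+1}$ in $\mathcal{G}_{c}$ versus the single node $Z_Y$ in $\mathcal{G}^{*}_{rc}$ (and, dually, $\overline{A}_{D,\,\cdot}$ versus $Z_D$). One has to show that additionally conditioning, in $\mathcal{G}_{c}$, on $A_{Y,2},\dots,A_{Y,k+1}$ is inert for these $d$-separations. The argument I would give uses the temporal structure of $\mathcal{G}$: every parent of $A_{Y,j}$ for $j\ge2$ is already in the conditioning set --- its predecessor $A_{Y,j-1}\in\overline{A}_{Y,k+1}$, its covariate parents in $\overline{L}_k$, and its censoring parent (if any) appearing in $\mathcal{G}_{c}$ as a parentless constant --- so a path exploiting the conditioned collider $A_{Y,j}$ must enter from a parent that is itself conditioned and is, at that vertex, either a non-collider (blocking the path) or again a conditioned collider; this finite descent terminates at $A_{Y,1}=Z_Y$, which is conditioned in both encodings. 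Dually, not conditioning on the $R_k$-halves in $\mathcal{G}^{*}_{rc}$ closes no active path, since by (i) those halves could matter only as colliders on exactly the paths the descent already blocks. Combining the path correspondence with this reconciliation proves the equivalence of the first pair of conditions; the other three pairs follow by the same argument, interchanging the roles of $(Y,Z_Y)$ and $(D,Z_D)$ where needed and keeping the same-time covariate component $L_{D,k}$ (resp.\ $L_{Y,k}$) in the conditioning set throughout.
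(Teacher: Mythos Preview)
Your path-correspondence strategy is conceptually reasonable but differs markedly from the paper's proof and, in its present form, has genuine gaps.

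The paper's argument is much shorter because it exploits a structural feature you do not use: the conditioning sets in both Assumptions~\ref{ass:DCC} and~\ref{ass:EquivalenceDCC} contain the \emph{entire} substantive history $\overline{D}_{k+1},\overline{Y}_k,\overline{L}_k$. Combined with the observation that the random halves of $R$- and $C$-nodes in the SWIGs are childless colliders (hence block any path through them), this forces every intermediate node of a putative open path to lie in the conditioning set. For the path to remain active, each such node would have to be a collider; a short temporal-ordering argument then shows this is impossible unless the path is a single edge. Hence the only candidate violating path is the \emph{direct edge} $Z_D\to Y_{k+1}$ in $\mathcal{G}^*_S$ (respectively $A_{D,t}\to Y_{k+1}$ in $\mathcal{G}_S$), and Algorithm~\ref{alg:DagDrawing} makes precisely these edges correspond. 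No general path bijection or reconciliation of conditioning sets is required.

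Your argument, by contrast, has two problems. First, the case ``passage through the conditioned node $Z_Y$ \dots\ as a collider, matched by $A_{Y,1}=Z_Y$'' is vacuous: $Z_Y$ and $Z_D$ are roots in $\mathcal{G}^*_{rc}$ (their only in-edge is the deterministic one from $Z$, removed in the SWIG), so they can never be colliders, and no active path passes through a conditioned root. Second, and more seriously, your ``main obstacle'' is only half resolved. You argue that additionally conditioning on $A_{Y,2},\dots,A_{Y,k+1}$ in $\mathcal{G}_c$ cannot \emph{open} new paths (the collider route, via the descent over parents). You do not show it cannot \emph{close} any: a path traversing some $A_{Y,j}$ as a chain or fork is open when only $Z_Y$ is conditioned but blocked once $\overline{A}_{Y,k+1}$ is. Ruling this out requires further case analysis, and your ``unbundling'' construction (walking along $A_{D,j_1}\to\cdots\to A_{D,j_2}$) presupposes the chain structure $A_{W,1}\to A_{W,2}\to\cdots$, which the paper neither assumes nor needs for its direct-edge argument.
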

\begin{proof}
    We present here the proof of the equivalence of the first $d$-separation statements, as the rest follow analogously.  Recall that $d$-separation in a SWIG is read as standard $d$-separation in the DAG obtained by removing all fixed nodes from the SWIG; i.e. by considering only paths containing random nodes (See \cite[Sec. 3.5.2]{richardson2013single} for details). We provide now the equivalence proof, both implications by a counter-reciprocate argument. For the sake of notation, denote by  $\mathcal{G}_S$ and $\mathcal{G}_S^*$ the SWIGs related to the DCCs under the treatment- and strategy-centered encodings respectively.\\
    ``$\Leftarrow$'' Assume there is a path in $\mathcal{G}_S^*$ between $Z_D$ and $Y_{k+1}^{\cor}$ open given $B:=\{Z_Y,\overline{D}_{k+1}^{\cor},\overline{Y}_{k}^{\cor}, \overline{L}_{k}^{\cor}\}$. Firstly, note that this path cannot go further in time than $k+1$, as otherwise to reach $Y_{k+1}^{\cor}$ there would have to be a collider without descendants in $B$ which would block the path. Second, it can not intersect $\overline{R}_{K+1}^{\cor}$ or $\overline{C}_{K+1}^{\cor}$, as these random nodes are colliders without any descendants which would block the path. Therefore this open path only contains intermediate nodes in $B$. As it is open, if there are any intermediate nodes they must all be colliders. Thus, either there are no intermediate nodes or there is a single one which is a collider in $B$. But this second possibility can also not happen because the edge exiting this intermediate node would have to point forward in time, exiting the intermediate node, and making it not a collider. Therefore, this path ought to be $Z_D\to Y_{k+1}^{\cor}$. As this edge exists in $\mathcal{G}_S^*$ between two random nodes which are not split, it also exists in $\mathcal{G}^*$. But due to Algorithm \ref{alg:DagDrawing}, said edge exists in $\mathcal{G}^*$ if and only if $Y_{k+1}$ is a child of a node in $\overline{A}_{D,k+1}$ in $\mathcal{G}$, what would then violate $d$-separation in $\mathcal{G}_S$. This is a contradiction.\\
    ``$\Rightarrow$'' Assume there is a path between $Y_{k+1}^{\overline{c}=0}$ and $\overline{A}_{D,k+1}^{\overline{c}=0}$ in $\mathcal{G}_S$ open given $Q:=\{ \overline{A}_{Y,k+1}^{\overline{c}=0}, \overline{D}_{k+1}^{\overline{c}=0}, \overline{Y}_k^{\overline{c}=0}, \overline{L}_k^{\overline{c}=0}\}$. Assume that this path is between $A_{D,t}^{\overline{c}=0}$ and $Y_{k+1}^{\overline{c}=0}$ for some $t$ in $\{1,\ldots,k+1\}$. W.l.o.g. we assume that this path does not intersect $\overline{A}_{D,k+1}^{\overline{c}=0}\backslash A_{D,t}^{\overline{c}=0}$, as otherwise we could take the last node of this set the path intersects when traversed towards $Y_{k+1}^{\overline{c}=0}$. Analogous reasoning as before leads to the conclusion that $A_{D,t}^{\overline{c}=0}\to Y_{k+1}^{\overline{c}=0}$ exists in $\mathcal{G}_S$. As these random nodes are not split, $A_{D,t}\to Y_{k+1}$ exists in $\mathcal{G}$. By Algorithm \ref{alg:DagDrawing}, $Z_D\to Y_{k+1}$ exists in $\mathcal{G}^*$, and consequently $Z_D\to Y_{k+1}^{\cor}$ exists in $\mathcal{G}_S^*$, which is a contradiction. This concludes the proof.
\end{proof}

This proposition does not establish equivalence between the dismissible component conditions, as these are conditional independencies, and we have proven equivalence of $d$-separation statements. The missing component will be a faithfulness assumption, as we can see in the next Corollary.

\begin{corollary}\label{coro:EquivalenceDCC}
    Let $\mathcal{G}$ be the extended causal DAG under the treatment-centered encoding. Let the extended causal DAG under the strategy-centered encoding be constructed from $\mathcal{G}$ applying Algorithm \ref{alg:DagDrawing}. Consider  the SWIGs associated to these DAGs under the interventions considered in Assumptions \ref{ass:DCC} and  \ref{ass:EquivalenceDCC} respectively, and assume that under the respective FFRCISTG models faithfulness holds. Then the dismissible component conditions under both treatment encodings (Assumptions \ref{ass:DCC} and  \ref{ass:EquivalenceDCC}) are equivalent.
\end{corollary}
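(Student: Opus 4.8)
The plan is to derive the corollary from Proposition~\ref{prop:EquivalenceDCCd-sep} by upgrading the equivalence of $d$-separation statements to an equivalence of conditional independencies; this upgrade is exactly what the faithfulness hypothesis provides. First I would recall the two facts linking a FFRCISTG model to its SWIGs. \emph{(i) Global Markov property}: every $d$-separation statement read off a SWIG — deleting the fixed nodes and keeping only paths through random nodes, exactly as in the proof of Proposition~\ref{prop:EquivalenceDCCd-sep} — entails the corresponding conditional independence among the (possibly counterfactual) variables in the law it induces (see \cite{richardson2013single}). \emph{(ii) Faithfulness}, assumed here for both models, is the converse: every conditional independence that holds in the law is witnessed by such a $d$-separation statement. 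Together, (i) and (ii) say that, within each of the two models, a DCC-type conditional independence holds if and only if its $d$-separation encoding in the associated SWIG holds.

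The argument then reduces to a three-link chain, carried out statement by statement. Fix $k\in\{0,\dots,K\}$ and one of the four lines of Assumption~\ref{ass:DCC}, together with its counterpart in Assumption~\ref{ass:EquivalenceDCC}; write $\mathrm{CI}^{\ast}$ for the former (a conditional independence in $\mathcal{G}_S^{\ast}$) and $\mathrm{CI}$ for the latter (in $\mathcal{G}_S$), and let $I^{\ast}$, $I$ be the two corresponding $d$-separation statements. Applying (i) and (ii) to the strategy-centered model gives $\mathrm{CI}^{\ast}\Leftrightarrow I^{\ast}$; Proposition~\ref{prop:EquivalenceDCCd-sep} gives $I^{\ast}\Leftrightarrow I$; and applying (i) and (ii) to the treatment-centered model gives $I\Leftrightarrow\mathrm{CI}$. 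Composing the three equivalences yields $\mathrm{CI}^{\ast}\Leftrightarrow\mathrm{CI}$. Since Assumption~\ref{ass:DCC} is the conjunction of the statements $\mathrm{CI}^{\ast}$ over all $k$ and all four lines, and Assumption~\ref{ass:EquivalenceDCC} the conjunction of the statements $\mathrm{CI}$, the two assumptions are equivalent.

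I do not anticipate a real obstacle: the mathematical content is Proposition~\ref{prop:EquivalenceDCCd-sep}, and the rest is bookkeeping. The one point to state carefully is which model's faithfulness is used in which direction — passing from Assumption~\ref{ass:DCC} to Assumption~\ref{ass:EquivalenceDCC} invokes faithfulness of the strategy-centered model together with the Markov property of the treatment-centered model, and the reverse implication swaps these roles, so both faithfulness hypotheses in the statement are genuinely needed. A secondary, purely conceptual point is to confirm that the fixed/random node conventions for the SWIGs under the interventions $\{\cor\}$ and $\{\overline{c}=0\}$ appearing in the two assumptions are precisely those for which Proposition~\ref{prop:EquivalenceDCCd-sep} was proved, so that no reinterpretation is needed when transferring a $d$-separation fact between $\mathcal{G}_S$ and $\mathcal{G}_S^{\ast}$.
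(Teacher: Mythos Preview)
Your proposal is correct and follows essentially the same approach as the paper: the paper's proof simply invokes the SWIG global Markov property, the faithfulness assumption, and Proposition~\ref{prop:EquivalenceDCCd-sep}, which is exactly the three-link chain you spell out. Your version is more detailed (and your remark on which faithfulness is used in which direction is a nice clarification), but the underlying argument is identical.
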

\begin{proof}
    It follows from the SWIG global Markov property under a FFRCISTG model \cite[Sec. 3.5.2]{richardson2013single}, the faithfulness assumption \cite[Sec. 3.6.5]{richardson2013single}, and Proposition \ref{prop:EquivalenceDCCd-sep}.
\end{proof}

To sum up, we see how the treatment-centered encoding gives an equivalent identification result: under certain assumptions, the identification formulas under both treatment encodings differ only in the adherence reparametrization.

\section{Estimation under the strategy-centred encoding}\label{sec:AppEstimation}

\subsection{An alternative weighted estimator}\label{sec:AppEstimationWeighted}
\begin{restatable}{theorem}{thmWeigthedFormulaApp}\label{thm:WeigthedFormulaApp}
 Under the conditions of Theorem \ref{thm:IdentifcationFormula} an  equivalent  identification formula is:
    \begin{align}\label{eq:WeigthedFormulaApp}
        &\mathbb{P}(Y_{K+1}^{z_Y,z_D,\cor}=1)\\
         =&\sum_{s=0}^K\mathbb{E}  [ W_{(C,R),s}(z_Y) W_{D,s}  W_{L_{D},s}  (1-Y_s)(1-D_{s+1}) Y_{s+1} \mid Z=z_Y], \nonumber
    \end{align}
where 
\begin{align*}
W_{D,s}&=\prod_{j=0}^{s} \frac{  \mathbb{P}(D_{j+1}=0 \mid C_{j+1}=D_{j}=Y_j= 0, \overline{L}_{j},  Z = z_D,\overline{R}_j=1) }{  \mathbb{P}(D_{j+1}=0 \mid  C_{j+1}=D_{j}=Y_j= 0, \overline{L}_{j},  Z = z_Y,\overline{R}_j=1) }, \\
 W_{L_{D},s}  &=\prod_{j=0}^{s}  \frac{  \mathbb{P}(L_{D,j} = l_{D,j} \mid C_{j}=  D_{j} =Y_j= 0, \overline{L}_{j-1},  Z = z_D, \overline{R}_{j-1}=1) }{  \mathbb{P}(L_{D,j} = l_{D,j} \mid C_{j}=  D_{j} =Y_j= 0, \overline{L}_{j-1},  Z = z_Y, \overline{R}_{j-1}=1) }, 
\end{align*}
and $W_{(C,R),s} (z)$ as defined in Theorem \ref{thm:WeigthedFormulas}.

\end{restatable}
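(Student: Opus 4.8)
The plan is to reuse the proof of Theorem~\ref{thm:WeigthedFormulas} almost verbatim, interchanging throughout the roles of the pair $(Y,L_Y)$ with $(D,L_D)$ and conditioning on $Z=z_Y$ in place of $Z=z_D$. First I would peel off the artificial-censoring indicators exactly as there, writing $W_{(C,R),s}(z_Y)=I(C_{s+1}=0)\,I(\overline{R}_{s+1}=1)\,W^{*}_{(C,R),s}(z_Y)$, where $W^{*}_{(C,R),s}(z_Y)$ denotes the same expression with the numerator indicators removed. Then, for each fixed $s\in\{0,\dots,K\}$, I would expand
\[
\mathbb{E}\!\left[W_{(C,R),s}(z_Y)\,W_{D,s}\,W_{L_{D},s}\,(1-Y_s)(1-D_{s+1})\,Y_{s+1}\mid Z=z_Y\right]
\]
as $\sum_{\overline{l}_K}\mathbb{P}(Y_{s+1}=1,\,Y_s=D_{s+1}=C_{s+1}=0,\,\overline{R}_{s+1}=1,\,\overline{L}_K=\overline{l}_K\mid Z=z_Y)\,W^{*}_{(C,R),s}(z_Y)\,W_{D,s}\,W_{L_{D},s}$ and factor the joint probability into the sequential product dictated by the temporal ordering of Section~\ref{sec:ObservedDataStructure}.

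The middle portion is then pure bookkeeping of cancellations, mirroring the displayed computation in the proof of Theorem~\ref{thm:WeigthedFormulas}. The denominator of $W^{*}_{(C,R),s}(z_Y)$, namely $\prod_{j=0}^{s}\mathbb{P}(C_{j+1}=0,R_{j+1}=1\mid C_j=D_j=Y_j=0,\overline{L}_j,Z=z_Y,\overline{R}_j=1)$, cancels exactly the $s+1$ loss-of-follow-up and adherence factors in the sequential factorization of the joint law given $Z=z_Y$. The ratio weight $W_{D,s}$ then converts each surviving factor $\mathbb{P}(D_{j+1}=0\mid C_{j+1}=D_j=Y_j=0,\overline{L}_j,Z=z_Y,\overline{R}_j=1)$ into the corresponding one at $Z=z_D$, and $W_{L_{D},s}$ converts each $\mathbb{P}(L_{D,j}=l_{D,j}\mid C_j=D_j=Y_j=0,\overline{L}_{j-1},Z=z_Y,\overline{R}_{j-1}=1)$ into its $z_D$ analogue, while the factors involving $Y$ and $L_Y$ are untouched and remain conditioned on $Z=z_Y$. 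What is left is exactly the $j=s$ summand of the g-formula \eqref{eq:IdentificationFormula} --- the $Y$ and $L_Y$ terms at $z_Y$, the $D$ and $L_D$ terms at $z_D$ --- so summing over $s=0,\dots,K$ reproduces \eqref{eq:IdentificationFormula} term by term, and the conclusion follows from Theorem~\ref{thm:IdentifcationFormula}.

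I do not anticipate a genuine difficulty; the only point that needs care is the alignment of the time indices in the two ratio weights --- $W_{D,s}$ carries $\overline{R}_j=1$ while $W_{L_{D},s}$ carries $\overline{R}_{j-1}=1$, matching the respective conditioning events in \eqref{eq:IdentificationFormula} --- together with checking that $W^{*}_{(C,R),s}(z_Y)$ has precisely $s+1$ factors ($j=0,\dots,s$), so that the cancellation against the loss-of-follow-up and adherence terms is exact and not off by one. Once those indices are matched, the argument is word-for-word that of Theorem~\ref{thm:WeigthedFormulas}, so in the writeup I would present the one display above and then simply state that the remaining steps are identical to the proof of Theorem~\ref{thm:WeigthedFormulas}, with $(Y,L_Y)$ and $(D,L_D)$ exchanged and conditioning on $Z=z_Y$.
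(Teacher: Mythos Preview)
Your proposal is correct and follows exactly the approach the paper takes: the paper's own proof of this theorem consists of the single sentence ``Analogous to the proof of Theorem~\ref{thm:WeigthedFormulas} presented in Appendix~\ref{sec:AppProofEstimation},'' and your write-up spells out precisely that analogy --- swap $(Y,L_Y)\leftrightarrow(D,L_D)$, condition on $Z=z_Y$ instead of $Z=z_D$, and carry through the same sequential factorization and cancellations. Indeed, the displayed computation in the paper's proof of Theorem~\ref{thm:WeigthedFormulas} already uses the weights $W_{D,s},W_{L_D,s}$ and conditioning on $Z=z_Y$, so the mechanics you describe are verbatim what appears there.
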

\begin{proof}
    Analogous to the proof of Theorem \ref{thm:WeigthedFormulas} presented in Appendix \ref{sec:AppProofEstimation}.
\end{proof}
 The same comments made on Section \ref{sec:WeightedEstimators} about how this identification expression relates sustained effects to artificial censoring apply here. Now, analogously to what was done in Section \ref{sec:WeightedEstimators}, based now on Equation \eqref{eq:WeigthedFormulaApp} we can define another weighted estimator of $\nu$ for which we specify the  conditional distributions  which appear in $W_D, W_{L_D}, W_{(C,R)}$ , yielding:
\begin{equation*}
    \widehat{\nu}_{weighted,D}=\widehat{\mathbb{E}}_n\left[  \frac{I(Z=z_Y)}{\widehat{\mathbb{P}}_n(Z=z_Y)}\sum_{s=0}^K\Tilde{W}_{(C,R),s}(z_Y) \Tilde{W}_{D,s}  \Tilde{W}_{L_{D},s}  (1-Y_s)(1-D_{s+1}) Y_{s+1}\right].
\end{equation*}
Identically to the weighted estimator presented in the main text, under the conditions of Theorem \ref{thm:IdentifcationFormula}, $\widehat{\nu}_{weighted,D}$ is a consistent estimator of $\mathbb{P}(Y_{K+1}^{z_Y,z_D,\cor}=1)$ provided that the postulated models are correctly specified.

\subsection{The one-step estimator}\label{sec:AppEstimationDR}
\begin{definition}[Influence function of an estimand, adapted from \cite{tsiatis2007semiparametric}] \label{def:IF}
    We define the influence function $\chi^1$ of a (regular asymptotically linear) estimand $\chi$ as a random variable with mean zero and finite variance such that for every (regular) parametric submodel $\{\mathbb{P}_t\colon t\in[0,1)\}$ it satisfies $$\frac{d\chi(\mathbb{P}_t)}{dt}\bigg |_{t=0}=\mathbb{E}[\chi^1g],$$ where $g$ is the score of the true law $\mathbb{P}_0$.
\end{definition}

Further details on semi-parametric methods, the computation of influence functions, and the asymptotic properties of one-step estimators can be found in \cite{hines2022demystifying}, or with greater detail in \cite{tsiatis2007semiparametric}. As in Section \ref{sec:Estimation}, denote by $\nu$ the right-hand-side of the identification formula \eqref{eq:IdentificationFormula}. In order to compute the one-step estimator, the one missing quantity  is $\nu$'s influence function $\nu^1$, which we can see in the next theorem.

\begin{restatable}{theorem}{thmInfluenceFucntion}\label{thm:InfluenceFucntion}
    The influence function of  $\nu$ is:
    \begin{align}\label{eq:InfluenceFunction}
        &\nu^1=\sum_{s=0}^K \nu_s^1=\\
            &\sum_{s=0}^K\left[\Omega_s(1-D_s)(1-Y_s)\left\{\frac{RR(D_{s+1})}{\pi_{s+1}(z_Y)}(1-D_{s+1})I(\phi_{s+1}(z_Y))(Y_{s+1}-h^{*Y}_{s+1}(1))\right.\right.\nonumber\\
            &\left.+I(\phi_{s}(z_Y))(h^D_{s+1}(h^{*Y}_{s+1}(1))-h^{L_Y}_{s}(h^D_{s+1}(h^{*Y}_{s+1}(1)))) \right\}\nonumber\\
            &+\Lambda_s(1-D_s)(1-Y_s)\left\{\frac{RR(L_{Y,s})}{\pi_{s+1}(z_D)}I(\phi_{s+1}(z_D))\bigg(h^{*Y}_{s+1}(1)(1-D_{s+1})\right.\nonumber\\&-h^D_{s+1}(h^{*Y}_{s+1}(1))\bigg)\left.\nonumber
            +I(\phi_{s}(z_D))(h^{L_Y}_{s}(h^D_{s+1}(h^{*Y}_{s+1}(1)))-T^{(s+1)}_{s+1})\right\}\nonumber\\
            &+\sum_{j=1}^s\left[ \Omega_{j-1}(1-D_{j-1})(1-Y_{j-1})\left\{\frac{RR(D_{j})}{\pi_{j}(z_Y)}(1-D_{j})I(\phi_{j}(z_Y))\bigg(T^{(s+1)}_{j+1}(1-Y_{j})\nonumber\right.\right.\\&-h^{Y}_{j}(T^{(s+1)}_{j+1})\bigg)
            \left.+I(\phi_{j-1}(z_Y))(h^D_{j}(h^{Y}_{j}(T^{(s+1)}_{j+1}))-h^{L_Y}_{j-1}(h^D_{j}(h^{Y}_{j}(T^{(s+1)}_{j+1}))))\right\}\nonumber\\
            &+\Lambda_{j-1}(1-D_{j-1})(1-Y_{j-1})\left\{\frac{RR(L_{Y,j-1})}{\pi_{j}(z_D)}I(\phi_{j}(z_D))\bigg(h^{Y}_{j}(T^{(s+1)}_{j+1})(1-D_{j})\right.\nonumber\\
            &-h^D_{j}(h^{Y}_{j}(T^{(s+1)}_{j+1}))\bigg)\left.\left.\left.+I(\phi_{j-1}(z_D))(h^{L_Y}_{j-1}(h^D_{j}(h^{Y}_{j}(T^{(s+1)}_{j+1})))-T^{(s+1)}_{j})\right\}\right]\right],\nonumber
    \end{align}
    where we understand $\sum_{j=1}^0$ to be empty, and we have defined: 
    \begin{align*}
        \phi_j(z)&=\{Z=z,C_j=0,\overline{R}_{j}=1\},\\
        RR(Y_j)&=\frac{\mathbb{P}(Y_{j}=0 \mid D_{j}=Y_{j-1}= 0, \overline{L}_{j-1}, \phi_j(z_Y)) }{   \mathbb{P}(Y_{j}=0 \mid D_{j}=Y_{j-1}= 0, \overline{L}_{j-1},  \phi_j(z_D)) },\\
        RR(L_{Y,j})&=\frac{\mathbb{P}(L_{Y,j}=l_{Y,j} \mid D_{j}=Y_{j}= 0, L_{D,j}=l_{D,j}, \overline{L}_{j-1}, \phi_{j}(z_Y)) }{\mathbb{P}(L_{Y,j}=l_{Y,j} \mid D_{j}=Y_{j}= 0, L_{D,j}=l_{D,j}, \overline{L}_{j-1}, \phi_{j}(z_D)) },\\
        RR(D_j)&=\frac{\mathbb{P}(D_{j}=0 \mid D_{j-1}=Y_{j-1}= 0, \overline{L}_{j-1}, \phi_j(z_D)) }{   \mathbb{P}(D_{j}=0 \mid D_{j-1}=Y_{j-1}= 0, \overline{L}_{j-1},  \phi_j(z_Y)) },\\
        RR(L_{D,j})&=\frac{\mathbb{P}(L_{D,j}=l_{D,j} \mid D_{j}=Y_{j}= 0, \overline{L}_{j-1}, \phi_{j}(z_D)) }{   \mathbb{P}(L_{D,j}=l_{D,j} \mid D_{j}=Y_{j}= 0, \overline{L}_{j-1}, \phi_{j}(z_Y)) },\\
        \pi_j(z)&=\mathbb{P}(C_j=0,R_{j}=1\mid D_{j-1}=Y_{j-1}=0,\overline{L}_{j-1},\phi_{j-1}(z)),\\
        \Omega_j&=\frac{1}{\mathbb{P}(Z=z_Y)}\prod_{s=0}^j\frac{RR(D_s)RR(L_{D,s})}{\pi_s(z_Y)},\\
        \Lambda_j&=\frac{1}{\mathbb{P}(Z=z_D)}\prod_{s=0}^j\frac{RR(L_{Y,s-1})RR(Y_s)}{\pi_s(z_D)},\\
        h^Y_j(x)&=\mathbb{E}[x(1-Y_j)\mid D_{j}, Y_{j-1}, \overline{L}_{j-1}, \phi_j(z_Y) ], \\
        h^{*Y}_j(x)&=\mathbb{E}[xY_j\mid D_{j}, Y_{j-1}, \overline{L}_{j-1}, \phi_j(z_Y) ] ,\\
        h^D_j(x)&=\mathbb{E}[x(1-D_j)\mid D_{j-1}, Y_{j-1}, \overline{L}_{j-1}, \phi_j(z_D) ] ,\\
        h^{L_Y}_j(x)&=\mathbb{E}[x\mid D_{j}, Y_{j},L_{D,j}, \overline{L}_{j-1}, \phi_{j-1}(z_Y) ], \\
        h^{L_D}_j(x)&=\mathbb{E}[x\mid D_{j}, Y_{j}, \overline{L}_{j-1}, \phi_{j-1}(z_D) ], \\
        T^{(s)}_j&=\begin{cases}
            h^{L_D}_{s-1}(h^{L_Y}_{s-1}(h^D_{s}(h^{*Y}_{s}(1)))),\quad j=s\\
            h^{L_D}_{j-1}(h^{L_Y}_{j-1}(h^D_{j}(h^{Y}_{j}(T^{(s)}_{j+1})))), \quad\text{recursively for }j=s-1,\ldots,1
        \end{cases}.
    \end{align*}
\end{restatable}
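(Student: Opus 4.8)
The plan is to obtain $\nu^1$ as the pathwise derivative of $\nu$ along arbitrary regular parametric submodels, using Definition \ref{def:IF}: since the observed-data model is nonparametric, any mean-zero, finite-variance random variable that reproduces the pathwise-derivative identity $\frac{d}{dt}\nu(\mathbb{P}_t)|_{t=0}=\mathbb{E}[\nu^1 g]$ is the (unique, efficient) influence function, so it suffices to differentiate and match.

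First I would rewrite $\nu$ in a sequential form that exposes it as a composition of elementary functionals (means, conditional means, ratios). Starting from the iterated-regression (sequential g-formula) representation of \eqref{eq:IdentificationFormula}---which is exactly what the operators $h^Y_j, h^{*Y}_j, h^D_j, h^{L_Y}_j, h^{L_D}_j$ and the backward recursion $T^{(s)}_j$ encode---one checks by backward induction over $j$, using the law of iterated expectations and the conditioning events $\phi_j(z_Y)$, $\phi_j(z_D)$, that $\nu$ equals a nested expectation in which the $Y$- and $L_Y$-factors are taken under the $Z=z_Y$ conditional law while the $D$- and $L_D$-factors are taken under the $Z=z_D$ conditional law; equivalently one can start from the weighted representation \eqref{eq:WeigthedFormulaD} of Theorem \ref{thm:WeigthedFormulas}. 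This step is Theorem \ref{thm:IdentifcationFormula} re-packaged, and is the same bookkeeping used in the point-treatment derivations of \cite{stensrud2021generalized, stensrud2022separable}.

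Then I would differentiate. Writing $\nu(\mathbb{P}_t)$ as a sum (over $j$ and histories $\overline{l}_K$) of products of conditional probabilities, the product rule gives $\frac{d}{dt}\nu(\mathbb{P}_t)|_{t=0}$ as a sum of terms, each obtained by differentiating exactly one conditional-probability factor and freezing the others at $t=0$. For a generic factor $\mathbb{P}_t(X=x\mid W=w)$ one uses $\frac{d}{dt}\log\mathbb{P}_t(X=x\mid W=w)|_{t=0}=\mathbb{E}[g\mid X=x,W=w]-\mathbb{E}[g\mid W=w]$, multiplies and divides by the relevant densities to turn the frozen factors into the iterated conditional expectations $h_\cdot(\cdot)$ and $T^{(s)}_\cdot$, and collapses the sums over histories by iterated expectations. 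Each differentiated factor then contributes a term of the form $(\text{cumulative inverse-probability weight})\times(\text{residual})$: the cumulative weights are exactly $\Omega_j$ and $\Lambda_j$ (products of the ratios $RR(\cdot)$ and the propensity scores $\pi_j$), the residuals are the parenthesised differences in \eqref{eq:InfluenceFunction} (e.g.\ $Y_{s+1}-h^{*Y}_{s+1}(1)$ for the last $Y$-hazard factor, and the $T^{(s+1)}_{j+1}$-weighted residuals for earlier factors), and the prefactors $\tfrac{(1-D_\cdot)(1-Y_\cdot)I(\phi_\cdot(z))}{\pi_\cdot(z)}$ arise from rewriting conditioning on the event-free, adherent, uncensored history as inverse-probability reweighting. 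Organising the contributions of the $Y$-hazard, $L_Y$-density, $D$-hazard and $L_D$-density factors at each time point by the outer index $s$ and inner index $j$ reproduces $\nu^1=\sum_{s=0}^K \nu^1_s$. Finally I would verify $\mathbb{E}[\nu^1]=0$ and $\mathrm{Var}(\nu^1)<\infty$, both immediate from the residual structure together with the positivity Assumption \ref{ass:Positivity}.

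The main obstacle is the bookkeeping in the differentiation step: because the functional stitches together factors from two distinct interventional laws (the $z_Y$-arm for $Y, L_Y$ and the $z_D$-arm for $D, L_D$, as dictated by the dismissible component conditions), the chain of iterated expectations does not telescope as cleanly as for an ordinary single-regime g-formula. This is precisely why the weights split into the two cumulative products $\Omega_j$ (gathering $RR(D_\cdot), RR(L_{D,\cdot}), \pi_\cdot(z_Y)$) and $\Lambda_j$ (gathering $RR(L_{Y,\cdot}), RR(Y_\cdot), \pi_\cdot(z_D)$), and why one needs the double sum over $s$ and $j$ together with the $T^{(s)}_j$ recursion to carry the frozen downstream factors. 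Getting every conditioning set, every ratio, and every residual to land in the right slot is the delicate part; the remainder is routine influence-function calculus of the kind in \cite{hines2022demystifying, tsiatis2007semiparametric}, and the computation parallels \cite{stensrud2021generalized, stensrud2022separable} with the adherence/censoring reweighting $W_{(C,R)}$ folded in.
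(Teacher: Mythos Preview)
Your proposal is correct and follows essentially the same route as the paper: write $\nu=\sum_{s=0}^K\nu_s$ with each $\nu_s$ expressed as the iterated conditional expectation $T^{(s+1)}_1$, differentiate along a submodel via the product rule (one conditional factor at a time, using the conditional-score identity), and then recognise the resulting prefactors as the cumulative weights $\Omega_j,\Lambda_j$ and the residuals as the bracketed differences in \eqref{eq:InfluenceFunction}. The only point the paper makes more explicit is the algebraic bridge you flag as ``the delicate part'': it introduces intermediate ratios $\rho^D_s,\rho^Y_s,\rho^{L_Y}_s,\rho^{L_D}_s$ (ratios of $\mathbb{P}(\phi_\cdot(z)\mid\cdot)$ before and after updating the conditioning history) whose telescoping products convert the raw change-of-conditioning factors into exactly $RR(\cdot)/\pi_\cdot(z)$ and hence into $\Omega_j,\Lambda_j$.
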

\begin{proof}
    The identification formula \eqref{eq:IdentificationFormula} allows to express the target quantity $\nu$ as a sum of individual  terms $\nu=\sum_{s=0}^K\nu_s$, which corresponds to the probability of observing the target event at each time point. This additivity translates to the influence function scale, as $$\frac{d\nu(\mathbb{P}_t)}{dt}\bigg|_{t=0}=\sum_{s=0}^K \frac{d\nu_s(\mathbb{P}_t)}{dt}\bigg|_{t=0},$$ and therefore $\nu^1=\sum_{s=0}^K\nu_s^1$. Thus, for the remainder of this proof we focus on $\nu_s$, which  can be read off equation \eqref{eq:IdentificationFormula}. The $\nu_{s}$ term  can be expressed as an iterated conditional expectation (ICE): 
    \begin{align*}
        &\mathbb{E}[\ldots[\mathbb{E}[\mathbb{E}[\mathbb{E}[\mathbb{E}[\mathbb{E}[\mathbb{E}[Y_{s+1}| D_{s+1},Y_s,\overline{L}_s,\phi_{s+1}(z_Y)](1-D_{s+1})|D_{s},Y_s,\overline{L}_s,\phi_{s+1}(z_D)]\\&|D_{s},Y_s,L_{D,s}\overline{L}_{s-1},\phi_{s}(z_Y)]|D_{s},Y_s,\overline{L}_{s-1},\phi_{s}(z_D)](1-Y_s)|D_{s},Y_{s-1},\overline{L}_{s-1},\phi_{s}(z_Y)]\\&\times(1-D_{s})|D_{s-1},Y_{s-1},\overline{L}_{s-1},\phi_{s}(z_D)]\ldots]=\\
        &=h_0^{L_D}(h_0^{L_Y}(h_1^D(h_1^Y(h_1^{L_D}(h_1^{L_Y}(h_2^D(\ldots(h_{s+1}^D(h_{s+1}^{*Y}(1)))\ldots)))))))=T^{(s+1)}_1.
    \end{align*}
    Expressing $\nu_s$ as an ICE will  simplify the computation of the influence function by exploiting the recursive definition of the $T_j^{(s)}$'s. Indeed, for $j<s$ we have:
    \begin{align*}
        T_j^{(s)}&=h^{L_D}_{j-1}(h^{L_Y}_{j-1}(h^D_{j}(h^{Y}_{j}(T^{(s)}_{j+1}))))\\
        &=\mathbb{E}[\mathbb{E}[\mathbb{E}[\mathbb{E}[T_{j+1}^{(s)}(1-Y_j)|D_{j},Y_{j-1} ,\overline{L}_{j-1},\phi_{j}(z_Y)](1-D_j)|D_{j-1},Y_{j-1},\overline{L}_{j-1},\\&\qquad\phi_{j}(z_D)]|D_{j-1},Y_{j-1},L_{D,j-1},\overline{L}_{j-2},\phi_{j-1}(z_Y)]|D_{j-1},Y_{j-1},\overline{L}_{j-2},\phi_{j-1}(z_D)].
    \end{align*}
    Therefore,  we also find the recursion: 
    \begin{subequations}\label{eq:PortDerivRecursive}
        \begin{align}
        \frac{dT^{(s)}_j(\mathbb{P}_t)}{dt}\bigg|_{t=0}&=\nonumber\\&= h^{L_D}_{j-1}(h^{L_Y}_{j-1}(h^D_{j}(h_j^Y(T^{(s)}_{j+1}g_{Y_j|D_{j},Y_{j-1} ,\overline{L}_{j-1},\phi_{j}(z_Y)}))))\label{eq:PortDerivAux1}\\
        &+h^{L_D}_{j-1}(h^{L_Y}_{j-1}(h^D_{j}(h^{Y}_{j}(T^{(s)}_{j+1})g_{D_j|D_{j-1},Y_{j-1},\overline{L}_{j-1},\phi_{j}(z_D)})))\label{eq:PortDerivAux2}\\
        &+h^{L_D}_{j-1}(h^{L_Y}_{j-1}(h^D_{j}(h^{Y}_{j}(T^{(s)}_{j+1}))g_{L_{Y,j-1}|D_{j-1},Y_{j-1},L_{D,j-1},\overline{L}_{j-2},\phi_{j-1}(z_Y)}))\label{eq:PortDerivAux3}\\
        &+h^{L_D}_{j-1}(h^{L_Y}_{j-1}(h^D_{j}(h^{Y}_{j}(T^{(s)}_{j+1})))g_{L_{D,j-1}|D_{j-1},Y_{j-1},\overline{L}_{j-2},\phi_{j-1}(z_D)})\label{eq:PortDerivAux4}\\
        &+h^{L_D}_{j-1}(h^{L_Y}_{j-1}(h^D_{j}(h^{Y}_{j}\left(\frac{dT^{(s)}_{j+1}(\mathbb{P}_t)}{dt}\bigg|_{t=0}\right)))),\nonumber
    \end{align}
    \end{subequations}
    where $g_{X|U}$ denotes the score function of $X|U$. As we are particularly interested in the $j=1$ case, we will now study each of the addends  \eqref{eq:PortDerivAux1}-\eqref{eq:PortDerivAux4} separately in this case. Assuming $s>1$, for $j=1$:
    \begin{align*}
        \eqref{eq:PortDerivAux1}&= h^{L_D}_{0}(h^{L_Y}_{0}(h^D_{1}(\mathbb{E}[((1-Y_1)T^{(s)}_{2}-h_1^Y(T^{(s)}_{2}))g_{Y_1|D_{1} ,\overline{L}_{0},\phi_{1}(z_Y)}|D_{1} ,\overline{L}_{0},\phi_{1}(z_Y)])))\\
        &=\mathbb{E}[\rho^{L_D}_0\rho^{L_Y}_0\rho^D_1\frac{(1-D_1)I(\phi_1(z_Y))}{\mathbb{P}(\phi_1(z_Y)|D_1=0,L_0)}((1-Y_1)T^{(s)}_{2}-h_1^Y(T^{(s)}_{2}))g_{Y_1|D_{1} ,\overline{L}_{0},\phi_{1}(z_Y)}]\\
        &=\mathbb{E}[\rho^{L_D}_0\rho^{L_Y}_0\rho^D_1\frac{(1-D_1)I(\phi_1(z_Y))}{\mathbb{P}(\phi_1(z_Y)|D_1=0,L_0)}((1-Y_1)T^{(s)}_{2}-h_1^Y(T^{(s)}_{2}))g],\\
        \eqref{eq:PortDerivAux2}&= h^{L_D}_{0}(h^{L_Y}_{0}(\mathbb{E}[(h^{Y}_{1}(T^{(s)}_{2})(1-D_1)-h^D_{1}(h^{Y}_{1}(T^{(s)}_{2})))g_{D_1|\overline{L}_{0},\phi_{1}(z_D)}|\overline{L}_{0},\phi_{1}(z_D)]))\\
        &=\mathbb{E}[\rho^{L_D}_0\rho^{L_Y}_0\frac{I(\phi_1(z_D))}{\mathbb{P}(\phi_1(z_D)|L_0)}(h^{Y}_{1}(T^{(s)}_{2})(1-D_1)-h^D_{1}(h^{Y}_{1}(T^{(s)}_{2})))g_{D_1|\overline{L}_{0},\phi_{1}(z_D)}]\\
        &= \mathbb{E}[\rho^{L_D}_0\rho^{L_Y}_0\frac{I(\phi_1(z_D))}{\mathbb{P}(\phi_1(z_D)|L_0)}(h^{Y}_{1}(T^{(s)}_{2})(1-D_1)-h^D_{1}(h^{Y}_{1}(T^{(s)}_{2})))g],\\
        \eqref{eq:PortDerivAux3}&= h^{L_D}_{0}(\mathbb{E}[(h^D_{1}(h^{Y}_{1}(T^{(s)}_{2}))-h^{L_Y}_{0}(h^D_{1}(h^{Y}_{1}(T^{(s)}_{2}))))g_{L_{Y,0}|L_{D,0},\phi_{0}(z_Y)}|L_{D,0},\phi_{0}(z_Y)])\\
        &=\mathbb{E}[\rho^{L_D}_0\frac{I(\phi_0(z_Y))}{\mathbb{P}(\phi_0(z_Y)|L_{D,0})}(h^D_{1}(h^{Y}_{1}(T^{(s)}_{2}))-h^{L_Y}_{0}(h^D_{1}(h^{Y}_{1}(T^{(s)}_{2}))))g_{L_{Y,0}|L_{D,0},\phi_{0}(z_Y)}]\\
        &=\mathbb{E}[\rho^{L_D}_0\frac{I(\phi_0(z_Y))}{\mathbb{P}(\phi_0(z_Y)|L_{D,0})}(h^D_{1}(h^{Y}_{1}(T^{(s)}_{2}))-h^{L_Y}_{0}(h^D_{1}(h^{Y}_{1}(T^{(s)}_{2}))))g],\\
        \eqref{eq:PortDerivAux4}&=\mathbb{E}[(h^{L_Y}_{0}(h^D_{1}(h^{Y}_{1}(T^{(s)}_{2})))-T_1^{(s)})g_{L_{D,0}|\phi_{0}(z_D)}|\phi_{0}(z_D)]\\
        &=\mathbb{E}[\frac{I(\phi_{0}(z_D))}{\mathbb{P}(\phi_{0}(z_D))}(h^{L_Y}_{0}(h^D_{1}(h^{Y}_{1}(T^{(s)}_{2})))-T_1^{(s)})g_{L_{D,0}|\phi_{0}(z_D)}]\\
        &=\mathbb{E}[\frac{I(\phi_{0}(z_D))}{\mathbb{P}(\phi_{0}(z_D))}(h^{L_Y}_{0}(h^D_{1}(h^{Y}_{1}(T^{(s)}_{2})))-T_1^{(s)})g],
    \end{align*}
    where we have defined 
    \begin{align*}
        \rho^D_s&=\frac{\mathbb{P}(\phi_s(z_D)|D_s=Y_{s-1}=0,\overline{L}_{s-1})}{\mathbb{P}(\phi_s(z_D)|D_{s-1}=Y_{s-1}=0,\overline{L}_{s-1})},\\ \rho^Y_s&=\frac{\mathbb{P}(\phi_s(z_Y)|D_{s}=Y_{s}=0,\overline{L}_{s-1})}{\mathbb{P}(\phi_s(z_Y)|D_{s}=Y_{s-1}=0,\overline{L}_{s-1})},\\
        \rho_s^{L_Y}&=\frac{\mathbb{P}(\phi_s(z_Y)|D_{s}=Y_{s}=0,\overline{L}_{s})}{\mathbb{P}(\phi_s(z_Y)|D_{s}=Y_{s}=0,\overline{L}_{s-1},L_{D,s})},\\\rho_s^{L_D}&=\frac{\mathbb{P}(\phi_s(z_D)|D_{s}=Y_{s}=0,\overline{L}_{s-1},L_{D,s})}{\mathbb{P}(\phi_s(z_D)|D_{s}=Y_{s}=0,\overline{L}_{s-1})}.
    \end{align*}
    Using these notations and the recursive structure given in equation \eqref{eq:PortDerivRecursive}, attending to Definition \ref{def:IF} we can express $\nu_s$'s influence function $\nu_s^1$ as:
    \begin{align*}
        &W_s(1-D_s)(1-Y_s)\left\{\rho^{L_D}_s\rho^{L_Y}_s\rho^D_{s+1}\frac{I(\phi_{s+1}(z_Y))(1-D_{s+1})}{\mathbb{P}(\phi_{s+1}(z_Y)|D_{s+1}=Y_s=0,\overline{L}_s)}(Y_{s+1}-h^{*Y}_{s+1}(1))\right.\nonumber\\
            &+\rho^{L_D}_s\frac{I(\phi_{s}(z_Y))}{\mathbb{P}(\phi_{s}(z_Y)|D_{s}=Y_s=0,\overline{L}_{s-1},L_{D,s})}(h^D_{s+1}(h^{*Y}_{s+1}(1))-h^{L_Y}_{s}(h^D_{s+1}(h^{*Y}_{s+1}(1)))) \nonumber\\
            &+\rho^{L_D}_s\rho^{L_Y}_s\frac{I(\phi_{s+1}(z_D))}{\mathbb{P}(\phi_{s+1}(z_D)|D_{s}=Y_s=0,\overline{L}_s)}(h^{*Y}_{s+1}(1)(1-D_{s+1})-h^D_{s+1}(h^{*Y}_{s+1}(1)))\nonumber\\
            &\left.+\frac{I(\phi_{s}(z_D))}{\mathbb{P}(\phi_{s}(z_D)|D_{s}=Y_s=0,\overline{L}_{s-1})}(h^{L_Y}_{s}(h^D_{s+1}(h^{*Y}_{s+1}(1)))-T^{(s+1)}_{s+1})\right\}\nonumber\\
            &+\sum_{j=1}^s\left[ W_{j-1}(1-D_{j-1})(1-Y_{j-1})\left\{\rho^{L_D}_{j-1}\rho^{L_Y}_{j-1}\rho^D_{j}\frac{I(\phi_{j}(z_Y))(1-D_{j})}{\mathbb{P}(\phi_{j}(z_Y)|D_{j}=Y_{j-1}=0,\overline{L}_{j-1})}\right.\right.\\&\times(T^{(s+1)}_{j+1}(1-Y_{j})-h^{Y}_{j}(T^{(s+1)}_{j+1}))\nonumber\\
            &+\rho^{L_D}_{j-1}\frac{I(\phi_{j-1}(z_Y))}{\mathbb{P}(\phi_{j-1}(z_Y)|D_{j-1}=Y_{j-1}=0,\overline{L}_{j-2},L_{D,j-1})}\\&\times(h^D_{j}(h^{Y}_{j}(T^{(s+1)}_{j+1}))-h^{L_Y}_{j-1}(h^D_{j}(h^{Y}_{j}(T^{(s+1)}_{j+1}))))\nonumber\\
            &+\rho^{L_D}_{j-1}\rho^{L_Y}_{j-1}\frac{I(\phi_{j}(z_D))}{\mathbb{P}(\phi_{j}(z_D)|D_{j-1}=Y_{j-1}=0,\overline{L}_{j-1})}\\&\times(h^{Y}_{j}(T^{(s+1)}_{j+1})(1-D_{j})-h^D_{j}(h^{Y}_{j}(T^{(s+1)}_{j+1})))\nonumber\\
            &+\left.\left.\frac{I(\phi_{j-1}(z_D))}{\mathbb{P}(\phi_{j-1}(z_D)|D_{j-1}=Y_{j-1}=0,\overline{L}_{j-2})}(h^{L_Y}_{j-1}(h^D_{j}(h^{Y}_{j}(T^{(s+1)}_{j+1})))-T^{(s+1)}_{j})\right\}\right],
    \end{align*} 
    having defined the pre-factor $$W_j=\prod_{t=1}^j \rho^{L_D}_{t-1}\rho^{L_Y}_{t-1}\rho^D_{t}\rho^Y_t.$$ The only step remaining in the proof is that the factors which accompany each of the deviations in $\nu^1_s$ correspond to the representation in terms of the risk ratios and the propensity scores given in equation \eqref{eq:InfluenceFunction}. For the first one we have:
    \begin{align*}
        &\rho^Y_{s-1}\rho^{L_D}_{s-1}\rho^{L_Y}_{s-1}\underbrace{\frac{\mathbb{P}(\phi_s(z_D)|D_s=Y_{s-1}=0,\overline{L}_{s-1})}{\mathbb{P}(\phi_s(z_D)|D_{s-1}=Y_{s-1}=0,\overline{L}_{s-1})}}_{\rho^D_{s}}\frac{1}{\mathbb{P}(\phi_{s}(z_Y)|D_{s}=Y_{s-1}=0,\overline{L}_{s-1})}\\
        =&\rho^Y_{s-1}\rho^{L_D}_{s-1}\underbrace{\frac{\mathbb{P}(\phi_{s-1}(z_Y)|D_{s-1}=Y_{s-1}=0,\overline{L}_{s-1})}{\mathbb{P}(\phi_{s-1}(z_Y)|D_{s-1}=Y_{s-1}=0,\overline{L}_{s-2},L_{D,s-1})}}_{\rho^{L_Y}_{s-1}}\times\\&\frac{RR(D_s)}{\mathbb{P}(\phi_{s}(z_Y)|D_{s-1}=Y_{s-1}=0,\overline{L}_{s-1})}\\
        =&\rho^Y_{s-1}\underbrace{\frac{\mathbb{P}(\phi_{s-1}(z_D)|D_{s-1}=Y_{s-1}=0,\overline{L}_{s-2},L_{D,s-1})}{\mathbb{P}(\phi_{s-1}(z_D)|D_{s-1}=Y_{s-1}=0,\overline{L}_{s-2})}}_{\rho^{L_D}_{s-1}}\times\\&\frac{RR(D_s)}{\mathbb{P}(\phi_{s-1}(z_Y)|D_{s-1}=Y_{s-1}=0,\overline{L}_{s-2},L_{D,s-1})\pi_s(z_Y)}\\
        =&\underbrace{\frac{\mathbb{P}(\phi_{s-1}(z_Y)|D_{s-1}=Y_{s-1}=0,\overline{L}_{s-2})}{\mathbb{P}(\phi_{s-1}(z_Y)|D_{s-1}=Y_{s-2}=0,\overline{L}_{s-2})}}_{\rho^Y_{s-1}}\frac{RR(L_{D,s-1})RR(D_s)}{{\mathbb{P}(\phi_{s-1}(z_D)|D_{s-1}=Y_{s-1}=0,\overline{L}_{s-2})}\pi_s(z_Y)}\\
        =&\frac{1}{\mathbb{P}(\phi_{s-1}(z_Y)|D_{s-1}=Y_{s-2}=0,\overline{L}_{s-2})}\frac{RR(L_{D,s-1})RR(D_s)}{\pi_s(z_Y)},
    \end{align*}
    and therefore, applying this argument recursively, we see that $$W_{j-1}\rho^{L_D}_{j-1}\rho^{L_Y}_{j-1}\rho^D_{j}\frac{1}{\mathbb{P}(\phi_{j}(z_Y)|D_{j}=Y_{j-1}=0,\overline{L}_{j-1})}=\Omega_{j-1}\frac{RR(D_j)}{\pi_j(z_Y)}.$$
    Analogously one can show the other equalities needed: 
    \begin{align*}
        W_{j-1}\rho^{L_D}_{j-1}\rho^{L_Y}_{j-1}\frac{1}{\mathbb{P}(\phi_{j}(z_D)|D_{j-1}=Y_{j-1}=0,\overline{L}_{j-1})}&=\Lambda_{j-1}\frac{RR(L_{Y,j-1})}{\pi_j(z_D)},\\
        W_{j-1}\rho^{L_D}_{j-1}\frac{1}{\mathbb{P}(\phi_{j-1}(z_Y)|D_{j-1}=Y_{j-1}=0,\overline{L}_{j-2},L_{D,j-1})}&=\Omega_{j-1},\\
        W_{j-1}\frac{1}{\mathbb{P}(\phi_{j-1}(z_D)|D_{j-1}=Y_{j-1}=0,\overline{L}_{j-2})}&=\Lambda_{j-1}.
    \end{align*}
    This effectively concludes the proof, as at this point we can rewrite $\nu_s^1$ as each of the $s$-addends in equation \eqref{eq:IdentificationFormula}.
\end{proof}

Recall that the influence function reflects the sensitivity of the estimand to misspecifications in the distribution of the process. We see how each of the terms in Equation \eqref{eq:InfluenceFunction} represents the deviation incurred at each step when following the variables of the system which are not intervened on in the temporal order in which they are observed. In addition, the bias correction granted by the influence function to the simple plug-in estimator to achieve doubly robustness is related to the weighted estimators described in Section \ref{sec:WeightedEstimators} and Appendix \ref{sec:AppEstimationWeighted}. Note how in each of the deviations in the influence function \eqref{eq:InfluenceFunction} associated with $Y$ or $L_Y$ we set $Z=z_Y$, correct for the distributions of $D, L_D$, and perform inverse propensity score weighting for the intervened variables, as it was done in $\widehat{\nu}_{weighted,Y}$. These qualities of the influence function yield doubly robustness of the one-step estimator, as stated in the next theorem.
\begin{restatable}{theorem}{thmDR}\label{thm:DR}
    The one-step estimator $\widehat{\mathbb{E}}_n[\nu^1(\Tilde{\mathbb{P}}_n)]+\nu(\Tilde{\mathbb{P}}_n)$ is a doubly robust estimator of the  probability of observing the event of interest under a sustained treatment strategy, meaning that under the identification conditions of Theorem \ref{thm:IdentifcationFormula} as long as the model for the propensity score $\pi_j(z)$ is correctly specified and consistently estimated, then said estimator is consistent for the  probability of  interest if either the models for the conditional probabilities of $Y, L_Y$ or  $D, L_D$ are correctly specified and consistently estimated, but not necessarily both.
\end{restatable}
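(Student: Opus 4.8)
\emph{Proof plan.} The plan is to treat $\widehat{\nu}_{DR}$ as a one-step (bias-corrected) plug-in estimator and run the standard analysis of such estimators \cite{hines2022demystifying,tsiatis2007semiparametric}. Let $\bar{\mathbb{P}}$ denote the probability limit of the fitted nuisance law $\tilde{\mathbb{P}}_n$; by hypothesis $\bar{\mathbb{P}}$ coincides with the true law $\mathbb{P}$ on the $(C,R)$ propensity scores $\pi_j(z)$ and, in addition, on \emph{at least one} of the two blocks of conditional laws: the laws of $\{\overline{Y}_{K+1},\overline{L}_{Y,K}\}$, or the laws of $\{\overline{D}_{K+1},\overline{L}_{D,K}\}$. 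Writing $R(\bar{\mathbb{P}},\mathbb{P}):=\nu(\bar{\mathbb{P}})+\mathbb{E}_{\mathbb{P}}[\nu^1(\bar{\mathbb{P}})]-\nu(\mathbb{P})$ and adding and subtracting $\mathbb{E}_{\mathbb{P}}[\nu^1(\tilde{\mathbb{P}}_n)]$ yields the exact identity
\begin{equation*}
    \widehat{\nu}_{DR}-\nu(\mathbb{P}) = R(\tilde{\mathbb{P}}_n,\mathbb{P}) + (\widehat{\mathbb{E}}_n-\mathbb{E}_{\mathbb{P}})\big[\nu^1(\tilde{\mathbb{P}}_n)\big],
\end{equation*}
so it is enough to show that both terms on the right are $o_P(1)$.

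The heart of the proof --- and the step I expect to be the main obstacle --- is an explicit description of the remainder $R$. Because $\nu$ admits the iterated-conditional-expectation representation $\nu=\mathbb{E}_{\mathbb{P}}[T^{(K+1)}_1]$ used in the proof of Theorem \ref{thm:InfluenceFucntion}, one re-runs that telescoping computation with every conditional expectation and every inverse-propensity weight evaluated under $\bar{\mathbb{P}}$ and the residuals taken under $\mathbb{P}$. The output is an expression of $R(\bar{\mathbb{P}},\mathbb{P})$ as a finite sum, over time points $s$ and nested levels $j\le s$, of $\mathbb{P}$-expectations of products of nuisance-estimation errors $\bar{\eta}-\eta$ drawn from the three groups $\{\pi_j(z)\}$, $\{Y\text{- and }L_Y\text{-conditionals}\}$ and $\{D\text{- and }L_D\text{-conditionals}\}$, with the decisive structural property that every summand contains \emph{either} a factor of the form $\bar{\pi}_j(z)-\pi_j(z)$ \emph{or} both a $Y/L_Y$-error factor \emph{and} a $D/L_D$-error factor; in particular no summand is purely a product of $Y/L_Y$-errors, nor purely a product of $D/L_D$-errors. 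Verifying this pairing --- that once the propensity errors are stripped off the $Y/L_Y$ and $D/L_D$ residuals always enter in matched pairs across adjacent time points --- is a lengthy but mechanical bookkeeping exercise over the $O(K)$ nested levels of the ICE, requiring care about which reference law each conditional expectation and each weight is evaluated under. An alternative that avoids writing $R$ in full is to verify directly, via the tower property, that $\nu(\bar{\mathbb{P}})+\mathbb{E}_{\mathbb{P}}[\nu^1(\bar{\mathbb{P}})]=\nu(\mathbb{P})$ in each of the two scenarios ``$\pi_j(z)$ and the $Y/L_Y$-conditionals correct'' and ``$\pi_j(z)$ and the $D/L_D$-conditionals correct'' separately; these are two parallel computations.

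Granting the structure of $R$, the conclusion is immediate. Under the stated conditions $\bar{\pi}_j(z)=\pi_j(z)$, so every summand of the first type vanishes; and since one of the two conditional blocks is correctly specified, every summand of the second type has an identically vanishing factor. Hence $R(\bar{\mathbb{P}},\mathbb{P})=0$, and since $\tilde{\mathbb{P}}_n\to\bar{\mathbb{P}}$ and the positivity conditions of Assumption \ref{ass:Positivity} keep the weights $\Omega_s,\Lambda_s$ and the propensity scores away from their degenerate values (so that each product is dominated by an integrable envelope), dominated convergence gives $R(\tilde{\mathbb{P}}_n,\mathbb{P})=o_P(1)$. For the empirical-process term, split $(\widehat{\mathbb{E}}_n-\mathbb{E}_{\mathbb{P}})[\nu^1(\tilde{\mathbb{P}}_n)] = (\widehat{\mathbb{E}}_n-\mathbb{E}_{\mathbb{P}})[\nu^1(\bar{\mathbb{P}})] + (\widehat{\mathbb{E}}_n-\mathbb{E}_{\mathbb{P}})[\nu^1(\tilde{\mathbb{P}}_n)-\nu^1(\bar{\mathbb{P}})]$: the first summand is a centered sample average of a fixed integrable function and hence $o_P(1)$ by the weak law of large numbers, and the second is $o_P(1)$ either under a Donsker condition on the nuisance model class together with $L_2(\mathbb{P})$-consistency $\tilde{\mathbb{P}}_n\to\bar{\mathbb{P}}$, or, with no complexity restriction, by estimating the nuisances on an independent split of the sample (cross-fitting). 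Collecting terms, $\widehat{\nu}_{DR}\to\nu(\mathbb{P})=\mathbb{P}(Y_{K+1}^{z_Y,z_D,\cor}=1)$ in probability, which is the assertion.
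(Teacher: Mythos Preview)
Your proposal is correct. The ``alternative'' you mention --- verifying directly via the tower property that $\nu(\bar{\mathbb{P}})+\mathbb{E}_{\mathbb{P}}[\nu^1(\bar{\mathbb{P}})]=\nu(\mathbb{P})$ in each of the two scenarios separately --- is precisely the route the paper takes: it fixes the case ``$\pi_j(z)$ and the $Y/L_Y$-conditionals correct,'' splits each $\nu_s^1$ into its four types of summands, shows by iterated conditioning that the two summands carrying $Y$- and $L_Y$-residuals have $\mathbb{P}$-expectation zero (the correct $Y/L_Y$ models make the innermost conditional residual vanish identically), and checks that the remaining two (the $D$- and $L_D$-residual terms) telescope across $j=1,\ldots,s$ to $\nu_s - \nu_s(\bar{\mathbb{P}})$; the other scenario is declared analogous.

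Your primary suggestion --- exhibiting the full remainder $R(\bar{\mathbb{P}},\mathbb{P})$ as a sum of products each containing either a $\pi$-error factor or both a $Y/L_Y$-error and a $D/L_D$-error factor --- is a genuinely stronger and more laborious route than the paper's: it would deliver the mixed-bias structure needed for rate double robustness, whereas the paper only claims and proves consistency and therefore never writes $R$ out. You are also more careful than the paper about the empirical-process term $(\widehat{\mathbb{E}}_n-\mathbb{E}_{\mathbb{P}})[\nu^1(\tilde{\mathbb{P}}_n)]$: the paper simply asserts ``As the postulated models are consistently estimated, we can work directly with $\tilde{\mathbb{P}}$'' and replaces $\tilde{\mathbb{P}}_n$ by its limit throughout, without the Donsker or cross-fitting discussion you provide.
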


\begin{proof}
    As the postulated models are consistently estimated, we can work directly with $\Tilde{\mathbb{P}}$. Then, the one-step estimator can be seen as the solution to the estimating equations $\widehat{\mathbb{E}}_n[\nu^1(\Tilde{\mathbb{P}})+\nu(\Tilde{\mathbb{P}})-\widehat{\nu}_{DR}]=0$. Therefore, to show consistency, as $\nu(\Tilde{\mathbb{P}})$ is a deterministic quantity, it suffices to show that $\mathbb{E}[\nu^1(\Tilde{\mathbb{P}})]=\nu-\nu(\Tilde{\mathbb{P}})$, or equivalently that $\mathbb{E}[\nu^1_s(\Tilde{\mathbb{P}})]=\nu_s-\nu_s(\Tilde{\mathbb{P}})$ for all $s=0,\ldots,K$. We will show that this holds when the models for $Y,L_Y, C , R$  are correctly specified, but not necessarily those for $D,L_D$. The reverse case follows analogously. We study each of the four types of addends we have in equation \eqref{eq:InfluenceFunction}. Firstly:
    \begin{align*}
        &\mathbb{E}[\Tilde{\Omega}_{j-1}(1-Y_{j-1})\frac{\Tilde{RR}(D_{j})}{\pi_{j}(z_Y)}(1-D_{j})I(\phi_{j}(z_Y))(\Tilde{T}^{(s+1)}_{j+1}(1-Y_{j})-h^{Y}_{j}(\Tilde{T}^{(s+1)}_{j+1}))]\\
        &=\mathbb{E}[\Tilde{\Omega}_{j-1}(1-Y_{j-1})\frac{\Tilde{RR}(D_{j})}{\pi_{j}(z_Y)}(1-D_{j})I(\phi_{j}(z_Y))\\&\quad\times\mathbb{E}[\Tilde{T}^{(s+1)}_{j+1}(1-Y_{j})-h^{Y}_{j}(\Tilde{T}^{(s+1)}_{j+1})|\overline{D}_j,\overline{Y}_{j-1},\overline{L}_{j-1},Z,\overline{C}_j,\overline{R}_{j-1}]]\\
        &=\mathbb{E}[\Tilde{\Omega}_{j-1}(1-Y_{j-1})\frac{\Tilde{RR}(D_{j})}{\pi_{j}(z_Y)}(1-D_{j})I(\phi_{j}(z_Y))\\&\quad\times\underbrace{\mathbb{E}[\Tilde{T}^{(s+1)}_{j+1}(1-Y_{j})-h^{Y}_{j}(\Tilde{T}^{(s+1)}_{j+1})|{D}_j={Y}_{j-1}=0,\overline{L}_{j-1},\phi_j(z_Y)]}_{=0\text{ a.s. w.r.t. $\overline{L}_{j-1}$ as model for $Y$correct}}]=0,
    \end{align*}
    where the $\Tilde{h}$ functions represent the $h$ functions with expectations taken under $\Tilde{\mathbb{P}}$, and \'idem for $T_j^{(s)}$. In the same way:
    \begin{align*}
        &\mathbb{E}[\Tilde{\Omega}_{j-1}(1-D_{j-1})(1-Y_{j-1})I(\phi_{j-1}(z_Y))(\Tilde{h}^D_{j}(h^{Y}_{j}(\Tilde{T}^{(s+1)}_{j+1}))-h^{L_Y}_{j-1}(\Tilde{h}^D_{j}(h^{Y}_{j}(\Tilde{T}^{(s+1)}_{j+1}))))]\\
        &=\mathbb{E}[\Tilde{\Omega}_{j-1}(1-D_{j-1})(1-Y_{j-1})I(\phi_{j-1}(z_Y))\mathbb{E}[\Tilde{h}^D_{j}(h^{Y}_{j}(\Tilde{T}^{(s+1)}_{j+1}))\\&\qquad-h^{L_Y}_{j-1}(\Tilde{h}^D_{j}(h^{Y}_{j}(\Tilde{T}^{(s+1)}_{j+1})))|\overline{D}_{j-1},\overline{Y}_{j-1},L_{D,j-1},\overline{L}_{j-2},Z,\overline{C}_{j-1},\overline{R}_{j-2}  ]]\\
        &=\mathbb{E}[\Tilde{\Omega}_{j-1}(1-D_{j-1})(1-Y_{j-1})I(\phi_{j-1}(z_Y))\mathbb{E}[\Tilde{h}^D_{j}(h^{Y}_{j}(\Tilde{T}^{(s+1)}_{j+1}))\\&\qquad-h^{L_Y}_{j-1}(\Tilde{h}^D_{j}(h^{Y}_{j}(\Tilde{T}^{(s+1)}_{j+1})))|{D}_{j-1}={Y}_{j-1}=0,L_{D,j-1},\overline{L}_{j-2},\phi_{j-1}(z_Y)  ]]\\
        &=0.
    \end{align*}
    The remaining two terms will not have expectation zero. Indeed:
    \begin{align*}
        &\mathbb{E}[\Lambda_{j-1}(1-D_{j-1})(1-Y_{j-1})\frac{RR(L_{Y,j-1})}{\pi_{j}(z_D)}I(\phi_{j}(z_D))\bigg(h^{Y}_{j}(\Tilde{T}^{(s+1)}_{j+1})(1-D_{j})\\&\qquad-\Tilde{h}^D_{j}(h^{Y}_{j}(\Tilde{T}^{(s+1)}_{j+1}))\bigg)]\\
        &=\mathbb{E}[\Lambda_{j-1}(1-D_{j-1})(1-Y_{j-1})RR(L_{Y,j-1})I(\phi_{j-1}(z_D))\\&\quad\times\mathbb{E}[\frac{I(C_j=0,R_{j-1}=1)}{\pi_j(z_D)}(h^{Y}_{j}(\Tilde{T}^{(s+1)}_{j+1})(1-D_{j})\\&\quad-\Tilde{h}^D_{j}(h^{Y}_{j}(\Tilde{T}^{(s+1)}_{j+1})))|{D}_{j-1}={Y}_{j-1}=0,\overline{L}_{j-1},\phi_{j-1}(z_Y),C_j,R_{j-1}   ]]\\
        &=\mathbb{E}[\Lambda_{j-1}(1-D_{j-1})(1-Y_{j-1})RR(L_{Y,j-1})I(\phi_{j-1}(z_D))\\&\quad\times\mathbb{E}[(h^{Y}_{j}(\Tilde{T}^{(s+1)}_{j+1})(1-D_{j})-\Tilde{h}^D_{j}(h^{Y}_{j}(\Tilde{T}^{(s+1)}_{j+1})))|{D}_{j-1}={Y}_{j-1}=0,\overline{L}_{j-1},\phi_{j}(z_Y)   ]]\\
        &=h_0^{L_D}(\ldots h_{j-1}^{L_D}(h_{j-1}^{L_Y}(h_{j}^{D}(h_{j}^{Y}(\Tilde{T}^{(s+1)}_{j+1}))))\ldots)\\&\quad-h_0^{L_D}(\ldots h_{j-1}^{L_D}(h_{j-1}^{L_Y}(\Tilde{h}_{j}^{D}(h_{j}^{Y}(\Tilde{T}^{(s+1)}_{j+1}))))\ldots),
    \end{align*}
    and proceeding the same way for the last term:
    \begin{align*}
        &\mathbb{E}[\Lambda_{j-1}(1-D_{j-1})(1-Y_{j-1})I(\phi_{j-1}(z_D))(h^{L_Y}_{j-1}(\Tilde{h}^D_{j}(h^{Y}_{j}(\Tilde{T}^{(s+1)}_{j+1})))-\Tilde{T}^{(s+1)}_{j})]\\
        &=h_0^{L_D}(\ldots h_{j-1}^{L_D}(h_{j-1}^{L_Y}(\Tilde{h}_{j}^{D}(h_{j}^{Y}(\Tilde{T}^{(s+1)}_{j+1}))))\ldots)-h_0^{L_D}(\ldots h_{j-1}^{Y}(\Tilde{T}^{(s+1)}_{j})\ldots).
    \end{align*}
    When we  add these last two equations we see that the crossed terms cancel out, and after we sum over $j=1,\ldots,s$ and also add the term in $s+1$, we get a telescopic sum, which equals the last term of the first addend minus the first term of the second addend. As a result: 
    $$\mathbb{E}[\nu^1_s(\Tilde{\mathbb{P}})]=h_0^{L_D}(\ldots h_{s}^{L_D}(h_{s}^{L_Y}(h_{s+1}^{D}(h_{s+1}^{*Y}(1))))\ldots)-\Tilde{T}^{(s+1)}_{1}=\nu_s-\nu_s(\Tilde{\mathbb{P}}).$$ This concludes the proof.
\end{proof}

\section{Implications to related work.}\label{sec:AppRelatedWork}
\subsection{A connection to general interventionist mediation analysis}\label{sec:ConnectionInterventionistMediation}
General interventionist mediation analysis  is closely related to the field of separable effects and sustained treatment strategies. For example, analogues to the dismissible component conditions were described in the context of mediation analysis \cite{didelez2019defining}, inspired by \cite{robins2010alternative}. Both of these works considered point treatments.  \cite{robins2020interventionist} mentions, as a potential development of their work, the extension to multiple treatments $Z_1,\ldots,Z_k$, each of which could be decomposed into two components following a generalized decomposition assumption as the one we introduced in Section \ref{sec:TreatmentDecomposition} (see \cite[Sec. 3.5 (a)]{robins2020interventionist} for further details). The identification of the distribution of the process under interventions on the components of said treatment variables was left as an open problem. 

We conjecture  that the formalism we present in this work can be applied to this open problem. Consider the multiple treatment variables $Z_1,\ldots,Z_k$ represent treatment at time $k$, and that (following the notation of \cite{robins2020interventionist}) each $Z_i$ is decomposed as $(O_i,N_i)$ such that in the observed data $O_i\equiv Z_i\equiv N_i$. The goal is identification under an intervention which sets $(O_1,N_1,\ldots,O_k,N_k)=(o_1,n_1,\ldots,o_k,n_k)$. To accommodate this  treatment strategy into our formalism, we would not consider a time-varying treatment, but rather define the strategy-centered adherence indicator in the four arm trial at each time point based on the strategy $(o_1,n_1,\ldots,o_k,n_k)$. A perfect adherence intervention is then equivalent to the given treatment strategy, and would allow for identification using data coming only from the two-arm trial. 

This speaks to one of the key points of our framework and the strategy-centered encoding: understanding a certain mediation strategy as perfectly  following a protocol means we can encode perfect adherence without decomposing treatment at every time point, but only at baseline, while still capturing all the perfect adherence information by a single encoder at each time point.

\subsection{Extending the doubly robust estimator to estimands in related work}\label{sec:AppRelatedWorkDR}
Other authors \cite{stensrud2021generalized,stensrud2022separable,wanis2024separable} have considered time-varying settings, sometimes with competing events, which greatly resemble or are special cases of the one considered here. While they derived identification results and g-formulas for their counterfactual quantities of interest and propose some estimators for them, they have not derived one-step (doubly robust) estimators for these quantities. We reason now that the one-step estimator presented in this section can be easily adapted to their scenarios, yielding a doubly robust estimator of the quantities they study. This was not done in any of the works we now mention:

\begin{itemize}
    \item In \cite{stensrud2021generalized} they consider a time-varying setting with competing events and time-varying covariates. Exactly our setting, but without accounting for adherence. In fact, said work constitutes the intention-to-treat counterpart to the one we present here. Their g-formula (Theorem 1 in Appendix B) is exactly the one we derive in \eqref{eq:IdentificationFormula} if we dropped the adherence indicators $R$. Thus, if we dropped $\overline{R}_{K+1}$, Theorem \ref{thm:InfluenceFucntion} gives us the influence function for the estimand in \cite[Thm. 1, App. B]{stensrud2021generalized}, and most importantly Theorem \ref{thm:DR} will still hold, meaning the one-step estimator based in this influence function would be doubly robust, in the sense of this theorem.
    \item \cite{stensrud2022separable} considered our setting without time-varying covariates. Our identification formula reduces to their Equation (10) when $\overline{L}_K=L_0$ and $L_0\independent Z$. Therefore, under this simplification Theorem \ref{thm:InfluenceFucntion} gives us the influence function for the estimand in \cite[Eq. (10)]{stensrud2021generalized}, and Theorem \ref{thm:DR} still applies. Note that in this scenario the influence function greatly simplifies, as the $T$ functions do not involve conditional expectations on $L_k$, and all the risk ratios for the time-varying covariates are trivially one. 
    \item The framework considered in \cite{wanis2024separable} can as well be reduced to ours. We could think of their time-varying adherence indicator as our competing event $D$. The main difference is that their adherence indicator does not need to be zero in order for the process to continue. This is reflected in the fact that in \cite[Eq. (3)]{wanis2024separable} they sum over all possible adherence histories. Furthermore, they do not consider censoring. Hence, if we dropped our censoring and adherence indicators $C,R$ and summed over all possible histories of $\overline{D}_{K+1}$ in \eqref{eq:IdentificationFormula} we would retrieve the identification formula in \cite{wanis2024separable}. The influence function for their estimand would then be obtained from the one in Theorem \ref{thm:InfluenceFucntion} by dropping $C,R$, deleting all the $(1-D)$ factors from $\nu^1$ and the $h^D$ functions, and having all risk ratios and $h$ functions be evaluated at the $D$-history of each subject. Note how as we drop $C,R$, the propensity scores $\pi_j(z)$ become trivially one.
\end{itemize}
This shows the generality and versatility of the work we present here, how it can be seen to generalize a wide range of existing work, and provides a comprehensive framework for the analysis of time-varying setting with competing events.

\section{The simulation study presented in Section \ref{sec:SimulationStudy}.}\label{sec:AppSimus}

The data generating mechanism in the four-arm trial under the strategy-centered encoding which we study in Section  \ref{sec:SimulationStudy} is:
\begin{align*}\allowdisplaybreaks
     Z_D,&\  Z_Y\sim Ber(1/2)\text{ independent, }L_0\sim Ber(1/2),\quad C_1\sim Ber(1/50),\\
     R_1&\mid C_1=0, L_0 \sim Ber(4/5),\\
    D_1&\mid C_1=0,R_1,L_0,Z_D,Z_Y\sim \begin{cases}
        Ber((1+L_0)/20),\quad Z_D=R_1\\Ber((1+L_0)/30),\quad Z_D\neq R_1
    \end{cases},\\
    Y_1&\mid C_1=D_1=0,R_1,L_0,Z_Y,Z_D\sim \begin{cases}
        Ber((10+2L_0)/20),\quad Z_Y=R_1\\Ber((10-2L_0)/20),\quad Z_Y\neq R_1
    \end{cases},\\
    L_1&\mid C_1=D_1=Y_1=0,R_1,L_0,Z_Y,Z_D\sim \begin{cases}
        Ber((2+L_0)/4),\quad Z_D=R_1\\Ber((1+L_0)/4),\quad Z_D\neq R_1
    \end{cases},\\
    C_2&\mid C_1=D_1=Y_1=0,R_1,\overline{L}_1,Z_Y,Z_D\sim Ber((1+L_1)/20),\\
    R_2&\mid C_2=D_1=Y_1=0,R_1,\overline{L}_1,Z_Y,Z_D\sim Ber((3+L_1)/5),\\
    D_2&\mid C_2=D_1=Y_1=0,\overline{R}_2,\overline{L}_1,Z_Y,Z_D\sim \begin{cases}
        Ber((1+L_1)/20),\quad Z_D=R_2\\Ber((1+L_1)/30),\quad Z_D\neq R_2
    \end{cases},\\
    Y_2&\mid C_2=D_2=Y_1=0,\overline{R}_2,\overline{L}_1,Z_Y,Z_D\sim \begin{cases}
        Ber((10+2L_1)/20),\quad Z_Y=R_2\\Ber((10-2L_1)/20),\quad Z_Y\neq R_2
    \end{cases},
\end{align*}
where $Ber(p)$ denotes a Bernoulli distribution with probability of success $p\in(0,1)$. Note that $Z=R_k$ indicates that the  individual takes treatment 1 at time $k$, meaning we define effects in terms of treatment actually taken. The  causal DAG of this process can be seen in Figure \ref{fig:DagSimulationStudy}.
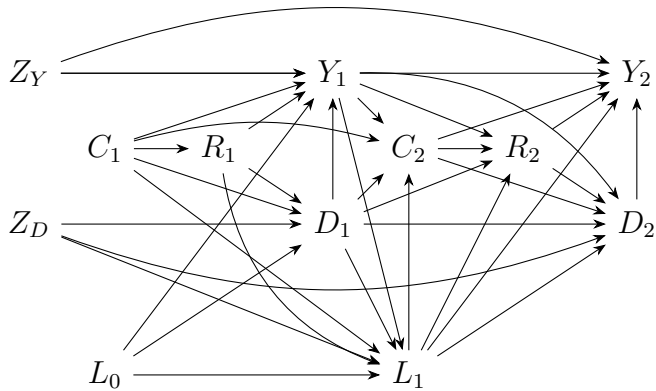
\begin{figure}
    \centering
\begin{tikzpicture}
\begin{scope}[every node/.style={thick,draw=none}]
    \node (ZD) at (-2,-1) {$Z_D$};
    \node (ZY) at (-2,1) {$Z_Y$};
    \node (L0) at (-1,-3) {$L_0$};
    \node (C1) at (-1,0) {$C_1$};
    \node (R1) at (0.5,0) {$R_1$};
    \node (Y1) at (2,1) {$Y_1$};
    \node (D1) at (2,-1) {$D_1$};
    \node (L1) at (3,-3) {$L_1$};
    \node (C2) at (3,0) {$C_2$};
    \node (R2) at (4.5,0) {$R_2$};
    \node (Y2) at (6,1) {$Y_2$};
    \node (D2) at (6,-1) {$D_2$};
\end{scope}

\begin{scope}[>={Stealth[black]},
              every node/.style={fill=white,circle},
              every edge/.style={draw=black}]
    \path [->] (ZY) edge  (Y1);
    \path [->] (ZD) edge  (D1);
    \path [->] (ZY) edge  (Y1);
    \path [->] (R1) edge  (D1);
    \path [->] (R1) edge  (Y1);
    \path [->] (L0) edge  (D1);
    \path [->] (L0) edge  (Y1);
    \path [->] (L1) edge  (C2);
    \path [->] (L1) edge  (R2);
    \path [->] (L1) edge  (D2);
    \path [->] (L1) edge  (Y2);
    \path [->] (L0) edge  (L1);
    \path [->] (R1) edge[bend right]  (L1);
    \path [->] (ZD) edge  (L1);
    \path [->] (ZY) edge[bend left=20]  (Y2);
    \path [->] (ZD) edge[bend right=20]  (D2);
    \path [->] (R2) edge  (D2);
    \path [->] (R2) edge  (Y2);
    \path [->] (C1) edge  (R1);
    \path [->] (C1) edge  (Y1);
    \path [->] (C1) edge  (D1);
    \path [->] (C1) edge  (L1);
    \path [->] (D1) edge  (Y1);
    \path [->] (D1) edge  (L1);
    \path [->] (D1) edge  (D2);
    \path [->] (D1) edge  (C2);
    \path [->] (D1) edge  (R2);
    \path [->] (C1) edge[bend left=15]  (C2);
    \path [->] (Y1) edge  (C2);
    \path [->] (Y1) edge  (Y2);
    \path [->] (Y1) edge[bend left=27.5]  (D2);
    \path [->] (Y1) edge  (R2);
    \path [->] (Y1) edge  (L1);
    \path [->] (C2) edge  (R2);
    \path [->] (C2) edge  (Y2);
    \path [->] (C2) edge  (D2);
    \path [->] (D2) edge  (Y2);
\end{scope}
\end{tikzpicture}

\caption{\small Causal DAG \cite{robins2010alternative} of the data generating process considered in the simulation study conducted in Section \ref{sec:SimulationStudy}.}
\label{fig:DagSimulationStudy}
\end{figure}

\section{Analysis of the SPRINT}\label{sec:AppSprint}
\subsection{Adherence in the SPRINT cohort}\label{sec:AppSprintAdherence}

As explained in Section \ref{sec:SprintExample} we consider  $R_k=1$ when the individual said they 100\% followed their assigned treatment at the clinical visit at time $k+1$. To have a deeper understanding of the adherence behaviour of the SPRINT cohort, in Figure \ref{fig:SprintAdherence} we plotted the fraction of individuals at risk at time $k$ (meaning individuals that reached time-step $k$ and $C_k=0$) which have a perfect adherence pattern ($\overline{R}_k=1$). We see how adherence is initially greater that 90\% in both arms, and that  decreases with follow-up time. Notably, adherence is lower in the intensive blood pressure therapy arm, which might be due to the harsher side effects as a consequence of the most aggressive treatment regime.

It is worth pointing out that the fraction of perfect adherers at the first time point is not 1. This means there are some patients who will be artificially censored at time $k=1$, as $R_1=0$. This is not a problem under  conditional exchangeability (Assumption \ref{ass:Exchangeability}). Furthermore, if we perform a test of equality of proportions \cite{newcombe1998interval}, we see that at a 5\% significance level we cannot reject the null hypothesis that $\mathbb{P}(R_1=1\mid Z=1)=\mathbb{P}(R_1=1\mid Z=0)$, meaning both trial populations have identical baseline adherence. However, we clearly see in Figure \ref{fig:SprintAdherence} that this is not the case as the trial goes on.

\begin{figure}
    \centering
    \includegraphics[width=0.75\linewidth]{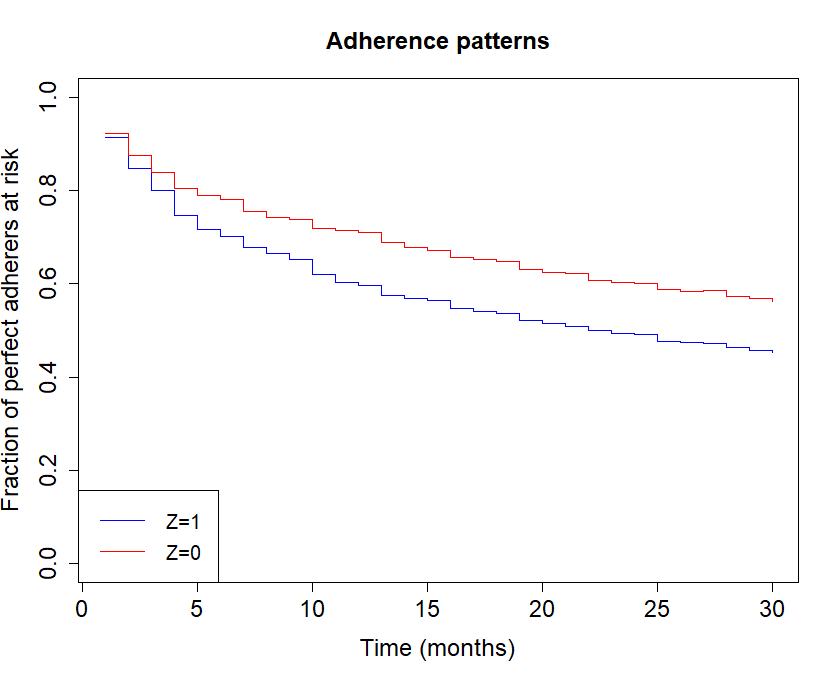}
    \caption{Adherence pattern in each treatment arm of the SPRINT, represented as the fraction of individuals at risk at each time point which have a perfect adherence history, over the follow-up period of 30 months.}
    \label{fig:SprintAdherence}
\end{figure}

\subsection{The fitted models}\label{sec:AppSprintModels}
As explained in Section \ref{sec:SprintExample}, under the assumption that the identification assumptions of Theorem \ref{thm:IdentifcationFormula} hold when $L_{Y,k}=\emptyset$, this weighted estimator introduced in Section \ref{sec:WeightedEstimators} requires modelling of the  conditional distributions of $\Bar{Y}_{K+1}, \Bar{C}_{K+1},\Bar{R}_{K+1}$. Note how this estimator is less computationally demanding compared to the simple and one-step estimator, as these require the computations of $(K+1)$-dimensional integrals with respect to the model of $\overline{L}_K$

To compute the weighted estimator, we fitted pooled logistic regression models for each treatment population. We denote the log-mean arterial pressure at baseline by $L_0^{MAP}$, and all other (categorical) baseline covariates by $L_0^*$. For $Z=1$ and $k$ in $\{0,\ldots,K\}$ we choose
\begin{align*}
    logit(\mathbb{P}(Y_{k+1}=1 \mid D_{k+1}=Y_{k}= 0, \overline{L}_{k}, \phi_{k+1}(1)))&=\beta_{1,L_0^*}+\beta_2 k + \beta_3 k^2 + \beta_4 k^3 \\&+\beta_5 L_k,\\
    logit(\mathbb{P}(C_{k+1}=1\mid D_{k}=Y_{k}=0,\overline{L}_{k},\phi_{k}(1)))&= \delta_{1,L_0^*}+\delta_2 k + \delta_3 k^2 + \delta_4 k^3 \\&+ \delta_5 L_k,\\ 
    logit(\mathbb{P}(R_{k+1}=1\mid D_{k}=Y_{k}=C_{k+1}=0,\overline{L}_{k},\phi_{k}(1)))&= \gamma_{1,L_0^*}+\gamma_2 k + \gamma_3 k^2 \\&+ \gamma_4 k^3 + \gamma_5 L_k,
\end{align*}

where in the case $k=0$ the last addend represents a linear term in $L_0^{MAP}$. For the standard treatment group $Z=0$ we fit models analogous to these. The estimated  probabilities can be seen in Section \ref{sec:SprintResults}.

\end{document}